\newenvironment{introquestion}[1] 
{
	\begin{center} 
		\begin{minipage}{0.9\textwidth} 
			\begin{framed} 
				\textbf{#1} \hspace{1ex} 
				\itshape  
			}
			{
			\end{framed}
		\end{minipage}
	\end{center}
}
\newenvironment{introInformal}[1]
{
	\noindent
	\begin{framed}
	
		\begin{minipage}{\dimexpr\linewidth-2\fboxsep-2\fboxrule\relax}  
				\centering\textbf{#1} \\[1ex]
			\itshape
			\raggedright
			\begin{adjustwidth}{-1.3em}{0em}
			}
			{
			\end{adjustwidth}
		\end{minipage}
	\end{framed}
}
\newcommand{\ceil}[1]{{\left\lceil#1  \right\rceil}}
\newcommand{\comment}[1]{}
\newcommand{\bases}{\textnormal{bases}}
\newcommand{\cA}{{\mathcal{A}}}
\newcommand{\cB}{{\mathcal{B}}}
\newcommand{\cQ}{{\mathcal{Q}}}
\newcommand{\cG}{{\mathcal{G}}}
\newcommand{\cI}{{\mathcal{I}}}
\newcommand{\cS}{{\mathcal{S}}}
\newcommand{\cD}{{\mathcal{D}}}
\newcommand{\cF}{{\mathcal{F}}}
\newcommand{\cH}{{\mathcal{H}}}
\newcommand{\cX}{{\mathcal{X}}}
\newcommand{\sample}{\textnormal{\textsc{Sample}}}
\newcommand{\cL}{\mathcal{L}}
\newcommand{\cT}{\mathcal{T}}
\newcommand{\ext}{\textnormal{\textsf{Ext}}}
\newcommand{\oracle}{\textnormal{\textsf{oracle}}}
\newcommand{\Xspace}{{~}}
\newcommand{\cP}{{\mathcal{P}}}
\newcommand{\vb}{{\textnormal{\textsf{b}}}}
\newcommand{\ve}{{\textnormal{\textsf{e}}}}
\newcommand{\gr}{{\textnormal{\textsf{grid}}}}
\newcommand{\eps}{{\varepsilon}}
\newcommand{\N}{{\mathbb{N}}}
\newcommand{\floor}[1]{\left\lfloor #1 \right\rfloor}
\DeclareMathOperator*{\argmax}{arg\,max}
\DeclareMathOperator*{\argmin}{arg\,min}
\newcommand{\abs}[1]{{\left| #1\right|}}
\newcommand{\poly}{\textnormal{poly}}
\newcommand{\entropy}[1]{\mathcal{H}\left(#1 \right)}
\newcommand{\problem}[2]{
	\noindent
	\begin{tabular}{ p{0.8in} p{5.4in} }
		\\
		\hline
		\multicolumn{2}{c}{#1} \\
		\hline
		#2\\
		\hline
		\\
	\end{tabular}
}
\newtheorem{lemma}{Lemma}[section]
\newtheorem{definition}[lemma]{Definition}
\newtheorem{theorem}[lemma]{Theorem}
\newtheorem{claim}[lemma]{Claim}
\crefname{claim}{claim}{claims}
\newcommand{\one}{\mathbbm{1}}
	\crefname{claim}{claim}{claims}
		\crefname{cor}{corollary}{corollaries}
\begin{document}

		\title{
			 You (Almost) Can't Beat Brute Force for $3$-Matroid Intersection
		}
		\date{}
		\author{Ilan Doron-Arad\thanks{Computer Science Department, 
			Technion, Haifa, Israel. \texttt{ilan.d.a.s.d@gmail.com}}
		\and
		Ariel Kulik\thanks{Department of Industrial Engineering and Management at Ben-Gurion University. \texttt{ariel.kulik@gmail.com}} 
		\and
		Hadas Shachnai\thanks{Computer Science Department, 
			Technion, Haifa, Israel. \texttt{hadas@cs.technion.ac.il}}
	}
		\maketitle
		\thispagestyle{empty}
		
		\begin{abstract}
			
The {\em $\ell$-matroid intersection} ($\ell$-MI) problem asks if $\ell$ given matroids share a common basis.  Already for $\ell = 3$, notable canonical NP-complete special cases are $3$-Dimensional Matching and Hamiltonian Path on directed graphs. 
 However, while these problems admit exponential-time algorithms that improve the simple brute force significantly (e.g., Eiben-Koana-Wahlstr{\"o}m (SODA'24)), the fastest known algorithm for $3$-MI on general matroids is exactly brute force with runtime $2^{n}/\poly(n)$, where $n$ is the number of elements.  
Our main result shows that, in fact, brute force cannot be significantly improved, by ruling out an algorithm for $\ell$-MI with runtime $o\left(2^{n-5 \cdot n^{\frac{1}{\ell-1}} \cdot \log (n)}\right)$, for any fixed $\ell\geq 3$.
For $3$-MI, this gives a lower bound of $o \left(2^{n-5 \cdot \sqrt{n} \cdot \log (n)} \right)$. 
 
 Our negative result raises the following natural questions: (i) Is there an algorithm for $3$-MI with runtime strictly better than brute force?
 (ii) Can we separate the parameterized complexity of  $3$-MI from the important special case on linear matroids (parameterized by the rank of the matroids $k$)? In particular, can a lower bound match the existing $c^{k^2} \cdot \poly(n)$ algorithm of Huang-Ward (SIDMA'23) for general $\ell$-MI parameterized by the rank?
  
 We make progress towards obtaining affirmative answers to the above questions. In particular, we present (i) an algorithm which solves $\ell$-MI faster than brute force in time $2^{n-\Omega\left(\log^2 (n)\right)} $ for any $\ell \geq 3$, and (ii) a parameterized running time lower bound of $2^{(\ell-2) \cdot k \cdot \log k} \cdot \poly(n)$ for $\ell$-MI, for any $\ell \geq 3$.
  We obtain these results by generalizing the Monotone Local Search technique of Fomin-Gaspers-Lokshtanov-Saurabh (J. ACM'19).
Broadly speaking, given a {\em subset problem}, our generalization transforms any algorithm parameterized by solution size, with runtime of the form $f(k) \cdot \poly(n)$,
 into an exponential-time algorithm with runtime depending on~$f$.
  This implies that {\em any} $f(k) \cdot \poly(n)$ time parameterized algorithm for a subset problem yields a $2^{n-\omega(\log n)}$ time algorithm beating brute force, which may be of independent interest.

		\end{abstract}

\newpage

\thispagestyle{empty} 

		\tableofcontents

\newpage 

\pagenumbering{arabic} 

\section{Introduction}

Matroids are a central abstraction of fundamental combinatorial structures such as spanning trees and linear independence in vector spaces. Despite their generic attributes, they exhibit desired tractability for fundamental algorithmic problems, contributing to the extensive research on matroids since the beginning of the 20th century \cite{whitney1935t,tutte1959matroids,white1986theory,oxley2006matroid,welsh2010matroid}. A {\em matroid} can be defined as a set system $(E, \cI)$, where $E$ is a finite set and $\cI \subseteq 2^E$ are the {\em independent sets}, which satisfy the following. $(i)$ $\emptyset \in \cI$, $(ii)$ ({\em hereditary property}): for all $A \in \cI$ and $B \subseteq A$ it holds that $B \in \cI$, and $(iii)$ ({\em exchange property}): for all $A,B \in \cI$ where $|A| > |B|$, there is $e \in A \setminus B$ such that $B \cup \{e\} \in \cI$.\footnote{Several other characterizations exist (see, e.g., \cite{schrijver2003combinatorial}), indicating how natural is the notion of matroids.}

Matroids are especially interesting in the study of algorithms. Perhaps the most fundamental problem on matroids is finding an optimal (maximum or minimum) weight {\em basis}, where a basis is an inclusion-wise maximal independent set. 
The simple {\em greedy} algorithm, which iteratively chooses the best local improvement, 
 finds an optimal basis of a matroid. In fact, matroids are precisely the structures on which the greedy algorithm is optimal \cite{rado1957note,gale1968optimal,edmonds1971matroids}.  The matroid {\em intersection} problem, of finding a common basis of two matroids, is more difficult yet polynomially solvable \cite{lawler1975matroid,edmonds1979matroid,edmonds2003submodular,edmonds2010matroid}. On the other hand, $3$-{\em matroid intersection} ($3$-MI), the problem of deciding whether three matroids share a common basis, is known to be computationally more challenging. 
 The $\ell$-matroid intersection problem, for $\ell  \geq 3$, is the main problem studied in this paper. For $k \in \N$, let $[k] = \{1, 2, \ldots ,k\}$.
 
 \problem{	\label{def:3MI}{ 
 $\ell$-matroid intersection ($\ell$-MI)}}{
 	{\bf Instance} &  
 	$(E,\cI_1,\ldots,\cI_{\ell})$ such that $(E,\cI_i)$ is a matroid for all $i \in [\ell]$.\\
 	{\bf Objective} & Decide if there is a common basis of the $\ell$ matroids. 
 }
 We assume that the given matroids are accessed via membership oracles (see \Cref{sec:prel}); therefore, we refer to the problem as {\em oracle} $\ell$-MI. 
 
We note that $\ell$-MI captures as special cases canonical NP-complete problems already when $\ell=3$. This includes Hamiltonian Path (on directed graphs) and $3$-{\em dimensional matching} ($3$-DM) (see \Cref{sec:matroidOverview}). Hence,
$3$-MI does not admit a polynomial time algorithm unless P=NP \cite{karp2010reducibility}.
To quote E.L. Lawler, solving $3$-MI ``would reduce the whole panoply of combinatorial
optimization at our feet'' \cite{camerini1975bounds}.
 
 Fortunately, there is a silver lining: on {\em linear} matroids 
   (including $3$-DM and Hamiltonian Path),
 $\ell$-MI can be solved by a faster-than-brute-force algorithm. Specifically, Eiben et al.~\cite{eiben2024determinantal} recently showed that $\ell$-MI, on linear matroids represented by a common field, can be solved in time $2^{O(\ell\cdot k)}$, where $k$ is the cardinality of a basis.\footnote{All bases of a matroid have the same cardinality, called the {\em rank} of the matroid \cite{schrijver2003combinatorial}.} 
 This, in conjunction with the results of Fomin et al. \cite{FGLS19}, leads to an algorithm whose runtime is $O\left(c^n\right)$ for $c < 2$, where $n$ is the size of the ground set, for any fixed $\ell\in \mathbb{N}$.
  However, for general $3$-MI instances, aside from some heuristics (e.g., \cite{camerini1975bounds,camerini1978heuristically}), no known algorithm has a running time better than the simple brute-force algorithm, which enumerates over all $k$-subsets of the ground set, where $k$ is the rank. 
  Alas, this algorithm runs in time 
 $\Omega \left(\frac{2^n}{n}\right)$, where $n$ is the size of the ground set. Whether this algorithm can be qualitatively improved is the first question addressed in this paper.
 
 \begin{introquestion}{Question 1:}
 		Can $3$-matroid intersection be solved in time $O \left(c^{n} \right)$ for $c < 2$?
 \end{introquestion}

\subsection{Our Results}
\renewcommand{\arraystretch}{1.5} 

We start by answering Question 1  and then present an algorithm for $\ell$-MI which beats the brute force algorithm by a super-polynomial factor.
Finally, we obtain lower bounds for $3$-MI {\em parameterized} by solution size.
Notably, our generalization of the {\em monotone local search} technique of Fomin et al. \cite{FGLS19} plays a key role in obtaining our results. In particular, the 
resulting exponential-time algorithm crucially justifies our better-than-brute-force lower bound. We summarize our main results in \Cref{tab:results}.

\begin{table}[h!]
	\centering
	\begin{tabular}{!{\boldmath\vrule width 1.5pt}>{\centering\arraybackslash}p{3cm}!{\boldmath\vrule width 1.5pt}>{\centering\arraybackslash}p{3cm}!{\boldmath\vrule width 1.5pt}>{\centering\arraybackslash}p{3.5cm}|}
		\noalign{\hrule height 1.5pt}
		\multirow{2}{*}{\textbf{Lower bounds}} 
		& Exp-time & $\mathbf{2^{\Big(n-5 \cdot \sqrt{n} \cdot \log (n)\Big)}}$    \\ \cline{2-3}
		& Parameterized & $\mathbf{c^{k \cdot \log k} \cdot \poly(n)}$    \\ \noalign{\hrule height 1.5pt}
		\textbf{Algorithms} 
		& Exp-time & $\mathbf{2^{n-\Omega(log^2 n)} \cdot \poly(n)}$ \\ \hline
	\end{tabular}
	\caption{\label{tab:results} A summary of our main results. In the table, $n$ is the size of the ground set, $k$ is the cardinality of a solution, and $c$ is some constant.}
\end{table}

 Our first result answers Question\Xspace1 negatively, even if randomization is allowed. 

\begin{restatable}{theorem}{thmMI}
	\label{thm:3MIMain}
	For any $\ell \geq 3$ and $n\in \mathbb{N}$ such that $n\geq 2^{\ell-1}$ and $n^{\frac{1}{\ell-1}}\in \mathbb{N}$,  there is no randomized algorithm which decides oracle $\ell$-matroid intersection in fewer than $2^{\Big(n-5 \cdot n^{\frac{1}{\ell-1}} \cdot \log (n)\Big)}$ queries to the membership oracles, 
		on instances with $n$ elements. 
\end{restatable}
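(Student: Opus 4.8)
The plan is to prove a query lower bound via an adversary / indistinguishability argument, building on the classical hardness of membership-oracle matroid problems. The key idea is to construct a large family of $\ell$-MI instances that all look identical to any algorithm that makes few oracle queries, yet are split into "yes" and "no" cases. First I would fix the ground set $E$ with $|E|=n$ and set $r := n^{1/(\ell-1)}$, $k := $ the intended common basis size; the natural choice is to partition $E$ into $\ell-1$ groups of size $r^{\ell-1}/r^{\ell-2}=\dots$ — more concretely, think of $E$ as the set of vectors in $[r]^{\ell-1}$, so $|E| = r^{\ell-1} = n$, and let the $i$-th matroid be a partition matroid whose blocks are the "slabs" obtained by fixing the $i$-th coordinate (or, for $i=\ell-1$, a slightly modified structure). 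A common basis then corresponds to a choice of one element per combined cell, i.e. essentially a Latin-hypercube-like structure of size $\approx r^{\ell-2}$ or so; this is exactly the combinatorial gadget that makes $3$-DM / Hamiltonicity embed into $3$-MI, generalized to $\ell$ groups.

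Next I would define the adversary's response strategy. The adversary maintains a partial description of the $\ell$ matroids consistent with every query answered so far; because partition-type and transversal-type matroids have a lot of "hidden" freedom, a query $(S,i)$ asking whether $S\in\cI_i$ can almost always be answered "no" (or in a fixed default way) without committing to whether a particular target basis exists. The crucial counting step: the number of candidate "planted" common bases is huge — on the order of $\exp(\Theta(n - r\log n))$ or more precisely at least $2^{n-5r\log n}$ — while each oracle query can "kill" (be inconsistent with) only a controlled number of these candidates, say at most one or a polynomial-in-$n$ number. Hence an algorithm making fewer than $2^{n-5r\log n}$ queries cannot have ruled out all planted bases, so the adversary can still extend its partial matroids either to a yes-instance (one planted basis survives and is made a genuine common basis) or to a no-instance (all surviving candidates are destroyed), and the algorithm must err on one of them. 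For randomized algorithms I would phrase this as a distribution over instances (Yao's principle): put the uniform distribution over "yes" instances with a single random planted basis against the analogous "no" distribution, and show the query transcript has statistical distance $o(1)$ unless $\ge 2^{n-5r\log n}$ queries are made.

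The main obstacle — and the step that needs the most care — is making the matroid axioms hold simultaneously for all $\ell$ matroids while retaining enough freedom for the adversary, i.e. showing that after few queries the partial information can be completed to $\ell$ bona fide matroids realizing (or not realizing) a chosen planted basis. The cleanest route is probably to use a concrete, well-understood matroid family where "completion" is transparent: transversal matroids (common basis $\leftrightarrow$ system of distinct representatives) or, even better, to reduce directly from the membership-oracle hardness of detecting a hidden perfect-matching-like structure, so that the $\ell-1$ exponents in $n^{1/(\ell-1)}$ come from an $(\ell-1)$-dimensional grid of "coordinates." I would isolate this as a lemma: "for any set $\cQ$ of at most $2^{n-5r\log n}$ query-answer pairs and any target basis $B$ in the planted family not touched by $\cQ$, there exist matroids $\cI_1,\dots,\cI_\ell$ consistent with $\cQ$ for which $B$ is (resp. is not) a common basis." Granting that lemma, the indistinguishability and the final bound $2^{n-5r\log n\cdot\log n}$... — I mean $2^{\,n-5\,n^{1/(\ell-1)}\log n}$ — follow by the counting argument above; the slack in the exponent ($5r\log n$ rather than the bare $r\log n$ one might hope for) is exactly the room needed to absorb the polynomial per-query loss and the union bound over the at-most-$2^{n-5r\log n}$ queries.
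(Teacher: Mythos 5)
Your overall skeleton is close in spirit to the paper's construction: the ground set $[r]^{\ell-1}$ with $r=n^{1/(\ell-1)}$, one partition matroid per coordinate, a hidden ("planted") family of candidate common bases, and an indistinguishability/counting argument against randomized algorithms. However, the step you yourself flag as the main obstacle is exactly the heart of the proof, and you leave it unproven while suggesting routes that would not work. The paper's solution is a \emph{paving-matroid} construction: call a set $L$-perfect if it meets prescribed counts $L_{i,j}$ in every coordinate-slab; any two distinct $L$-perfect sets of size $k$ intersect in at most $k-2$ elements (this needs $d=\ell-1\geq 2$ coordinates, which is precisely why the bound starts at $\ell=3$), and therefore, by a known lemma on paving matroids, for an \emph{arbitrary} family $\cG$ of $L$-perfect sets the system whose bases are "all non-$L$-perfect $k$-sets together with the $L$-perfect sets lying in $\cG$" is a bona fide matroid. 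This is what gives the adversary (equivalently, the reduction from an oracle "is the hidden family empty?" problem) unlimited freedom to plant or not plant a common basis while answering membership queries consistently, with each matroid query revealing information about at most one hidden candidate. Your proposed substitutes --- transversal matroids or a "hidden perfect-matching-like structure" --- do not have this freedom: membership in a transversal matroid is determined by an explicit bipartite graph, so one cannot hide an arbitrary subfamily of candidate bases, and no completion lemma of the kind you state is available for them. Without the paving-matroid fact, your key lemma ("any few-query transcript can be completed to matroids in which a given untouched candidate is, or is not, a common basis") is unsubstantiated, and the proof does not go through.

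A secondary inaccuracy is the accounting for the exponent loss. In the paper the $5\,n^{1/(\ell-1)}\log n$ term does not come from a per-query union bound over candidates; for a fixed limit matrix $L$ only the $L$-perfect sets can be hidden, and to cover all $k$-subsets one must enumerate over all admissible limit matrices $L$, of which there are at most $(r+1)^{r(\ell-1)}=2^{O(n^{1/(\ell-1)}\log n)}$; combined with a pigeonhole choice of $k$ with $\binom{n}{k}\geq 2^n/(n+1)$ and a $\binom{n}{k}/2$ randomized query lower bound for the hidden-family problem, this enumeration factor is exactly what produces the stated exponent. Your counting ("each query kills at most polynomially many candidates", candidates numbering $2^{n-5r\log n}$ within a single instance family) is not justified as written and would in any case have to be restricted to the $L$-perfect sets of a single $L$, which again presupposes the construction you are missing.
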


Specifically, for infinitely many integers $n$, the theorem gives an explicit lower bound on the number of queries an algorithm for $\ell$-matroid intersection must use.  
As the minimum number of queries is a lower bound on the overall running time, 
the above theorem implies there is no randomized algorithm which decides oracle $3$-matroid intersection in time $o \left(2^{n-5 \cdot \sqrt{n}\cdot \log n}\right)$. In particular, for every $c < 2$ there is no algorithm that decides $3$-MI in time $O \left(c^n\right)$ (thus answering Question 1). This lower bound is unconditional and substantially stronger than the known bounds of  $O \left(c^n\right)$ for a fixed $c < 2$. Such lower bounds follow from the special case of $3$-MI of common bases partitioning \cite{HorschIMOS24,berczi2021complexity},\footnote{The results of \cite{HorschIMOS24} interestingly show that the closely related {\em exact matroid intersection} on general matroids is intractable, answering an open question of \cite{Eisenbrand2024}.} or $3$-MI on specific classes of matroids (e.g., partition matroids \cite{kushagra2019semi,kushagra2020three}). 

To obtain the lower bound in \Cref{thm:3MIMain}, we use one non-linear matroid along with $\ell - 1$ linear matroids. The idea of incorporating a non-linear matroid in this context is not new and is inspired by a family of \emph{paving} matroids that have been used to establish lower bounds for other matroid problems~\cite{jensen1982complexity,lovasz1980matroid,soto2011simple,berczi2021complexity,doron2024lower,HorschIMOS24}. Notably, some of these constructions~\cite{HorschIMOS24,berczi2021complexity} already yield exponential lower bounds for $3$-MI of the form $O(c^n)$ for some fixed $c < 2$. However, resolving Question 1 requires a stronger framework.
In this work, we develop a more general construction that combines new combinatorial insights, enabling us to obtain significantly stronger bounds for $\ell$-MI. We describe our techniques in greater detail in \Cref{sec:techniques}.

Theorem \Cref{thm:3MIMain} gives unconditional tight bounds for $\ell$-MI in the oracle model. Theoretically, these results may not apply to algorithms for $\ell$-MI in the standard computational model, in which the matroids are encoded as part of the input. To show that our hardness results are not restricted to the oracle model, in the full version of the paper we present analogous lower bounds in the standard computational model, based on known complexity assumptions. 

\subsubsection{Relation to Parameterized Complexity}

\Cref{thm:3MIMain} gives a nearly tight lower bound for $\ell$-MI for every $\ell\geq 3$. Yet,  for many NP-hard problems it is possible to devise {\em parameterized} algorithms which are efficient on instances with small parameter values, including $\ell$-MI on linear matroids, which can be solved in time $2^{k} \cdot \poly(|E|)$ \cite{eiben2024determinantal}, where $k$ is the rank. 
This raises the following natural questions.
\begin{itemize} 
	\item[(i)] 
	Is there an algorithm for $\ell$-MI that runs faster than a brute force algorithm?
	More specifically, 
	can parameterized algorithms lead to faster than brute-force algorithms  
	for $\ell$-MI? 

	\item[(ii)] Can we separate the parameterized complexity of $\ell$-MI from $\ell$-MI on linear matroids?

\end{itemize}  

We answer both questions affirmatively, using a generalization of the {\em Monotone Local Search} technique of Fomin et al. \cite{FGLS19}.

Monotone Local Search tackles a wide range of problems which can be cast as {\em implicit set systems}. In such problems the input encodes a set system $(E,\cF)$, where $E$ is an arbitrary ground set, and $\cF\subseteq 2^{E}$ is a collection of subsets of the ground set. 
It is assumed that the ground set $E$ can be computed from the input in polynomial time, and it can be determined if $S\in \cF$ for every $S\subseteq E$ in polynomial time. 
The objective is to decide if $\cF=\emptyset$. Numerous problems, including Vertex Cover, Feedback Vertex Set and Multicut on Trees, can be cast as implicit set systems  (see \cite{FGLS19} for additional problems).

For example, the input for {\em Vertex Cover} is a graph $G$ and a number $k\in\mathbb{N}$. The input encodes the set system $(E,\cF)$, where $E$ is the set of {\em vertices} of $G$  and $\cF$ is the set of all the {\em vertex covers} of $G$ of size $k$ ($S$ is a vertex cover if for every edge $(u,v)$ of $G$ it holds that $u\in S$ or $v\in S$).  Then, $G$ has a vertex cover of size $k$ if and only if $\cF\neq \emptyset$. Furthermore, the set $E$ can be easily computed given the graph $G$, and it is possible to determine if $S\in \cF$ in polynomial time.  That is,  Vertex Cover can be cast as an implicit set system. 

Fomin et al.~\cite{FGLS19} show how to convert an {\em extension algorithm} for an implicit set system $\cP$ into an exponential time  algorithm for $\cP$. 
An extension algorithm of time $c^k$ takes a $\cP$ instance $I$ as input, along with $X\subseteq E$ and a number $k$. 
The algorithm either returns a set $S\subseteq E\setminus X$ such that $X\cup S\in \cF$  and $|S|= k$, or decides that no such set exists,  where $(E,\cF)$ is the set system associated with the instance $I$.   That is, its goal is to extend $X$ into a set in $\cF$. 
Furthermore, the algorithm runs in time $c^k\cdot  \poly(|I|)$. It is often the case that extension algorithms can be easily derived from {\em parameterized} algorithms for the same problem. 

The main result of \cite{FGLS19} is that if an implicit set system has an extension algorithm of time $c^k\cdot \poly(|I|)$, then there is an algorithm for the same implicit set system which runs in time $\left(2-\frac{1}{c}\right)^n\cdot \poly(|I|)$, where $n=|E|$ is the size of the ground set associated with the instance. 
For example, this means that the parameterized $1.252^k\cdot\poly(|I|)$ algorithm for Vertex Cover \cite{HN24} enables to obtain a $\left(2-\frac{1}{1.252}\right)^n \cdot \poly(|I|) \approx 1.2012^n\cdot \poly(|I|)$ algorithm for the problem, where $n$ is the number of vertices in the graph.  This holds as the parameterized algorithm for Vertex Cover can be easily used to derive an extension algorithm for the problem. 
The technique has been extended to approximation algorithms \cite{EKMNS22,EKMNS23,EKMNS24} and to multivariate subroutines \cite{GL17}.  It was also shown in \cite{EKMNS24} that the conversion done by the technique is, in some sense, tight.

A {\em parameterized} randomized algorithm for oracle $\ell$-matroid intersection is a randomized algorithm for oracle-$\ell$-MI which runs in time $f(k)\cdot \poly(n)$, where $n=|E|$ is the size of the ground set of the input instance, and $k$ is the rank of $(E,\cI_1)$, the first matroid of the instance. Intuitively, such an algorithm is efficient if $k$ is significantly smaller than $n$. 
An algorithm for $\ell$-MI parameterized by $k$, whose runtime is  $c^{k^2}\cdot \poly(|E|)$, for some $c\geq 1$,  has been proposed by Huang and Ward \cite{huang2023fpt} for every $\ell\geq 3$, and the abovementioned result of Eiben et al. \cite{eiben2024determinantal} gives a faster running time of $2^{k} \cdot \poly(|E|)$ for linear matroids.  We refer the reader to standard textbooks \cite{cygan2015parameterized,downey2012parameterized} for a comprehensive introduction to parameterized complexity and more general definitions.

Up to some  technicalities due to the oracles, it is possible to cast $\ell$-MI as an implicit set system.
Furthermore, it is easy to show that a parameterized algorithm for $3$-MI of time $g(k)\cdot \poly(|E|)$  implies an extension algorithm  of the same running time. 
 Therefore, by \cite{FGLS19}, a parameterized $c^k\cdot \poly(|E|)$ algorithm for $\ell$-MI, for any $c>1$, would imply  a $\left(2-\frac{1}{c}\right)^{|E|}\cdot \poly(|E|)$  algorithm for $\ell$-MI, contradicting \Cref{thm:3MIMain}.  However, the results of \cite{FGLS19} cannot be applied together with the parameterized algorithm of \cite{huang2023fpt} as its running time is not of the form $c^k\cdot \poly(|E|)$, and cannot be used to rule out a parameterized algorithm with running times such as  $2^{k\log \log k }\cdot \poly(|E|)$.

 We overcome the above hurdles by introducing a generalization of the Monotone Local Search technique. Our generalization converts extension algorithms for implicit set systems, of arbitrary running times, to exponential time algorithms whose running times are better than brute force. 
 We present the results in the context of {\em implicit set problems}, a simple generalization  of the implicit set systems used in \cite{FGLS19} which  allows part of the instance to be only accessible via oracles. 
 The formal statement of our results requires several technical definitions that we give in \Cref{sec:monotone}. Thus, we only provide an informal statement of our main results in this section.
We first consider parameterized algorithms with running time of the form $c^{k^2}$.
 \begin{introInformal}{Informal Statement of \Cref{lem:implicit_ksquare}}
 	Let $\cP$ be an implicit set problem which has an extension algorithm of time   $c^{k^2}$ for some $c\geq 1$. Then there is a randomized $2^{n-\Omega(\log^2 n)} \cdot \poly(n)$ algorithm for $\cP$, where $n=|E|$ is the size of the ground set.
 \end{introInformal}
 
 As the algorithm of Huang and Ward \cite{huang2023fpt} runs in time $c^{k^2} \cdot \poly(n)$, the next result follows from the above and from~\cite{huang2023fpt}, answering positively Question (i). This improves the runtime of the brute force algorithm by a factor of $\Omega\left(2^{\log^2 n}\right) = \Omega\left(n^{\log n}\right)$.   
 
  \begin{restatable}{theorem}{thmalg}
 	\label{thm:faster_MI}
 	For any $\ell \geq 3$, there is an algorithm for \textnormal{oracle $\ell$-matroid intersection} which runs in time $2^{n-\Omega(\log^2(n))}$, where $n$ is the size of the ground set. 
 \end{restatable}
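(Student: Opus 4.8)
The plan is to apply the informal version of \Cref{lem:implicit_ksquare} with $\cP$ set to oracle $\ell$-matroid intersection and with the parameterized algorithm of Huang and Ward \cite{huang2023fpt} playing the role of the extension algorithm. Two things need to be checked: (1) that $\ell$-MI can legitimately be cast as an \emph{implicit set problem} in the sense of \Cref{sec:monotone}, with the ``oracle part'' of the instance being exactly the membership oracles for $\cI_1,\dots,\cI_\ell$; and (2) that the $c^{k^2}\cdot\poly(n)$ parameterized algorithm of \cite{huang2023fpt} yields an \emph{extension algorithm} of running time $c^{k^2}\cdot\poly(n)$ for this implicit set problem. Granting both, \Cref{lem:implicit_ksquare} immediately produces a randomized $2^{n-\Omega(\log^2 n)}\cdot\poly(n)$ algorithm for $\ell$-MI, which is the statement; the $\Omega(2^{\log^2 n})=\Omega(n^{\log n})$ speedup over brute force is just an arithmetic remark.

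For step (1), the ground set $E$ of the implicit set system is the ground set of the matroids, which is given explicitly in the input, and $\cF$ is the family of common \emph{bases} of the $\ell$ matroids; then $\cF\neq\emptyset$ iff the instance is a ``yes'' instance. The only subtlety, already flagged in the excerpt, is that membership in $\cF$ is not decidable in polynomial time from an explicit encoding but \emph{is} decidable using $\poly(n)$ calls to the membership oracles (to test $S\in\cI_i$ for each $i$, and to test that $|S|$ equals the common rank by a greedy rank computation in each matroid). This is precisely why one works with implicit set \emph{problems} rather than implicit set \emph{systems}: the framework of \Cref{sec:monotone} is designed to allow the membership test to consult the oracle. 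So one simply instantiates that framework.

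For step (2), given an extension query $(I,X,k)$ one wants a set $S\subseteq E\setminus X$ with $|S|=k$ and $X\cup S$ a common basis, or a certificate that none exists. Contract $X$ in each of the $\ell$ matroids (a $\poly(n)$ oracle-side operation: $\cI_i/X=\{S\subseteq E\setminus X : S\cup X\in\cI_i\}$ when $X\in\cI_i$, and if $X\notin\cI_i$ for some $i$ there is trivially no extension); then $X\cup S$ is a common basis of the original matroids iff $S$ is a common basis of the $\ell$ contracted matroids, whose common rank is $k$ precisely when $k$ equals the original common rank minus $|X|$. If this arithmetic check fails, answer ``no''; otherwise run the Huang--Ward algorithm on the contracted instance with parameter $k$ — its running time is $c^{k^2}\cdot\poly(n)$ on the contracted instance, hence $c^{k^2}\cdot\poly(n)$ overall. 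This gives the required extension algorithm.

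The main obstacle is making the reduction in step (2) fully rigorous against the \emph{exact} definitions of ``implicit set problem'' and ``extension algorithm'' in \Cref{sec:monotone} — in particular confirming that contraction is a valid oracle-level operation within that model, that the parameter of the implicit set problem matches the rank parameter used by \cite{huang2023fpt}, and that the edge cases ($X$ not independent in some $\cI_i$, or $|X|$ exceeding a rank) are handled within the formalism rather than swept aside. None of this is deep; it is bookkeeping to line up the Huang--Ward guarantee with the hypotheses of \Cref{lem:implicit_ksquare}. Once that is done the theorem is a one-line corollary.
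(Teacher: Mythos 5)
Your proposal is correct and follows essentially the same route as the paper: the paper also casts $\ell$-MI as the implicit set problem $\cP_{\ell\textnormal{-MI}}$ with $\cF=\bases(\cI_1,\ldots,\cI_\ell)$, derives the extension algorithm from the Huang--Ward parameterized algorithm via contraction of $X$ together with the rank/independence edge-case checks (this is exactly \Cref{lem:MI_ext}), and then invokes \Cref{lem:implicit_ksquare} to conclude. The bookkeeping you flag is precisely what the paper carries out in \Cref{sec:monotone_def,sec:monotone_app}, so no new idea is missing.
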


We obtain an analogue to the above result for extension algorithms with runtime of the form $g(k) = 2^{\alpha \cdot k \cdot \log k}$.

\begin{introInformal}{Informal Statement of \Cref{lem:implicit_klogk}}
		Let $\cP$ be an implicit set problem which has an extension algorithm of time   $2^{\alpha k\cdot \log k } $  for some $\alpha > 0$. 
 	Then there is  $2^{n-\Omega\left(n^{\frac{1}{1+\alpha}}\right)} \cdot \poly(n)$ randomized algorithm for $\cP$, where $n$ is the size of the ground set.
\end{introInformal}

Using \Cref{thm:3MIMain} and the above, we conclude that there is no randomized $c^{k \cdot \log k} \cdot \poly(n)$ time algorithm for $3$-MI for some constant $c$. This answers affirmatively Question (ii) and distinguishes $\ell$-MI from the important special case on linear matroids \cite{eiben2024determinantal}. 
 To the best of our knowledge, this gives a new approach for obtaining  parameterized  lower bounds based on exponential-time lower bounds.

\begin{restatable}{theorem}{thmlowerBoundParameterized}
	\label{thm:param_lb}
	For every $\ell \geq 3$ and $\eps>0$ there is no random parameterized algorithm for  \textnormal{oracle $\ell$-matroid intersection} with runtime $\floor{2^{(\ell -2 -\eps) \cdot k\log(k)}}\cdot \poly(n)$, where $n$ is the size of the ground set and $k$ is the rank of the matroids. 
\end{restatable}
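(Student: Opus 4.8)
The plan is to derive a contradiction with \Cref{thm:3MIMain} by running the generalized Monotone Local Search result \Cref{lem:implicit_klogk} on top of the hypothetical parameterized algorithm. Fix $\ell\geq 3$ and $\eps>0$. We may assume $\eps<\ell-2$: for $\eps\geq\ell-2$ the target running time $\floor{2^{(\ell-2-\eps)k\log k}}\cdot\poly(n)$ is at most $\poly(n)$, and this case follows directly from \Cref{thm:3MIMain}, since $2^{\,n-5\,n^{1/(\ell-1)}\log n}$ eventually exceeds every polynomial in $n$, so no $\poly(n)$-query (hence no $\poly(n)$-time) algorithm for oracle $\ell$-MI exists. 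Set $\alpha:=\ell-2-\eps$, so $\alpha>0$ and $1+\alpha=\ell-1-\eps\in(1,\ell-1)$.

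Suppose, for contradiction, that there is a randomized parameterized algorithm $\cA$ for oracle $\ell$-MI running in time $\floor{2^{\alpha k\log k}}\cdot\poly(n)\leq 2^{\alpha k\log k}\cdot\poly(n)$, where $k$ is the rank of the first matroid. As sketched in the discussion preceding the theorem and formalized in \Cref{sec:monotone}, oracle $\ell$-MI is an implicit set problem whose associated set system has the common bases as its feasible sets; moreover $\cA$ converts into an extension algorithm for this implicit set problem of running time $2^{\alpha k\log k}\cdot\poly(n)$ — given $X$, one contracts each input matroid by $X$ (the membership oracle of $\cI_i/X$ answering whether $X\cup S\in\cI_i$, with the trivial case $X\notin\cI_i$ handled separately) and invokes $\cA$ on the resulting instance, whose first‑matroid rank is exactly the number $k$ of elements still to be added. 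Feeding this extension algorithm into \Cref{lem:implicit_klogk} (applicable since $\alpha>0$) yields a randomized algorithm $\cB$ for oracle $\ell$-MI with running time $2^{\,n-\Omega\left(n^{1/(1+\alpha)}\right)}\cdot\poly(n)=2^{\,n-\Omega\left(n^{1/(\ell-1-\eps)}\right)}\cdot\poly(n)$.

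It then remains to check that $\cB$ violates the lower bound. Because $\tfrac{1}{\ell-1-\eps}>\tfrac{1}{\ell-1}$ is a strictly larger constant exponent, $n^{1/(\ell-1-\eps)}$ dominates $n^{1/(\ell-1)}\log n$ polynomially: $n^{1/(\ell-1-\eps)}\big/\left(n^{1/(\ell-1)}\log n\right)\to\infty$. Hence for all large enough $n$ the term $\Omega\left(n^{1/(\ell-1-\eps)}\right)$ exceeds $5\,n^{1/(\ell-1)}\log n$ even after subtracting the $O(\log n)$ coming from the $\poly(n)$ factor, so $\cB$ runs in fewer than $2^{\,n-5\,n^{1/(\ell-1)}\log n}$ steps, and hence uses fewer than that many oracle queries, on all sufficiently large instances. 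Restricting to the infinite family of admissible sizes (e.g.\ $n=m^{\ell-1}$ for large $m\in\mathbb{N}$, which satisfy $n\geq 2^{\ell-1}$ and $n^{1/(\ell-1)}\in\mathbb{N}$) contradicts \Cref{thm:3MIMain}, proving the theorem.

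I expect the only non-trivial step to be the middle one: carefully matching the parameterized algorithm for oracle $\ell$-MI with the \emph{extension-algorithm} notion used by the generalized Monotone Local Search, while respecting oracle-only access to the matroids — contraction of an oracle-given matroid, and the precise relation between the rank parameter of $\cA$ and the size parameter of the extension algorithm (so that the extension running time stays of the form $2^{\alpha k\log k}\cdot\poly(n)$ with the \emph{same} $\alpha<\ell-2$). Once this is in place, everything else is the routine asymptotic comparison above, whose entire content is that $1+\alpha=\ell-1-\eps$ is bounded away from $\ell-1$, which is exactly the slack needed for \Cref{lem:implicit_klogk} to produce an algorithm beating the \Cref{thm:3MIMain} barrier.
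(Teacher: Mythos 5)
Your proposal is correct and follows essentially the same route as the paper's proof: convert the hypothesized parameterized algorithm into an extension algorithm via \Cref{lem:MI_ext}, feed it to \Cref{lem:implicit_klogk} to obtain a $2^{\,n-\Omega\left(n^{1/(1+\alpha)}\right)}\cdot\poly(n)$ algorithm for oracle $\ell$-MI, and contradict \Cref{thm:3MIMain} by the asymptotic comparison of the exponents $\tfrac{1}{1+\alpha}$ versus $\tfrac{1}{\ell-1}$ on the admissible instance sizes. Your explicit treatment of the degenerate case $\eps\geq\ell-2$ (where $\alpha\leq 0$ and \Cref{lem:implicit_klogk} would not apply) is a minor point the paper leaves implicit; otherwise the two arguments coincide, with the paper merely phrasing the final step contrapositively as $f(n)=O\left(n^{1/(\ell-1)}\log n\right)$ forcing $\alpha\geq\ell-2$.
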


Using brute force, one can solve every implicit set problem $\cP$ in time $\approx2^{|E|}$, by enumerating over all subsets of the ground set $E$. We further show that if $\cP$ has an extension algorithm, then it has an algorithm with runtime better than brute force, by a factor which is greater than every polynomial.

 \begin{introInformal}{	
 Informal Statement of \Cref{lem:better_than_brute}}
 Let $\cP$ be an implicit set problem, and assume that $\cP$ has an extension algorithm. Then there is a $2^{n-\omega\left(\log n\right)} \cdot \poly(n)$ algorithm for $\cP$, where $n$ is the size of the ground set.
 \end{introInformal}

\subsection{Matroids Overview}
\label{sec:matroidOverview}

We give below a brief overview of matroid classes and fundamental problems related to our study.

\paragraph{Matroid Classes} Perhaps the simplest class is the one of {\em uniform} matroids. In such matroids, the independent sets $\cI$ are all subsets of the ground set $E$ containing at most $k \in \N$ elements. These matroids are often used to model a {\em cardinality constraint}, commonly studied in various algorithmic settings (e.g., \cite{nemhauser1981maximizing,caprara2000approximation}). 
To model the more general constraints of spanning trees, one needs {\em graphic} matroids \cite{whitney1935t,birkhoff1935abstract}. In these matroids, the ground set $E$ is the set of edges of an undirected graph $G = (V,E)$, and $\cI$ consists of all acyclic subsets of edges.

In this paper we make an extensive use of {\em partition} matroids: given a partition $E_1,\ldots, E_t$ of the ground set $E$ and integer bounds $b_1,\ldots, b_t \in \N$, the set (of independent sets) $\cI$ contains subsets of $E$ consisting of at most $b_i$ elements in $E_i$ for all $i \in \{1,\ldots,t\}$. Finally, {\em linear} matroids \cite{steinitz1913bedingt} generalize all of the above examples: the ground set $E$ represents columns of a matrix, and $\cI$ represents all subsets of linearly independent columns.

We note the existence of {\em non-linear} matroids. One
example is the V\'{a}mos matroid \cite{vamos1968representation}. In fact, asymptotically, the number of linear matroids $-$ compared to the number of non-linear matroids $-$ is negligible \cite{nelson2016almost}. Our main results are based on {\em paving} matroids;
this is the family of all matroids $(E,\cI)$ with rank $r$ (cardinality of a basis) satisfying that every subset $S \subseteq E$ with cardinality $|S| < r$ belongs to $\cI$.
A well-known conjecture of \cite{mayhew2011asymptotic} asserts that asymptotically almost all matroids belong to the class of paving matroids.

\paragraph{Canonical special cases of $3$-MI} A central special case of $3$-MI focuses on linear matroids. As an introduction to the problem, we give below two prominent examples of problems that can be cast as $3$-MI on linear matroids. 
One notable special case of $3$-MI is $3$-{\em dimensional matching} ($3$-DM) \cite{karp1985complexity,karp2010reducibility}. We are given three sets $X_1,X_2,X_3$ of equal cardinality and a collection of triplets $T \subseteq X_1 \times X_2 \times X_3$. A {\em matching} is $M \subseteq T$ such that for any two distinct triplets $(x_1,x_2,x_3),(x'_1,x'_2,x'_3) \in M$ it holds that $x_i \neq x'_i$ for all $i \in \{1,2,3\}$. The goal is to decide if there is a matching of cardinality $|M| = |X_1| = |X_2| = |X_3|$. We give an illustration in \Cref{fig:reductions}. A $3$-DM instance with sets $X_1,X_2,X_3$ and triplets $T$ can be cast as a  $3$-MI instance using three partition matroids. For each $i \in \{1,2,3\}$ define a partition $\left(T^i_a\right)_{a \in X_i}$ of $T$, where for all $a \in X_i$, the set $T^i_a$ consists of all triplets with $a$ in the $i$-th entry; also, the cardinality bound of $T^i_a$ is $1$. Observe that an independent set in the $i$-th matroid can take at most one triplet containing a vertex $a \in X_i$; thus, a common independent set in the three matroids is a matching. Then, the instance contains a matching of cardinality $|X_1|$ if and only if there is a common basis of the three matroids of cardinality $|X_1|$. 

\setlength{\columnsep}{10pt}  

\begin{figure}

	\begin{multicols}{2}  
		\centering
		\begin{tikzpicture}[scale=3]  
			\node[circle, draw, thick, minimum size=1.2cm] (A) at (0,1) {};
			\node[circle, draw, thick, minimum size=1.2cm] (B) at (1,1) {};
			\node[circle, draw, thick, minimum size=1.2cm] (C) at (0,0) {};
			\node[circle, draw, thick, minimum size=1.2cm] (D) at (1,0) {};
			
			\draw[->, thick, red] (B) -- (A);  
			\draw[->, thick, dashed] (D) -- (B);  
			\draw[->, thick, red] (D) -- (C);  
			\draw[->, thick,dashed] (C) -- (A);  
			\draw[->, thick, red] (C) -- (B);  
			\draw[->, thick, black, dashed] (D) -- (A);  
		\end{tikzpicture}
		
		\begin{tikzpicture}[scale=2]  
			\node[circle, draw, thick, minimum size=0.8cm] (A) at (0,2) {};
			\node[circle, draw, thick, minimum size=0.8cm] (B) at (1,2) {};
			\node[circle, draw, thick, minimum size=0.8cm] (C) at (2,2) {};
			\node[circle, draw, thick, minimum size=0.8cm] (D) at (0,1) {};
			\node[circle, draw, thick, minimum size=0.8cm] (E) at (1,1) {};
			\node[circle, draw, thick, minimum size=0.8cm] (F) at (2,1) {};
			\node[circle, draw, thick, minimum size=0.8cm] (G) at (0,0) {};
			\node[circle, draw, thick, minimum size=0.8cm] (H) at (1,0) {};
			\node[circle, draw, thick, minimum size=0.8cm] (I) at (2,0) {};
			\draw[thick, red] (A) -- (B);   
			\draw[thick, red] (B) -- (F);            
			\draw[thick, blue] (D) -- (E);  
			\draw[thick, blue] (E) -- (C);       
			\draw[thick, black] (G) -- (H);       
			\draw[thick, black] (H) -- (I);       
			\draw[thick, green, dashed] (E) -- (I);     
			
			\draw[thick, green, dashed] (A) -- (E); 
			\draw[thick, green, dashed] (E) -- (F); 
			
			\draw[thick, magenta, dashdotted] (D) -- (H);
			\draw[thick, magenta, dashdotted] (H) -- (C);
			
			\draw[thick, orange, dashed] (G) -- (E);   
			
			\draw[thick, orange, dashed] (E) -- (I);   
		\end{tikzpicture}
		
	\end{multicols}

	\caption{\label{fig:reductions}
		Examples of two problems that can be cast as $3$-MI on linear matroids: Hamiltonian Path in directed graphs and $3$-DM.  On the left, a Hamiltonian path in a directed graph is highlighted in (solid line) red. The right figure illustrates a $3$-DM instance where each column of vertices is a dimension, and each triplet characterizes a path of length $2$. An optimal matching is represented by the (solid line) paths in red, blue, and black. 
	}
\end{figure}

Another prominent special case of $3$-MI is Hamiltonian Path on directed graphs.
Given a directed graph $G = (V,E)$, the goal is to decide if there exists a directed Hamiltonian Path in the graph, i.e.,  a directed path that visits each vertex exactly once (see  \Cref{fig:reductions}). An instance of Hamiltonian Path on a directed graph $G = (V,E)$ can be cast as the following $3$-MI instance using two partition matroids and one graphic matroid. For the first partition matroid, define a partition $\left(E_v\right)_{v \in V}$ of the edge set, where $E_v$,  for all $v \in V$, contains all out-edges of $v$. For the second partition define a partition $\left(E'_v\right)_{v \in V}$ of the edge set, where $E'_v$, for all $v \in V$, contains all in-edges of $v$.\footnote{Technically,  we take the {\em truncation} of the above matroids to $|V|-1$, restricting the size of a basis in each of the matroids to be exactly $|V|-1$, which is still a matroid (see, e.g., Chapter 39 in \cite{schrijver2003combinatorial}).} The bound of each set in the partition for both matroids is $1$. This guarantees that a common basis of the two matroids consists of a collection of edges inducing a graph with in and out degree at most $1$ (or, a collection of simple directed paths and cycles). Finally,
the third matroid is a graphic matroid 
defined over the underlying undirected graph of $G$.  Thus, there is a directed Hamiltonian Path in $G$ if and only if there is a common basis of the three matroids. 

While we focus in this paper on the decision version of $\ell$-MI, we note that the maximization version of the problem, in which we seek a maximum cardinality common independent set, has been extensively studied. The current state of the art is $\left(\frac{\ell}{2}+\eps\right)$-approximation for any $\eps>0$ \cite{LeeSV13}, and a recent lower bound which matches this upper bound up to a constant factor \cite{lee2024asymptotically}.

\subsection{Our Techniques}
\label{sec:techniques}

 Our lower bound for $\ell$-MI combines a paving matroid (see \Cref{sec:matroidOverview}) 
with $d = \ell-1$ partition matroids. At a high level, we consider a $d$-dimensional grid as the common ground set of the matroids. 
 The $d$ partition matroids, one for each dimension of the $d$-dimensional grid, enforce the solution to take a specified number of elements $L_{i,j}$, having a specific value $i$ in the $j$-th entry of the grid. This idea is formalized in detail in later sections.

The additional paving matroid is designed to {\em hide} a certain collection of subsets $\cG$ of the grid. Specifically, the bases of this matroid, 
 which are also common bases of the partition matroids, belong to $\cG$. 
  Together with the partition matroids' constraints, finding a common basis for the $\ell$ matroids requires to decide if $\cG \neq \emptyset$. 
In the oracle model, it is a computationally challenging task to decide if $\cG \neq \emptyset$. Thus, essentially, to accomplish this task, an algorithm for $\ell$-MI has to enumerate over all common bases of the $\ell-1$ partition matroids.

The above gives the intuition behind the reduction in \Cref{thm:3MIMain}.
Notably, the $d$-dimensional grid ground set (with $d \geq 2$) plays a crucial role in the construction of the paving matroid. This structure is key to achieving significantly stronger query lower bounds for $3$-MI, compared to the best possible lower bounds known for $2$-matroid intersection~\cite{blikstad2023fast}. 
Hiding a property in the bases of a paving matroid has been used in previous works on matroid problems~\cite{jensen1982complexity,lovasz1980matroid,soto2011simple, berczi2021complexity, doron2024lower, HorschIMOS24}. Nonetheless, these constructions alone are insufficient to obtain a $2^{n - o(n)}$ lower bound for $\ell$-MI. 
In addition, in contrast to the lower bounds derived for a single matroid~\cite{jensen1982complexity,karp1985complexity}, a matroid and a matching constraint \cite{lovasz1980matroid,soto2011simple}, or a matroid and a linear constraint \cite{doron2024lower}, our construction focuses on an {\em arbitrary} number $\ell \geq 3$ of matroids and our bounds become stronger as $\ell$ increases.  

Monotone Local Search \cite{FGLS19} suggests a novel approach for converting an extension algorithm to an exponential time algorithm for an arbitrary  implicit set system. 
Recall that in implicit set systems the input instance encodes a set system $(E,\cF)$, and the objective is to determine if $\cF\neq \emptyset$.
If $\cF\neq \emptyset$ then one can find $S\in \cF$ by guessing $k=|S|$, sampling a random set $X\subseteq \cF$ of size $t \leq k$ using a uniform distribution, and checking if $X$ can be extended to a solution for the problem using the {\em extension algorithm}. If $X\subseteq S$ then the process terminates with `success', and the algorithm finds a set in $\cF$. The above procedure has to be repeated sufficiently many times to ensure a constant success probability.

The running time of monotone local search hinges on the value of $t$. If we select $t=k$, then the algorithm basically guesses a set with hope it is in $\cF$. This leads to a running time similar to that of brute force due to a large number of required repetitions. On the other hand, decreasing the value of $t$ reduces the number of required repetitions, while increasing the running time of the extension algorithm used in each repetition. A main observation in the analysis of monotone local search is that the optimal value of $t$ is bounded away from $k$, implying that the resulting running time is better than brute force. This property was previously known  for the case in which the running time of the extension algorithm was of the form $f(k)=c^k$ times a polynomial factor \cite{FGLS19}. We show that, interestingly,
the same property holds for arbitrary $f$.

	\paragraph{Organization} \Cref{sec:prel} gives some definitions and notation. In \Cref{sec:Fmatroids} we construct a family of matroids that will be used to prove \Cref{thm:3MIMain}. We give the proof of \Cref{thm:3MIMain} in \Cref{sec:3MI}, respectively. \Cref{sec:monotone} describes a generalization of the monotone local search technique of \cite{FGLS19} and gives the proofs of our remaining results. We conclude with a discussion and open problems in \Cref{sec:discussion}. 
	For convenience, the first page of the paper contains a table of contents.

\section{Preliminaries}
\label{sec:prel}

\paragraph{Notations}
We use $\N = \{1,2,\ldots\}$ to denote the set of natural numbers, excluding zero.
For any $n \in \N$, let $[n] = \{1,\ldots, n\}$ for short. Let $|I|$ be the encoding size of instance $I$ of a decision problem $\cD$.  For some $n \in \N$, a vector $\vb \in \N^n$, and an entry $i \in [n]$, let $\vb_i$ be the $i$-th entry in $\vb$. Similarly, for some $m,n \in [n]$, a matrix $A \in \N^{m \times n}$,  $i \in [m]$, and $j \in [n]$, let $A_{i,j}$ denote the entry of the matrix in row $i$ and column $j$. 
For a set $A$ and some element $e$ let $A+e$ and $A-e$ denote $A \cup \{e\}$ and $A \setminus \{e\}$, respectively. We use $\log = \log_2$ to denote the logarithm in base $2$ and $\poly(n)$ to denote polynomial functions of a variable $n$.  Let $X,Y \subseteq \N$ and let $\pi:X \rightarrow Y$ be some bijection. For every $S \subseteq X$, let $\pi(S) = \{\pi(i) \mid i \in S\}$ and for every $T \subseteq Y$ let $\pi^{-1}(T) = \{i \in X \mid \pi(i) \in T\}$.

\paragraph{Matroids}
\label{sec:matroids}
\comment{A {\em matroid} is a set system $(E, \cI)$, where $E$ is a finite set and $\cI \subseteq 2^E$ are the {\em independent sets} such that $(i)$ $\emptyset \in \cI$, $(ii)$ for all $A \in \cI$ and $B \subseteq A$ it holds that $B \in \cI$, and $(iii)$ for all $A,B \in \cI$ where $|A| > |B|$, there is $e \in A \setminus B$ such that $B + e \in \cI$. Property $(ii)$ is called the {\em hereditary property},
	and Property $(iii)$ is called the  {\em exchange property}. }
	
	We repeat some of the definitions given in the introduction more rigorously. The {\em rank} of a matroid $(E,\cI)$ is the maximum cardinality of an independent set: $\max_{S \in \cI} |S|$ and a {\em basis} is an inclusion-wise maximal independent set - of cardinality equals to the rank (see, e.g., Chapter 39 in \cite{schrijver2003combinatorial} for more details). A matroid $(E,\cI)$ with rank $r$ is a {\em paving matroid} if for each $S \subseteq E$ such that $|S| < r$ it holds that $S \in \cI$. 
	For some matroid $(E,\cI)$, let $\textnormal{\textsf{bases}}(\cI) = \{S \in \cI \mid |S| = \max_{S' \in \cI} |S'|\}$ be the set of bases of $\cI$. For some matroids $(E,\cI_1),\ldots, (E,\cI_d)$, we use $\textnormal{\textsf{bases}}(\cI_1,\ldots,\cI_d)$ to denote the the intersection of the sets of bases  $ \textnormal{\textsf{bases}}(\cI_1) \cap \ldots \cap \textnormal{\textsf{bases}}(\cI_d)$.

Given a matroid $M = (E,\cI)$ and $S \subseteq E$ let $\cI_{\cap S} = \{T \subseteq S \mid T \in \cI\}$ and let $M_{\cap S} = (S, \cI_{\cap S})$ be the {\em restriction} of $M$ to $S$. 
Also, Given a matroid $M=(E,\cI)$ and $S\in \cI$ let $\cI/S = \{T\subseteq E\setminus S\,|\,T\cup S \in \cI\}$ and let $M/S= (E\setminus S, \cI/S)$ be the {\em contraction} of $M$ by $S$. 
It is well known that $M_{\cap S}$ and $M/S$ are matroids (see, e.g., \cite{schrijver2003combinatorial}).

In terms of matroid representation, note that $\ell$-MI (for some $\ell \geq 3$) is an abstract problem that do not specify the input. In an instance $(E,\cI_1,\ldots,\cI_{\ell})$ of {\em oracle}-$\ell$-MI, the ground set $E$ is explicitly given in the input, while the set $\cI_i$, for $i \in \{1,\ldots,\ell\}$, can be accessed only via a designated membership oracle, which determines if some set $S \subseteq E$ belongs to $\cI_i$ in a single query.

\paragraph{Randomized Algorithms}

An instance $I$ of a decision problem $\cD$
is a ``yes"-instance if the correct answer for $I$ is ``yes"; otherwise, $I$ is a ``no"-instance.
We say that $\cA$ is a {\em randomized algorithm} for a decision problem $\cD$ if,
given a ``yes"-instance $I$ of $\cD$, $\cA$ returns ``yes"  with probability at least $\frac{1}{2}$; for a ``no"-instance, $\cA$ returns ``no" with probability $1$.

	\paragraph{The Empty Set Problem}
We give a reduction from the following problem. Intuitively, in this problem we have two players Alice and Bob.\footnote{We remark that this problem is described using the two players merely to provide more intuition for the reader and is not necessary for the formal definition of the problem.} Alice receives a set of numbers $[n] = \{1,\ldots,n\}$ and needs to determine if a set $\cF \subseteq 2^{[n]}$, which is not explicitly given to her, is empty or not. Alice gains information about $\cF$ only by querying Bob - an oracle. Namely, for each set $S \subseteq [n]$, Bob replies either $\textnormal{\textsf{true}}$ - implying that $S \in \cF$, or $\textnormal{\textsf{false}}$ otherwise.  More concretely, for any $n,k \in \N \cup \{0\}$ let $$\cS_{n,k} = \left\{ S \subseteq [n] ~\big|~ |S| = k \right\}.$$ 

Define the Empty Set problem as follows. 

 \problem{\label{def:emptySet}{ 
		Empty Set (ES)}}{
	{\bf Instance} &  
	$(n,k,\cF)$, where $n,k \in \N$ and $\cF \subseteq \cS_{n,k}$.\\
	{\bf Objective} & Decide if $\cF \neq \emptyset$. 
}

We give an illustration in \Cref{fig:ES}. 
\begin{figure}	
	\begin{center}
		\begin{tikzpicture}[ultra thick,scale=1.1, every node/.style={scale=1}]			
			\node at (-0.8,3.5) {$\bf \textcolor{black}{\cS_{4,2}=} $};
			\draw (0,3) rectangle (6,4);
			\draw (1,3)--(1,4);
			\draw (2,3)--(2,4);
			\draw (3,3)--(3,4);
				\draw (4,3)--(4,4);
			\draw (5,3)--(5,4);
			\draw (6,3)--(6,4);								
			\node at (0.5,3.5) {$\{1,2\}$};
			\node at (1.5,3.5) {$\{1,3\}$};
			\node at (2.5,3.5) {$\{2,3\}$};
			\node at (4.5,2.7) {$\in \cF$};
			\node at (3.5,3.5) {$\{1,4\}$};		
				\node at (4.5,3.5) {$\bf \textcolor{red}{\{2,4\}}$};		
					\node at (5.5,3.5) {$\{3,4\}$};		
		\end{tikzpicture}
	\end{center}
	\caption{\label{fig:ES} An example of an Empty Set ``yes''-instance with universe $[n] = \{1,2,3,4\}$, $k = 2$, and $\cF = \{\{2,4\}\}$ (highlighted in red). }
\end{figure}
Note that an instance $(n,k,\cF)$ of the Empty Set is a ``yes''-instance if and only if $\cF \neq \emptyset$; we refer to $n$ as the {\em size} of the universe and to $k$ as the {\em cardinality target}.  Similarly to 3-MI, the Empty Set problem is an abstract problem that does not specify the input; clearly, if $\cF$ is given in the input, the problem becomes trivial. We consider an oracle model for this problem. Specifically, in an instance $(n,k,\cF)$ of {\em oracle}-ES, the numbers $n,k$ are explicitly given in the input, while the set $\cF$ can be accessed only via a membership oracle, which indicates whether some $S \in \cS_{n,k}$ belongs to $\cF$ in a single query.

We obtain the following lower bound on the minimum number of queries needed to decide the oracle model of the Empty Set problem. The proof is given in \Cref{sec:proofs}. 

\begin{restatable}{lemma}{lemES}
	\label{lem:EmptySet}
	For every $n \in \N$ and $k \in \{0,1\ldots,n\}$, there is no randomized algorithm that decides \textnormal{oracle-ES} problem on instances with a universe of size $n$ and cardinality target $k$ in fewer than $\frac{{n \choose k}}{2}$ oracle queries.
\end{restatable}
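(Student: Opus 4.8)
The plan is to use a standard adversary (or Yao-style) argument. First I would reduce the randomized lower bound to a deterministic one via Yao's principle: it suffices to exhibit a distribution over inputs on which every deterministic algorithm making fewer than $\binom{n}{k}/2$ queries errs with probability more than $\tfrac12$. The natural hard distribution is the following: with probability $\tfrac12$ take the ``no''-instance $\cF=\emptyset$, and with probability $\tfrac12$ pick a uniformly random singleton $\cF=\{S\}$ with $S\in\cS_{n,k}$ (a ``yes''-instance). A deterministic algorithm that queries sets $S_1,S_2,\dots$ and is told \textnormal{\textsf{false}} each time cannot distinguish the two cases until it happens to query the hidden set $S$; and on the ``no''-instance it receives \textnormal{\textsf{false}} on every query, so on any input from this distribution the sequence of queries it makes is the same fixed sequence $S_1,\dots,S_q$ until (in the ``yes'' case) it hits $S$.

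Next I would do the counting. Fix a deterministic algorithm $\cA$ allowed $q$ queries with $q < \binom{n}{k}/2$. Run $\cA$ against the all-\textnormal{\textsf{false}} oracle; let $S_1,\dots,S_q \in \cS_{n,k}$ be the (adaptively chosen, but now fixed) queries it makes, and let $a\in\{\text{yes},\text{no}\}$ be its final answer. On the ``no''-instance $\cF=\emptyset$ this transcript is exactly what happens, so $\cA$ is correct on $\cF=\emptyset$ only if $a=\text{no}$. Now consider a ``yes''-instance $\cF=\{S\}$ with $S\notin\{S_1,\dots,S_q\}$: every query still returns \textnormal{\textsf{false}}, so $\cA$ produces the identical transcript and the same answer $a$; hence $\cA$ is \emph{wrong} on all such instances. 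The number of ``bad'' hidden sets is at least $\binom{n}{k}-q > \binom{n}{k}-\binom{n}{k}/2 = \binom{n}{k}/2$, so conditioned on the ``yes'' branch, $\cA$ errs with probability strictly greater than $\tfrac12$. Combining: if $a=\text{yes}$, $\cA$ errs on the whole ``no'' branch (probability $\tfrac12$); if $a=\text{no}$, $\cA$ errs with probability $>\tfrac12$ on the ``yes'' branch, i.e. total error $>\tfrac14$. To push the error above $\tfrac12$ and match the ``return yes w.p. $\ge\tfrac12$ on yes-instances, no w.p. $1$ on no-instances'' definition of randomized algorithm used in the paper, I would argue directly against that definition rather than via a symmetric $\tfrac12$ bound: a correct randomized algorithm must answer \textnormal{no} with probability $1$ on $\cF=\emptyset$, so every run, for every choice of its random bits, must reach the answer \textnormal{no} after at most $q$ all-\textnormal{\textsf{false}} replies; but then on a ``yes''-instance $\cF=\{S\}$, the algorithm answers \textnormal{yes} only on the event that one of its (at most $q$) queries equals $S$. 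Averaging over $S$ chosen uniformly from $\cS_{n,k}$, the probability of this event is at most $q/\binom{n}{k} < \tfrac12$, so there exists a single $S$ for which the success probability is $<\tfrac12$, contradicting the definition.

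I expect the main obstacle to be purely bookkeeping: making the adaptivity argument airtight, namely that against the fixed all-\textnormal{\textsf{false}} oracle the algorithm's query sequence and internal state are a deterministic function of its random string alone, so that ``it queries $S$ at some point'' is a well-defined event whose probability (over random string and over the uniform choice of $S$) is bounded by the expected number of queries divided by $|\cS_{n,k}|=\binom{n}{k}$. Everything else is a short union bound / averaging argument, and no properties of matroids or grids are needed here — this lemma is the abstract combinatorial core that \Cref{thm:3MIMain,thm:EMI} will later reduce to.
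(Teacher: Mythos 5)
Your final argument is correct and is essentially the paper's proof: both exploit the one-sided error condition by running the algorithm against the all-\textsf{false} oracle, noting that with fewer than $\binom{n}{k}/2$ queries an averaging argument yields a set $S^*\in\cS_{n,k}$ that is queried with probability less than $\tfrac12$, and concluding that on $\cF=\{S^*\}$ the algorithm (being forced to answer ``no'' whenever its transcript matches the empty-instance transcript) succeeds with probability below $\tfrac12$. The initial Yao-style detour, which you yourself discard, is unnecessary, but the direct argument you settle on matches the paper's reasoning.
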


\section{$\cG$-Matroids}
\label{sec:Fmatroids}

 In this section we introduce the class of {\em $\cG$-matroids} that will be used to prove \Cref{thm:3MIMain}. On a high level, a $\cG$-matroid is a matroid whose ground set is the $d$-dimensional grid denoted as $\gr = [n]^d$, for some $n,d \in \N$ (simply $[n] \times [n]$ for $d = 2$).
  The independent sets of the matroid are defined according to a matrix $L \in \N^{n \times d}$. Each entry $L_{i,j}$ of $L$ indicates a target value for the number of entries with value $i$ in dimension $j$ of $\gr$, for $1 \leq i \leq n$, $1 \leq j \leq d$.
 {The bases of the matroid are defined as those which violate one of the target values, or adhere to the targets and satisfy an additional property. } Subsets of $\gr$ satisfying all targets with equality are called $L$-{\em perfect} (see~\Cref{fig:perfect}).\footnote{The formal definition of the independent sets of a $\cG$-matroid is given below).} 
 	For example, $L_{2,1} = 3$ defines a target value of $3$ on the number of elements in $\gr$ whose value in the first dimension is equal to $2$. When $d=2$, the first dimension is the row number, and the second is the column number. Then $L$ defines $n \cdot 2$ targets, one for each row and one for each column of $\gr = [n] \times [n]$, and
 	$L_{2,1} = 3$ implies that in any  $L$-prefect set there are exactly $3$ elements from row $2$. 

 	 \begin{figure}[h]    
 		\begin{center}
 			\begin{tikzpicture}[ultra thick,scale=1.1, every node/.style={scale=1}]            
 				
 				\draw (0,5) rectangle (6,6);
 				\foreach \x in {1,2,3,4,5,6} {
 					\draw (\x,5)--(\x,6);
 				}
 				
 				\node at (0.5,5.5) {$*$};
 				\node at (1.5,5.5) {};
 				\node at (2.5,5.5) {$*$};
 				\node at (3.5,5.5) {};
 				\node at (4.5,5.5) {$*$};
 				\node at (5.5,5.5) {};
 				
 				\draw (0,4) rectangle (6,5);
 				\foreach \x in {1,2,3,4,5,6} {
 					\draw (\x,4)--(\x,5);
 				}
 				\node at (0.5,4.5) {};
 				\node at (1.5,4.5) {$*$};
 				\node at (2.5,4.5) {};
 				\node at (3.5,4.5) {$*$};
 				\node at (4.5,4.5) {};
 				\node at (5.5,4.5) {$*$};
 				
 				\draw (0,3) rectangle (6,4);
 				\foreach \x in {1,2,3,4,5,6} {
 					\draw (\x,3)--(\x,4);
 				}
 				\node at (0.5,3.5) {$*$};
 				\node at (1.5,3.5) {};
 				\node at (2.5,3.5) {};
 				\node at (3.5,3.5) {$*$};
 				\node at (4.5,3.5) {};
 				\node at (5.5,3.5) {$*$};
 				
 				\draw (0,2) rectangle (6,3);
 				\foreach \x in {1,2,3,4,5,6} {
 					\draw (\x,2)--(\x,3);
 				}
 				\node at (0.5,2.5) {};
 				\node at (1.5,2.5) {$*$};
 				\node at (2.5,2.5) {$*$};
 				\node at (3.5,2.5) {};
 				\node at (4.5,2.5) {$*$};
 				\node at (5.5,2.5) {};
 				
 				\draw (0,1) rectangle (6,2);
 				\foreach \x in {1,2,3,4,5,6} {
 					\draw (\x,1)--(\x,2);
 				}
 				\node at (0.5,1.5) {$*$};
 				\node at (1.5,1.5) {};
 				\node at (2.5,1.5) {$*$};
 				\node at (3.5,1.5) {};
 				\node at (4.5,1.5) {};
 				\node at (5.5,1.5) {$*$};
 				
 				\draw (0,0) rectangle (6,1);
 				\foreach \x in {1,2,3,4,5,6} {
 					\draw (\x,0)--(\x,1);
 				}
 				\node at (0.5,0.5) {};
 				\node at (1.5,0.5) {$*$};
 				\node at (2.5,0.5) {};
 				\node at (3.5,0.5) {$*$};
 				\node at (4.5,0.5) {$\bf \textcolor{black}{*}$};
 				\node at (5.5,0.5) {};
 				
 			\end{tikzpicture}
 		\end{center}
 		\caption{\label{fig:perfect} An example of an $L$-perfect set. The figure shows a $6 \times 6$ grid, i.e., $n = 6$ and $d = 2$ implying $\gr = [6]^2$. In the matrix $L \in \N^{6 \times 2}$ all entries are equal to $3$. The entries of one $L$-perfect set are marked with `*' (for example, the $L$-perfect set contains the elements $(1,1)$, $(1,3)$ and $(1,5)$ from the first row).}
 	\end{figure}
 	
 	We can choose $n$ and $L$ such that the number of $L$-perfect subsets of $\gr$ is extremely large and asymptotically close to the total number of subsets of $\gr$. Hence, intuitively, finding one specific $L$-perfect set (based on querying an oracle) is as hard as finding a needle in a haystack. Specifically, fix some arbitrary collection of subsets $\cG \subseteq 2^{\gr}$. The set of bases of the $\cG$-matroid consists of all subsets of $\gr$ whose cardinality is equal to the sum of the first column of $L$, that are either (i) not $L$-perfect, or (ii) are $L$-perfect and belong to $\cG$ (see \Cref{def:Fmatroid}). This is illustrated in \Cref{fig:Fmatroids}. The above suggests that, roughly, finding an $L$-perfect set in $\cG$ is as hard as solving the Empty Set problem. Before we show this in detail, let us make the definition of $\cG$-matroid more precise.

 	\begin{figure}[h]    
 		\begin{center}
 			\begin{tikzpicture}[ultra thick,scale=1.1, every node/.style={scale=1}]            
 				\draw (0,2) rectangle (2,3);
 				\foreach \x in {1,2} {
 					\draw (\x,2)--(\x,3);
 				}
 				
 				\node at (0.5,2.5) {$*$};
 				\node at (1.5,2.5) {};
 				
 				\draw (0,1) rectangle (2,2);
 				\foreach \x in {1,2} {
 					\draw (\x,1)--(\x,2);
 				}
 				\node at (0.5,1.5) {};
 					\node at (-0.7,2) {$\gr = $};
 				\node at (1.5,1.5) {$*$};
 				
 				\node at (1,0.5) {$\textnormal{\textsf{bases}} = \{\bf \textcolor{red}{\{(1,1),(2,2)\}},\{(1,1),(1,2)\},\{(2,1),(2,2)\},\{(1,1),(2,1)\} \{(1,2),(2,2)\} \}$};
 			\end{tikzpicture}
 		\end{center}
 			\caption{\label{fig:Fmatroids} An illustration of the bases of a $\cG$-matroid with the grid defined as $\gr = [n]^d$ for $n,d = 2$, and in the target matrix $L$ all $n \cdot d =4$ entries are equal to $1$. Suppose that $\cG$ contains a single set, whose entries are marked with '*', i.e., $\cG = \{\{(1,1), (2,2)\}\}$. The bases of this $\cG$-matroid consist of $\{(1,1), (2,2)\}$ (in red) $-$ which is both in $\cG$ and $L$-perfect, and all non $L$-perfect subsets of cardinality $2 = L_{1,1}+L_{2,1}$,the sum of entries of the first column in $L$.
 			}
 	\end{figure}

 	A member in the $\cG$-matroid class is characterized by four parameters: $n,d,L$, and $\cG$. The first two parameter, $n$ and $d$, describe the ground set; namely, for some $n,d \in \N$, where $n,d \geq 2$, let $\gr_{n,d} = [n]^d$ be the {\em grid} of $n,d$, used as the ground set of the $\cG$-matroid. When clear from the context, we simply use $\gr = \gr_{n,d}$.  The third parameter is the target matrix $L \in \N^{n \times d}$. We assume that $L$ is simple-uniform (this is required to ensure that $\cG$-matroids  are indeed matroids). 

	\begin{definition}
	\label{def:SU'}
	For some $n,d \in \N$, where $n,d \geq 2$, a matrix $L \in \N^{n \times d}$ is {\em simple-uniform (SU)} if the following holds.
	\begin{enumerate}
		\item For all $i \in [n]$ and $j \in [d]$ it holds that $L_{i,j} \in \{0,1\ldots, n\}$. 
		\item For all $j \in [d]$ it holds that $2 \leq \sum_{i \in [n]} L_{i,j} \leq n^d-1$.
		\item For all $j_1,j_2 \in [d]$ it holds that $\sum_{i \in [n]} L_{i,j_1} = \sum_{i \in [n]} L_{i,j_2}$.
	\end{enumerate}
\end{definition}

 An SU matrix $L$ is associated with a collection of $L$-{\em perfect} subsets $S$ of $\gr$ which take exactly $L_{i,j}$ elements whose $j$-th entry equals to $i$. This notion of $L$-perfectness is used in the definition of $\cG$-matroid.  
 \begin{definition}
 	\label{def:perfect}
 	 Let $n,d \in \N$, where $n,d \geq 2$, and let 
 	$L \in \N^{n \times d}$ be an SU matrix.  We say that $S \subseteq \gr$ is $L$-{\em perfect} if for all $i \in [n]$ and $j \in [d]$ it holds that $\left|\left\{ \ve \in S \mid \ve_j = i\right\} \right| = L_{i,j}$.
 \end{definition}

 The last parameter describes a class of subsets in the grid: $\cG \subseteq 2^{\gr}$. As explained above, we define the set of bases of the $\cG$-{\em matroid} as all subsets of $\gr$ whose cardinality is equal to the sum of the first column of $L$ that are either (i) not $L$-perfect, or (ii) $L$-perfect and belong to $\cG$. Formally, for some ground set $E$ and $\cB \subseteq 2^E$, let $\cI(\cB) = \left\{ S \subseteq E \mid \exists B \in \cB : S \subseteq B \right\}$ be the {\em independent sets} of $\cB$. Define the $\cG$-matroid class as follows.

 \begin{definition}
 	\label{def:Fmatroid}
 Let $n,d \in \N$, where $n,d \geq 2$, 
 $L \in \N^{n \times d}$ an SU matrix, and $\cG \subseteq \gr$. 
 	Define the {\em bases} of $n,d,L$ and $\cG$ as
 	$\cX \subseteq 2^{\gr}$, where $S \subseteq \gr$ belongs to $\cX$ if and only if $|S| = \sum_{i \in [n]} L_{i,1}$ and 
 	one of the following holds. 
 	
 	\begin{itemize}
\item $S$ is {\bf not} $L$-perfect. 
\item $S$ is $L$-perfect {\bf and} $S \in \cG$. 
 	\end{itemize}
 	
 	Define the $\cG$-{\em matroid} of $n,d$ and $L$ as $M = (\gr,\cI(\cX))$. 
 \end{definition}

 Observe that $\cG$ can be an {\em arbitrary} set. This property is crucial for our reductions. Furthermore, note that $S$ could have more than $L_{i,j}$ elements whose $j$-th entry equals $i$.
To prove that $\cG$-matroids are indeed matroids, we use the following result of \cite{frank2011connections}, stated with our notations.

\begin{lemma}
	\label{lem:frank}
	\textnormal{[Theorem 5.3.5 in \cite{frank2011connections}]} Let $k \geq 2$ be an integer and $E$ a set of size at least $k$. Let $\cH = \{H_1,\ldots, H_t\}$ be a (possibly empty) set-system of proper subsets of $E$ in which every set $H_i$ has at least $k$ elements, and the intersection of any two of them has at most $k-2$ elements. Let 
	$$\cB_{\cH} = \left\{  B \subseteq E \Xspace \big|\Xspace |B| = k, B \not \subseteq H_i \Xspace \forall i \in [t] \right\}.$$
	Then, $\left(E,\cI(\cB_{\cH})\right)$ is a paving matroid. 
\end{lemma}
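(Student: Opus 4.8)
My plan is to first rewrite the family $\cI(\cB_{\cH})$ in a closed form and then verify the matroid axioms directly from it.

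The first step is to prove that
\[
\cI(\cB_{\cH}) = \{\, S \subseteq E : |S| \le k-1 \,\} \cup \{\, S \subseteq E : |S| = k \text{ and } S \not\subseteq H_i \text{ for all } i \in [t] \,\}.
\]
The inclusion ``$\supseteq$'' on the size-$k$ part is immediate from the definitions of $\cB_{\cH}$ and of $\cI(\cdot)$. The key observation for the rest is that, since $|H_i \cap H_j| \le k-2$ whenever $i \ne j$, every subset of $E$ of size at least $k-1$ is contained in \emph{at most one} member of $\cH$. To finish, fix $S \subseteq E$ with $|S| \le k-1$ and extend it to a set $S'$ with $S \subseteq S'$ and $|S'| = k-1$ (possible as $|E| \ge k-1$). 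If no member of $\cH$ contains $S'$, then for any $e \in E \setminus S'$ (nonempty as $|S'| < k \le |E|$) the set $S' + e$ has size $k$ and lies in no $H_i$, so $S' + e \in \cB_{\cH}$. Otherwise there is a unique $H^{\star} \in \cH$ with $S' \subseteq H^{\star}$; since $H^{\star} \subsetneq E$ we may choose $e \in E \setminus H^{\star}$, and then $S' + e$ has size $k$, is not contained in $H^{\star}$ (as $e \notin H^{\star}$), and is not contained in any other $H_j$ (as $S'+e \supseteq S'$ and $S' \not\subseteq H_j$); hence again $S' + e \in \cB_{\cH}$. In either case $S \subseteq S' \subseteq S'+e \in \cB_{\cH}$, so $S \in \cI(\cB_{\cH})$; in particular $\cB_{\cH} \ne \emptyset$, and the matroid-to-be will have rank exactly $k$.

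With the closed form in hand, I would check the matroid axioms. We have $\emptyset \in \cI(\cB_{\cH})$ since $0 \le k-1$, and the hereditary property holds because any subset of a set of size $\le k$ either has size $\le k-1$ or coincides with that size-$k$ set. For the exchange property, let $A, B \in \cI(\cB_{\cH})$ with $|A| > |B|$; then $|A| \le k$ forces $|B| \le k-1$. If $|B| \le k-2$, any $e \in A \setminus B$ works, since $|B+e| \le k-1$. If $|B| = k-1$ (so $|A| = k$ and $A \not\subseteq H_i$ for every $i$), then $B$ lies in at most one $H^{\star} \in \cH$: if there is no such $H^{\star}$, any $e \in A \setminus B$ gives $B+e \in \cB_{\cH}$; and if $H^{\star}$ exists, then $A \setminus B \not\subseteq H^{\star}$ --- otherwise $A = (A \cap B) \cup (A \setminus B) \subseteq H^{\star}$, contradicting $A \in \cI(\cB_{\cH})$ --- so picking $e \in (A \setminus B) \setminus H^{\star}$ gives $B+e \in \cB_{\cH}$. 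Finally, the resulting matroid is paving: its rank is $k$, and by the closed form every set of size smaller than $k$ is independent.

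The only mildly delicate point I anticipate is the ``at most one member of $\cH$'' observation and its repeated use to ensure a suitable extension element exists; once that is isolated, everything else is routine bookkeeping. Since the statement is quoted from \cite{frank2011connections}, one may of course simply cite it, but the argument above is short enough to include for completeness.
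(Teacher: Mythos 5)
Your proof is correct, and it necessarily takes a different route from the paper: the paper does not prove this lemma at all, but quotes it as Theorem 5.3.5 of Frank's book and uses it as a black box in the proof of the subsequent lemma on $\cG$-matroids. Your argument is a self-contained elementary verification: you derive the closed form of $\cI(\cB_{\cH})$, isolate the key observation that the intersection bound $|H_i \cap H_j| \le k-2$ forces every set of size at least $k-1$ to lie in at most one member of $\cH$, and then check nonemptiness, hereditariness, and exchange directly, concluding that the rank is $k$ and every set of size below $k$ is independent, i.e.\ the matroid is paving. The two delicate steps — extending a $(k-1)$-set to a member of $\cB_{\cH}$, and the exchange step when $|B| = k-1$ — are handled correctly, using uniqueness of the containing $H^{\star}$ together with, respectively, properness of $H^{\star}$ and the fact that $A \not\subseteq H^{\star}$, to produce the required element. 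Two cosmetic remarks: the inclusion of $\cI(\cB_{\cH})$ into your closed form is used but only implicitly justified (it is immediate, since every member of $\cB_{\cH}$ has size exactly $k$, so an independent set of size $k$ must itself be a member of $\cB_{\cH}$); and you never use the hypothesis that each $H_i$ has at least $k$ elements — this is not a flaw, since sets smaller than $k$ impose no constraint on $\cB_{\cH}$, and that hypothesis is only needed for the converse characterization of paving matroids in Frank's theorem. What your approach buys is a proof of exactly the implication the paper needs without appealing to an external reference; what the citation buys the authors is brevity and the full equivalence, of which only this direction is used.
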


Using \Cref{lem:frank}, we show that $\cG$-matroids are a subclass of paving matroids. 

\begin{lemma}
	\label{lem:mat}
Let $n,d \in \N$, where $n,d \geq 2$, $L \in \N^{n \times d}$ an \textnormal{SU} matrix, and $\cG \subseteq 2^{\gr}$. Then, the $\cG$-matroid of $n,d,L$ is a paving matroid. 
\end{lemma}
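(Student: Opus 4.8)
The plan is to exhibit the $\cG$-matroid of $n,d,L$ as an instance of the paving matroid construction from \Cref{lem:frank}, by defining an appropriate set-system $\cH$ of "hyperplanes to avoid". Set $k = \sum_{i\in[n]} L_{i,1}$, which by SU property~2 (and property~3, so the choice of column is immaterial) satisfies $2 \le k \le n^d - 1$; the ground set is $E = \gr$ with $|E| = n^d \ge k$. We need to produce proper subsets $H_1,\dots,H_t \subsetneq \gr$, each of size $\ge k$, pairwise intersecting in at most $k-2$ elements, such that a $k$-subset $B \subseteq \gr$ satisfies $B \not\subseteq H_i$ for all $i$ if and only if $B$ is a basis of the $\cG$-matroid, i.e. iff ($B$ is not $L$-perfect) or ($B$ is $L$-perfect and $B \in \cG$).

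First I would identify which $k$-subsets must be \emph{forbidden} (i.e. made to sit inside some $H_i$): precisely the $L$-perfect sets that are \emph{not} in $\cG$. So for each $L$-perfect set $S \subseteq \gr$ with $S \notin \cG$, I want an $H_S$ with $S \subseteq H_S$ but no "good" $k$-set inside $H_S$. The natural candidate is $H_S = S$ itself — but a singleton-in-the-system $H_S$ of size exactly $k$ fails the pairwise-intersection condition as soon as two $L$-perfect sets share $\ge k-1$ elements, which certainly happens (two $L$-perfect sets can differ in a single "swap"). The fix I would use is to enlarge each $H_S$ slightly in a direction that keeps it from containing any other relevant $k$-set. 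Concretely: pick any coordinate-value pair, say the value $1$ in dimension $d$; let $A = \{\ve \in \gr \mid \ve_d \ne 1\}$ be the "slab" of elements whose last coordinate is not $1$, and for each $L$-perfect $S \notin \cG$ set $H_S = A \cup S$. One checks $S \subseteq H_S$, and $|H_S| \ge |A| = (n-1)n^{d-1} \ge k$ (using $k \le n^d-1$ and $n\ge 2$, this needs a short estimate). The key point is that any $L$-perfect set $T$ with $T \subseteq H_S$ must have $|T \cap \{\ve_d = 1\}| \le |S \cap \{\ve_d=1\}| = L_{1,d}$ with equality forced by $L$-perfectness, hence $T \cap \{\ve_d = 1\} = S \cap \{\ve_d=1\}$; combined with $T \setminus A \subseteq S \setminus A$ this pins $T$ down enough that $T = S$. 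So no $L$-perfect set other than $S$ lies in $H_S$, and in particular $H_S$ contains no basis.

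Then I would verify the three hypotheses of \Cref{lem:frank}. (a) Each $H_S$ is a proper subset of $\gr$: it misses all of $(\{\ve_d=1\}) \setminus S$, which is nonempty since $|\{\ve_d=1\}| = n^{d-1} > L_{1,d} = |S \cap \{\ve_d=1\}|$ when $n^{d-1} > L_{1,d}$ — and if $L_{1,d} = n^{d-1}$ one argues $H_S$ still omits points by a counting argument on another coordinate, or one simply notes $|H_S| \le n^d-1$ via $|S \setminus A| = L_{1,d}$ and $k\le n^d-1$. (b) $|H_S|\ge k$: the estimate above. (c) For distinct $L$-perfect $S, S' \notin \cG$, $|H_S \cap H_{S'}| \le k-2$: here $H_S \cap H_{S'} = A \cup (S \cap S')$ restricted appropriately — actually $H_S\cap H_{S'} = A \cup (S\cap S')$ intersected... more carefully, $H_S \cap H_{S'} = (A \cup S)\cap(A\cup S') = A \cup (S\cap S'\cap \ldots)$; since $S\setminus A$ and $S'\setminus A$ are both subsets of $\{\ve_d=1\}$ of size $L_{1,d}$ and are distinct as \emph{sets} would give... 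I need that $(S\setminus A)\cap(S'\setminus A)$ has size $\le L_{1,d}-1$ \emph{whenever} $S\setminus A \ne S'\setminus A$; but it's possible $S\setminus A = S'\setminus A$ while $S\ne S'$. To handle this cleanly I would instead choose, for the enlargement slab, a coordinate pair on which \emph{every} pair of distinct $L$-perfect sets differs — but no single such coordinate need exist. The robust remedy, and the step I expect to be the main obstacle, is to take $H_S = \gr \setminus (\{f_S\} )$... no — rather, to directly bound $|H_S \cap H_{S'}|$ by showing it cannot contain $k$ elements forming... Actually the clean statement is: $H_S\cap H_{S'}$ contains at most one $L$-perfect set (namely none, or forces $S=S'$), but we need the cruder metric bound $k-2$. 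I would argue: if $|H_S\cap H_{S'}| \ge k-1$ then $H_S \cap H_{S'}$ contains a $k$-subset missing at most one element of $H_S$; pushing on $L$-perfectness constraints as above shows $S$ and $S'$ would have to agree on too much, forcing $S = S'$, contradiction. Making this counting tight — i.e. genuinely getting $\le k-2$ rather than $\le k-1$ — is the delicate part, and I would handle it by choosing the enlargement so that $H_S$ "uses up" two degrees of freedom relative to any competitor, e.g. by deleting from $\gr$ \emph{two} suitably chosen elements outside $S$ rather than working with the slab $A$; the SU conditions $\sum_i L_{i,j} \le n^d - 1$ and $L_{i,j}\le n$ guarantee enough room to find such elements. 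Once the three hypotheses are checked, \Cref{lem:frank} gives that $(\gr, \cI(\cB_\cH))$ is a paving matroid, and by construction $\cB_\cH$ equals the set of bases $\cX$ from \Cref{def:Fmatroid}, so $(\gr, \cI(\cX))$ is a paving matroid, completing the proof.
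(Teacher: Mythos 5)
There is a genuine gap, and it originates in a false premise at the very first step. You discard the natural choice $H_S=S$ (one set $H_S$ for each $L$-perfect $S\notin\cG$) because you believe two distinct $L$-perfect sets can share $k-1$ elements, i.e.\ ``differ in a single swap.'' They cannot: if $H_2=(H_1\setminus\{\ve^1\})\cup\{\ve^2\}$ with $\ve^1\neq\ve^2$, pick a coordinate $j$ with $\ve^1_j\neq\ve^2_j$ and set $i=\ve^1_j$; then $H_2$ has only $L_{i,j}-1$ elements with $j$-th entry $i$, so it is not $L$-perfect. Pushed one step further (as in the paper's Claim~\ref{claim:Hprop2}), this argument shows that any two distinct $L$-perfect sets intersect in at most $k-2$ elements: the missing element $\ve^1$ must be compensated in $H_2$ by some $\ve^*\notin H_1$ with $\ve^*_j=i$, and $\ve^*\neq\ve^2$, so $|H_2\setminus H_1|\geq 2$. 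Hence the un-enlarged family $\cH=\{L\text{-perfect }H\subseteq\gr\mid H\notin\cG\}$ already satisfies all hypotheses of \Cref{lem:frank} (each $H$ has size exactly $k$, is proper since $k\leq n^d-1$, and pairwise intersections are $\leq k-2$), and $\cB_{\cH}$ coincides with the bases $\cX$ of \Cref{def:Fmatroid}. This is precisely the paper's proof; the ``delicate part'' you anticipate does not exist.

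The enlargement $H_S=A\cup S$ you introduce to work around the nonexistent obstacle breaks the construction in two ways. First, the slab $A=\{\ve\in\gr\mid \ve_d\neq 1\}$ is contained in every $H_S$, so $|H_S\cap H_{S'}|\geq |A|=(n-1)n^{d-1}$, which exceeds $k-2$ whenever $k$ is small (e.g.\ $n=d=2$, $k=2$ gives $|A|=2>0$), so the pairwise-intersection hypothesis of \Cref{lem:frank} fails outright rather than being merely ``delicate.'' Second, and more fundamentally, even where the hypotheses held, $\cB_{\cH}$ would no longer equal $\cX$: $H_S$ contains many non-$L$-perfect $k$-subsets (e.g.\ $k$-subsets drawn from $A\cup S$ that violate some limit), and all of these are bases of the $\cG$-matroid by \Cref{def:Fmatroid}, yet they would be excluded from $\cB_{\cH}$ because they lie inside $H_S$. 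So your closing claim that ``$H_S$ contains no basis'' and that ``$\cB_\cH$ equals the set of bases $\cX$'' is false for the enlarged sets, and the proposed further repairs (deleting two elements, choosing a special coordinate) are left unspecified and would face the same two problems. The fix is simply to drop the enlargement and prove the $k-2$ intersection bound directly for the $L$-perfect sets themselves.
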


\begin{proof}
	Let 
	$$\cH = \left\{ H \in 2^{\gr} \setminus \cG \Xspace \Big| \Xspace H \textnormal{ is $L$-perfect}\right\}$$ 
	be the collection of $L$-perfect subsets of $\gr$ that are not in $\cG$.
	We show that $\cH$ satisfies the properties of \Cref{lem:frank}. 
Let $k = \sum_{i \in [n]} L_{i,1}$. Observe that $2 \leq k \leq n^d-1$, since $L$ is an SU matrix.  We use several auxiliary claims.

	\begin{claim}
	\label{claim:Hprop1}
	For all $H \in \cH$ it holds that $|H| = k$ and $|H| < |\gr|$.   
\end{claim}

\begin{claimproof}
	As every $H \in \cH$ is $L$-perfect, we have that
	\begin{equation}
		\label{eq:proper}
		|H| = \sum_{i \in [n]} \left|\left\{ \ve \in H \mid \ve_1 = i\right\} \right| = \sum_{i \in [n]} L_{i,1} = k \leq n^d-1 < |\gr|. 
	\end{equation} 
	The first inequality holds since $L$ is an SU matrix. 
\end{claimproof}
	
		\begin{claim}
		\label{claim:Hprop2}
		For all $H_1,H_2 \in \cH$ it holds that $|H_1 \cap H_2| \leq k-2$.   
	\end{claim}
	
	\begin{claimproof}
		Let $H_1, H_2 \in \cH$ such that $H_1 \neq H_2$. Since $|H_1| = |H_2| = k$, 
	there is $\ve^1 \in H_1 \setminus H_2$ and $\ve^2 \in H_2 \setminus H_1$. 
	Hence, there is $j \in [d]$ such that $\ve^1_j \neq \ve^2_j$, where recall that $\ve^1_j, \ve^2_j$ are the $j$-th entries of $\ve^1$ and $\ve^2$, respectively. Let $i = \ve^1_j$.
	As $H_1,H_2 \in \cH$, $H_1$ and $H_2$ are $L$-perfect we have,
	\begin{equation}
		\label{eq:e1}
		\left|\left\{ \ve \in H_1 \mid \ve_j = i\right\} \right| =  L_{i,j} = \left|\left\{ \ve \in H_2 \mid \ve_j = i\right\} \right| .
	\end{equation} 
	
	By \eqref{eq:e1} and since $\ve^1 \in \left\{ \ve \in H_1 \mid \ve_j = i\right\} \setminus \left\{ \ve \in H_2 \mid \ve_j = i\right\}$, there is $\ve^{*} \in \left\{ \ve \in H_2 \mid \ve_j = i\right\}$ such that $\ve^* \notin \left\{ \ve \in H_1 \mid \ve_j = i\right\}$. Observe that $\ve^{*}_j = i = \ve^1_j \neq \ve^2_j$; thus, $\ve^{*} \neq \ve^2$. Finally, we note that $\ve^{*},\ve^2 \in H_2 \setminus H_1$. Hence, as $|H_1| = |H_2| = k$, it follows that 
	$|H_1 \cap H_2| \leq k-2$. 
	\end{claimproof}

	By \Cref{claim:Hprop1,claim:Hprop2} we have that $\cH$ satisfies the properties of \Cref{lem:frank}. Define
	$$\cB_{\cH} = \left\{  B \subseteq \gr \Xspace \big|\Xspace |B| = k, B \not \subseteq H \Xspace \forall H \in \cH \right\}.$$
	 Then,
	by \Cref{lem:frank}, $(\gr,\cI(\cB_{\cH}))$ is a paving matroid.
	Let $\cX$ be the bases of $n,d,L$ and $\cG$ (see \Cref{def:Fmatroid}). Observe that, by \Cref{claim:Hprop1}, $|H| = k$ for all $H \in \cH$; thus, for all $B \subseteq \gr$ it holds that $B \in \cB_{\cH}$ if and only if $|B| = k$ and $B \notin \cH$. We conclude with the following auxiliary claim.
	
		\begin{claim}
		\label{claim:X=Bh}
		$\cB_{\cH} = \cX$.
	\end{claim}
	
	\begin{claimproof}
		Let $B \in \cB_{\cH}$. Then, $|B| = k$ and $B \notin \cH$; therefore, $|B| = k$ and either (i) $B$ is not $L$-perfect, or (ii) $B$ is $L$-perfect and $B \in \cG$. Hence, by \Cref{def:Fmatroid} $B \in \cX$. For the second direction, let $S \in \cX$. Then, $|S| = k$, and either (i) $S$ is not $L$-perfect or (ii) $S$ is $L$-perfect and $S \in \cG$. Therefore, $|S| = k$ and $S \notin \cH$, implying that $S \in \cB_{\cH}$. We conclude that 
		$\cB_{\cH} = \cX$. 
	\end{claimproof}

	By \Cref{claim:X=Bh} we have that $\cI(\cX) = \cI(\cB_{\cH})$. Consequently,
	the $\cG$-matroid $(\gr,\cI(\cX))$ is indeed a paving matroid by \Cref{lem:frank}.     
\end{proof}

  Let $n,d \in \N$, where $n,d \geq 2$, and let $L \in \N^{n \times d}$ be an SU matrix. Observe that $L$-perfect subsets can be cast as the intersection of bases of $d$ partition matroids.
  Specifically, for all $j \in [d]$, let
  \begin{equation}
  	\label{eq:constraints}
  	\cI_{L,j} = \left\{ S \subseteq \gr \Xspace\Big|\Xspace \left|\left\{ \ve \in S \mid \ve_j = i\right\} \right|  \leq L_{i,j} \Xspace\forall i \in [n]\right\},
  \end{equation}
  and let $(\gr,\cI_{L,j})$ be the $(L,j)$-{\em partition matroid} of $\gr$. That is, we define 
  one matroid for each entry $j \in [d]$, and the partition of $\gr $ corresponding to this $j$-th matroid is induced by all values $\{1,\ldots,n\}$ in the $j$-th entry (of elements in $\gr$). 
 The following is a fundamental property of partition matroids. We give the proof for completeness in \Cref{sec:proofs}. 
 
 	\begin{restatable}{lemma}{partitionBases}
 	\label{claim:partitionBases}
 	Let $n,d \in \N$, where $n,d \geq 2$, and let $L \in \N^{n \times d}$ be an SU matrix. Then, for all $j \in [d]$ it holds that 
 	\begin{equation*}
 		\label{eq:basesLj}
 		\textnormal{\textsf{bases}}(\cI_{L,j}) = \left\{ B \subseteq \gr \Xspace\Big|\Xspace \left|\left\{ \ve \in B \mid \ve_j = i\right\} \right|  = L_{i,j} \Xspace\forall i \in [n]\right\}.
 	\end{equation*}
 \end{restatable}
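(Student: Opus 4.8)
The plan is to prove both inclusions of the set equality. Write $\cI_{L,j}$ as in \eqref{eq:constraints} and recall that $(\gr, \cI_{L,j})$ is a partition matroid: the ground set $\gr$ is partitioned into the blocks $P_i = \{\ve \in \gr \mid \ve_j = i\}$ for $i \in [n]$, and a set $S$ is independent iff $|S \cap P_i| \le L_{i,j}$ for every $i$. Set $r = \sum_{i \in [n]} L_{i,j}$; note that by Definition~\ref{def:SU'}(3) this value is independent of $j$ and equals $\sum_{i \in [n]} L_{i,1}$. The claim is that the bases are exactly those sets meeting every block bound with equality.

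First I would show that every set $B$ with $|B \cap P_i| = L_{i,j}$ for all $i$ is a basis. Such a $B$ is clearly independent, and $|B| = \sum_i |B \cap P_i| = \sum_i L_{i,j} = r$. On the other hand, any independent set $S$ satisfies $|S| = \sum_i |S \cap P_i| \le \sum_i L_{i,j} = r$, so $r$ is an upper bound on the size of any independent set; since $B$ attains it, $B$ has maximum cardinality, i.e.\ $B$ is a basis. (Here I use that each $P_i$ has at least $L_{i,j}$ elements, which holds since $|P_i| = n^{d-1} \ge n \ge L_{i,j}$ by Definition~\ref{def:SU'}(1) and $d \ge 2$ — this guarantees such a $B$ exists, though existence is not strictly needed for the inclusion direction.)

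Conversely, I would show any basis $B$ meets every bound with equality. Let $B \in \textnormal{\textsf{bases}}(\cI_{L,j})$, so $B$ is independent and $|B| = \max_{S \in \cI_{L,j}} |S|$. Since $B$ is independent, $|B \cap P_i| \le L_{i,j}$ for every $i$. Suppose toward contradiction that $|B \cap P_i| < L_{i,j}$ for some $i$. Because $|P_i| = n^{d-1} > L_{i,j} \ge |B \cap P_i|$ (using $n \ge 2$, $d \ge 2$, so $n^{d-1} \ge n \ge L_{i,j}$, with a small case check to make the inequality strict — or simply observe $|B\cap P_i| < L_{i,j} \le |P_i|$), there is an element $\ve \in P_i \setminus B$. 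Then $B + \ve$ still satisfies all block bounds (the bound for block $i$ becomes $|B \cap P_i| + 1 \le L_{i,j}$, and the other blocks are unchanged), so $B + \ve \in \cI_{L,j}$ with $|B + \ve| > |B|$, contradicting maximality of $B$. Hence $|B \cap P_i| = L_{i,j}$ for all $i$, which is exactly membership in the right-hand set. Combining the two inclusions gives the lemma.

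The main obstacle, such as it is, is purely bookkeeping: one must be careful that the partition blocks $P_i$ are genuinely nonempty and large enough to supply the extra element $\ve$ in the contradiction step, which is where Definition~\ref{def:SU'}(1) (bounds at most $n$) together with $|P_i| = n^{d-1}$ and $d \ge 2$ is invoked; and one should note Definition~\ref{def:SU'}(3) only to remark the common value $r$ (not logically essential for this particular lemma, since each $j$ is treated separately). Everything else is the standard characterization of bases of a partition matroid.
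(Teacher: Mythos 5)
Your proof is correct and takes essentially the same route as the paper: the inclusion $\textnormal{\textsf{bases}}(\cI_{L,j})\subseteq\{B \mid |B\cap P_i|=L_{i,j}\ \forall i\}$ is the identical contradiction argument (a deficient block satisfies $|B\cap P_i| < L_{i,j}\leq n\leq n^{d-1}=|P_i|$, so an element of $P_i\setminus B$ can be added while preserving independence, contradicting that $B$ has maximum cardinality). The only cosmetic difference is in the reverse inclusion, where you argue via the counting bound $|S|=\sum_i |S\cap P_i|\leq \sum_i L_{i,j}$ that a set meeting every bound with equality attains the maximum possible size, while the paper instead shows such a set is a maximal independent set; both are valid.
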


 We give below some core properties of the $\cG$-matroid class. The next lemma follows directly from the definition of $L$-perfect subsets.

 \begin{lemma}
 	\label{obs:L}
 	Let $n,d \in \N$, where $n,d \geq 2$, and let $L \in \N^{n \times d}$ be an SU matrix. Then, $S \subseteq \gr$ is $L$-perfect if and only if  $S \in \textnormal{\textsf{bases}}(\cI_{L,1},\ldots, \cI_{L,d})$.  
 \end{lemma}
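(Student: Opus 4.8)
The statement to prove, \Cref{obs:L}, says: for an SU matrix $L\in\N^{n\times d}$, a set $S\subseteq\gr$ is $L$-perfect if and only if $S\in\textnormal{\textsf{bases}}(\cI_{L,1},\ldots,\cI_{L,d})$. The plan is to unwind the definitions and use \Cref{claim:partitionBases} (already proven earlier in the excerpt), which characterizes the bases of each individual partition matroid $\cI_{L,j}$ exactly.

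First I would recall that, by definition, $\textnormal{\textsf{bases}}(\cI_{L,1},\ldots,\cI_{L,d}) = \textnormal{\textsf{bases}}(\cI_{L,1}) \cap \cdots \cap \textnormal{\textsf{bases}}(\cI_{L,d})$, so it suffices to show that $S$ is $L$-perfect if and only if $S\in\textnormal{\textsf{bases}}(\cI_{L,j})$ for every $j\in[d]$. By \Cref{claim:partitionBases}, for each fixed $j$ we have $S\in\textnormal{\textsf{bases}}(\cI_{L,j})$ precisely when $\left|\left\{\ve\in S\mid \ve_j = i\right\}\right| = L_{i,j}$ for all $i\in[n]$. Intersecting over all $j\in[d]$, membership of $S$ in $\textnormal{\textsf{bases}}(\cI_{L,1},\ldots,\cI_{L,d})$ is equivalent to the condition that $\left|\left\{\ve\in S\mid \ve_j = i\right\}\right| = L_{i,j}$ holds for all $i\in[n]$ and all $j\in[d]$. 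But this is exactly the defining condition of $L$-perfectness in \Cref{def:perfect}. Hence the two sides coincide, and both directions of the iff follow simultaneously.

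Since both characterizations reduce literally to the same quantifier-prefixed family of cardinality equalities, there is essentially no obstacle here — the only care needed is to invoke \Cref{claim:partitionBases} once per coordinate and to note that the definition of $\textnormal{\textsf{bases}}(\cdot,\ldots,\cdot)$ for several matroids is the intersection of the individual base families (as set up in the Preliminaries). I would write this as a two- or three-line proof, perhaps displaying the chain of equivalences:
\[
S \text{ is $L$-perfect}
\iff \forall i\in[n],\,\forall j\in[d]:\ \bigl|\{\ve\in S\mid \ve_j=i\}\bigr| = L_{i,j}
\iff \forall j\in[d]:\ S\in\textnormal{\textsf{bases}}(\cI_{L,j})
\iff S\in\textnormal{\textsf{bases}}(\cI_{L,1},\ldots,\cI_{L,d}),
\]
where the middle equivalence is \Cref{claim:partitionBases} applied coordinatewise and the last is the definition of the joint base family. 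If one wants to be fully careful, one should also note that each $(\gr,\cI_{L,j})$ is genuinely a matroid (a partition matroid), so that ``bases'' is well-defined — but this is standard and was already used in stating \Cref{claim:partitionBases}.
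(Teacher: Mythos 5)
Your proposal is correct and follows exactly the route the paper takes: the paper's proof is a one-liner citing \Cref{claim:partitionBases} together with \Cref{def:perfect}, which is precisely your chain of equivalences (with the intersection definition of $\textnormal{\textsf{bases}}(\cI_{L,1},\ldots,\cI_{L,d})$ left implicit). Your version merely spells out the coordinatewise application of \Cref{claim:partitionBases} that the paper leaves to the reader.
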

 
 \begin{proof}

 	By \Cref{claim:partitionBases} and~\Cref{def:perfect}, $S$ is $L$-perfect if and only if $S \in \textnormal{\textsf{bases}}(\cI_{L,1},\ldots, \cI_{L,d})$.  
 \end{proof}
 
 Next, we show that common bases of all $(L,j)$-partition matroids intersected with an $\cG$-matroid must be in $\cG$ as well (for every selection of $L$ and $\cG$). 
 
 \begin{lemma}
 	\label{lem:properties}
 	Let $n,d \in \N$, where $n,d \geq 2$, and let $L \in \N^{n \times d}$ be an SU matrix. Also, let $\cG \subseteq 2^{\gr}$, and denote by $\cX$ the set of bases of $n,d,L$, and $\cG$. Then, for all $B \in \cX$ such that  $B \in \textnormal{\textsf{bases}}(\cI_{L,1},\ldots, \cI_{L,d})$, it holds that $B \in \cG$. 
 \end{lemma}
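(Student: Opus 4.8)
The plan is to read off the conclusion directly from the two structural lemmas already proved about the $\cG$-matroid class, namely \Cref{obs:L} (which characterizes $L$-perfect sets as the common bases of the partition matroids $\cI_{L,1},\ldots,\cI_{L,d}$) and \Cref{def:Fmatroid} (which tells us exactly which sets lie in $\cX$). Concretely, fix $B \in \cX$ with $B \in \textnormal{\textsf{bases}}(\cI_{L,1},\ldots,\cI_{L,d})$; the goal is to show $B\in\cG$.

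The steps, in order, are as follows. First I would apply \Cref{obs:L} to the hypothesis $B \in \textnormal{\textsf{bases}}(\cI_{L,1},\ldots,\cI_{L,d})$: this lemma states that a subset of $\gr$ is $L$-perfect \emph{if and only if} it is a common basis of all $d$ partition matroids, so we conclude that $B$ is $L$-perfect. Second, I would invoke \Cref{def:Fmatroid}, under which $B\in\cX$ means $|B| = \sum_{i\in[n]} L_{i,1}$ and, moreover, \emph{either} $B$ is not $L$-perfect \emph{or} $B$ is $L$-perfect and $B\in\cG$. The first alternative is excluded by the previous step, so the second must hold, giving $B\in\cG$, as required.

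I do not expect any genuine obstacle: the statement is essentially an unwinding of \Cref{def:Fmatroid} once \Cref{obs:L} has translated ``common basis of the partition matroids'' into ``$L$-perfect''. The only point worth a moment's attention is that the cardinality clause in \Cref{def:Fmatroid} is automatically compatible with $L$-perfectness (every $L$-perfect set has cardinality $\sum_{i\in[n]} L_{i,1}$, counting elements according to their first coordinate, exactly as in the computation \eqref{eq:proper} in the proof of \Cref{lem:mat}), so no inconsistency arises and the two-line argument goes through verbatim.
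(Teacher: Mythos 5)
Your proposal is correct and follows the paper's own argument exactly: apply \Cref{obs:L} to deduce that $B$ is $L$-perfect, then read off $B\in\cG$ from \Cref{def:Fmatroid}. Nothing further is needed.
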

 
 \begin{proof}
 	 As $B \in \textnormal{\textsf{bases}}(\cI_{L,1},\ldots, \cI_{L,d})$, by \Cref{obs:L} $B$ is $L$-perfect. Since $B \in \cX$ and $B$ is $L$-perfect, by \Cref{def:Fmatroid}  $B \in \cG$.
 \end{proof}
 
 The next result will also be useful for our reductions. 
  \begin{lemma}
 	\label{lem:properties2}
 	Let $n,d \in  \N$ such that $n,d \geq 2$, and let $\cG \subseteq 2^{\gr}$. Also, let 
 	$S \in \cG$ such that $2 \leq |S| \leq |\gr|-1$. Then, there is an \textnormal{SU} matrix $L \in \N^{n \times d}$ such that $S \in \textnormal{\textsf{bases}}(\cI(\cX), \cI_{L,1},\ldots, \cI_{L,d})$, where $\cX$ are the bases of $n,d,L$ and $\cG$. 
 	 
 \end{lemma}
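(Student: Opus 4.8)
The plan is to read the matrix $L$ directly off $S$. Define
$L_{i,j} := \left|\left\{\ve \in S \mid \ve_j = i\right\}\right|$ for every $i \in [n]$ and $j \in [d]$, so that $L_{i,j}$ simply records how many elements of $S$ carry value $i$ in coordinate $j$. With this choice $S$ is $L$-perfect by construction: \Cref{def:perfect} demands exactly $\left|\left\{\ve \in S \mid \ve_j = i\right\}\right| = L_{i,j}$ for all $i,j$, which now holds by the very definition of $L$. Note there is no freedom here — if we want $S$ to be a basis of each $(L,j)$-partition matroid, \Cref{claim:partitionBases} forces $L$ to be this matrix.

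Next I would check that $L$ is simple-uniform in the sense of \Cref{def:SU'}. Condition 1 holds since each $L_{i,j}$ counts elements of $S$ lying inside a single part of the coordinate-$j$ partition of $\gr$, so $L_{i,j}$ is bounded by the size of that part and hence lies in the range permitted by the definition. For Conditions 2 and 3, observe that for every fixed $j$ each element of $S$ contributes to exactly one of the sets $\left\{\ve \in S \mid \ve_j = i\right\}$, so $\sum_{i \in [n]} L_{i,j} = |S|$; this common value is independent of $j$, which gives Condition 3, and by the hypothesis $2 \le |S| \le |\gr| - 1 = n^d - 1$ it satisfies $2 \le \sum_{i \in [n]} L_{i,j} \le n^d - 1$, which gives Condition 2. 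In particular the $\cG$-matroid $(\gr, \cI(\cX))$ of $n,d,L$ is well defined, and by \Cref{lem:mat} it is a paving matroid.

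Finally I would assemble the conclusion from the structural lemmas already proved. Since $S$ is $L$-perfect, \Cref{obs:L} gives $S \in \bases(\cI_{L,1}, \ldots, \cI_{L,d})$. Since $S$ is $L$-perfect and $S \in \cG$, \Cref{def:Fmatroid} places $S$ in the base set $\cX$ of $n,d,L$ and $\cG$; and because every member of $\cX$ has cardinality $\sum_{i \in [n]} L_{i,1} = |S|$, the family $\cX$ is precisely the set of maximum-cardinality members of $\cI(\cX)$, i.e. $\cX = \bases(\cI(\cX))$, whence $S \in \bases(\cI(\cX))$. Intersecting the two memberships yields $S \in \bases(\cI(\cX)) \cap \bigcap_{j \in [d]} \bases(\cI_{L,j}) = \bases(\cI(\cX), \cI_{L,1}, \ldots, \cI_{L,d})$, which is the assertion.

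I do not expect a genuine obstacle: this is essentially bookkeeping with the definitions of $L$-perfectness and of the $\cG$-matroid. The one place that deserves care is the simple-uniform verification — in particular that the entrywise bounds of Condition 1 and the column-sum bounds of Condition 2 are met. Condition 2 is exactly where the hypothesis $2 \le |S| \le |\gr| - 1$ is consumed, and it is also precisely what \Cref{lem:mat} requires in order to conclude that $(\gr,\cI(\cX))$ is a matroid at all. A secondary, essentially trivial point worth stating explicitly is that, since all members of $\cX$ share the cardinality $\sum_{i \in [n]} L_{i,1}$, membership $S \in \cX$ does upgrade to $S \in \bases(\cI(\cX))$.
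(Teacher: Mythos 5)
Your proof is essentially the paper's: you define $L_{i,j}=\left|\left\{\ve\in S\mid \ve_j=i\right\}\right|$, verify simple-uniformity using $2\le|S|\le|\gr|-1$ (which gives Conditions 2 and 3 since every column sums to $|S|$), and then combine \Cref{obs:L} with \Cref{def:Fmatroid} to conclude $S\in\bases(\cI(\cX),\cI_{L,1},\ldots,\cI_{L,d})$, exactly as in the paper. The only difference is cosmetic: you explicitly address Condition 1 of \Cref{def:SU'}, but your justification (``bounded by the size of the part'') only yields $L_{i,j}\le n^{d-1}$ rather than the bound $L_{i,j}\le n$ that the definition literally states for $d\ge 3$; the paper's own proof skips this condition entirely, so this is a shared looseness inherited from the definition rather than a gap particular to your argument.
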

 
 \begin{proof}
 	For all $j \in [d]$  and $i \in [n]$ define $L_{i,j} = \left|\left\{ \ve \in S \mid \ve_j = i\right\} \right|$.  
 	As for all $j \in [d]$ it holds that $\left(\left\{ \ve \in S \mid \ve_j = i\right\} \right)_{i \in [n]}$ is a partition of $S$, for all $j \in [d]$ we have that
 	\begin{equation}
 		\label{eq:Lij}
 		\sum_{i \in [n]} L_{i,j} = \sum_{i \in [n]} \left|\left\{ \ve \in S \mid \ve_j = i\right\} \right| = |S|.
 	\end{equation} 
 	
 	As $2 \leq |S| \leq |\gr|-1$, we have that $2 \leq \sum_{i \in [n]} L_{i,1} \leq |\gr|-1 = n^d-1$. Moreover, by \eqref{eq:Lij}, for all $j_1,j_2 \in [d]$ it holds that 
 	$\sum_{i \in [n]} L_{i,j_1} = |S| = \sum_{i \in [n]} L_{i,j_2}$. 
 	We conclude that $L$ is an SU matrix. 
 	For all $j \in [d]$ it holds that $\left|\left\{ \ve \in S \mid \ve_j = i\right\} \right|  = L_{i,j}$; thus, $S$ is $L$-perfect. By \Cref{obs:L} $S \in \textnormal{\textsf{bases}}(\cI_{L,1},\ldots, \cI_{L,d})$. Finally, since $S$ is $L$-perfect and $S \in \cG$, we have that $S \in \cI(\cX)$ by \Cref{def:Fmatroid}. Hence, $S \in \textnormal{\textsf{bases}}(\cI(\cX),\cI_{L,1},\ldots, \cI_{L,d})$ as required.  
 \end{proof}
 
 Next, we show that we can easily obtain a membership oracle for $\cG$-matroids given a membership oracle for $\cG$ itself.

 \begin{lemma}
 	\label{obs:oracle}
 	Let $n,d \in  \N$ such that $n,d \geq 2$, $L \in \N^{n \times d}$ an SU matrix, and $\cG \subseteq 2^{\gr}$. Given a membership oracle $H$ for $\cG$, we can construct a membership oracle $T$ for the $\cG$-matroid of $n,d$ and $L$ that satisfies the following properties. 
 	\begin{enumerate}
 		\item 
 		Each query to $T$ requires a single query to $H$.
 		\item $T$ performs $O\left(n^d\right)$ operations per query.  
 	\end{enumerate}
 \end{lemma}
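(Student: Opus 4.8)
The plan is to exhibit $T$ explicitly as a short decision procedure that, on a query $S\subseteq\gr$, decides whether $S\in\cI(\cX)$ by comparing $|S|$ with the rank of the $\cG$-matroid and, in the single borderline case, deferring to $H$.

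First I would invoke \Cref{lem:mat}: the $\cG$-matroid $M=(\gr,\cI(\cX))$ is a paving matroid, and from \Cref{def:Fmatroid} together with the identification $\cB_{\cH}=\cX$ established in the proof of \Cref{lem:mat} (namely \Cref{claim:X=Bh}), the bases of $M$ are exactly the members of $\cX$, each of cardinality $k:=\sum_{i\in[n]}L_{i,1}$; hence $M$ has rank $k$. This immediately yields a three-way characterization of $\cI(\cX)$: if $|S|<k$ then $S\in\cI(\cX)$ (paving matroid of rank $k$); if $|S|>k$ then $S\notin\cI(\cX)$; and if $|S|=k$ then, since every element of $\cX$ has size $k$, we have $S\in\cI(\cX)$ if and only if $S\in\cX$, i.e.\ if and only if $S$ is not $L$-perfect, or $S$ is $L$-perfect and $S\in\cG$.

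Next I would turn this characterization into the algorithm $T$. On input $S$: compute $|S|$ and $k=\sum_{i\in[n]}L_{i,1}$; if $|S|<k$ return $\textnormal{\textsf{true}}$; if $|S|>k$ return $\textnormal{\textsf{false}}$; otherwise ($|S|=k$) test $L$-perfectness of $S$ by computing, for each $i\in[n]$ and $j\in[d]$, the count $|\{\ve\in S:\ve_j=i\}|$ and comparing it with $L_{i,j}$ as in \Cref{def:perfect}; if some comparison fails, $S$ is not $L$-perfect, so return $\textnormal{\textsf{true}}$; if all comparisons succeed ($S$ is $L$-perfect), query $H$ on $S$ and return the answer of $H$. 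By inspection $T$ consults $H$ at most once per query, and only in the last case, which establishes property~1. For property~2, note that $k$ is obtained by summing $n$ entries of $L$, and the $L$-perfectness test is carried out by one scan over $S$ (of size at most $|\gr|=n^d$) that tallies the $n\cdot d$ counters $|\{\ve\in S:\ve_j=i\}|$, followed by $n\cdot d$ comparisons against $L$; all of this amounts to $O(n^d)$ elementary operations on the grid.

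I do not expect a genuine obstacle here; the argument is a routine verification. The only two points that require a little care are: (a) justifying that a size-$k$ subset of $\gr$ is independent in $M$ exactly when it lies in $\cX$ — which is immediate once $M$ is known, via \Cref{lem:mat} and its proof, to be a paving matroid of rank $k$ whose set of bases is precisely $\cX$; and (b) confirming that the $L$-perfectness check fits within the claimed $O(n^d)$ operations, for which the single-pass counting described above suffices.
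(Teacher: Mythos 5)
Your proposal is correct and follows essentially the same route as the paper: you define the same three-way oracle (compare $|S|$ with $k=\sum_{i\in[n]}L_{i,1}$, test $L$-perfectness, and defer to $H$ only in the $|S|=k$ case), justify the $|S|<k$ case via \Cref{lem:mat} (paving matroid), and bound the work by the $O(n^d)$ perfectness check with at most one query to $H$. The only difference is cosmetic — you spell out via \Cref{claim:X=Bh} that size-$k$ independent sets are exactly the members of $\cX$, a point the paper leaves implicit.
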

 
 \begin{proof}
 	Let $\cX$ be the bases of $n,d,L$ and $\cG$. Define a membership oracle $T$ for the $\cG$-matroid of $n,d$ and $L$ such that for any $S \subseteq \gr$ the oracle $T$ answers as follows.
 	\begin{itemize}
 		\item If $|S| < \sum_{i \in [n]} L_{i,1}$ then $T$ returns that $S \in \cI(\cX)$.
 		
 		\item If $|S| > \sum_{i \in [n]} L_{i,1}$ then $T$ returns that $S \notin \cI(\cX)$.
 		
 		\item If $|S| = \sum_{i \in [n]} L_{i,1}$ then $T$ checks if $S$ is $L$-perfect, queries the oracle $H$, and answers that $S \in \cI(\cX)$ if and only if one of the following conditions hold.
 		\begin{itemize}
 			\item $S$ is not $L$-perfect.
 			\item $S$ is $L$-perfect and $H$ returns that $S \in \cG$.
 		\end{itemize}
 	\end{itemize} Clearly, each query of $T$ requires at most one query to $H$. In addition, for all $S \subseteq \gr$ determining if $S$ is $L$-perfect can be done in time $O \left(n^d\right)$. Hence, $T$ satisfies the query and running time complexity as stated in the lemma. 
 	
 	It remains to prove correctness. Since $(\gr, \cI(\cX))$ is a paving matroid (by \Cref{lem:mat}), for all $S \subseteq \gr$ such that $|S| < \sum_{i \in [n]} L_{i,1}$ it holds that $S \in \cI(\cX)$. Moreover, by \Cref{def:Fmatroid}, for all $S \subseteq \gr$ such that $|S| > \sum_{i \in [n]} L_{i,1}$ it holds that $S \notin \cI(\cX)$. Finally, by \Cref{def:Fmatroid}, for all $S \subseteq \gr$ such that $|S| = \sum_{i \in [n]} L_{i,1}$, since $H$ is a membership oracle for $\cG$, $T$ correctly decides  if $S \in \cI(\cX)$. This gives the statement of the lemma. 
 \end{proof}

\section{A Lower Bound for Oracle $\ell$-Matroid Intersection}
\label{sec:3MI}
		
In this section we establish~\Cref{thm:3MIMain}. The proof is based on a reduction from the Empty Set problem. Specifically, given an Empty Set (ES) instance $I = (n,k,\cF)$,\footnote{ See \Cref{sec:prel} for the definition of the ES problem.} we construct a collection of {\em reduced} $\ell$-MI instances, for some fixed $\ell \in \N$. All of the reduced instances have the same ground set $\gr = \left[ \sqrt[d]{n} \,\right]^{d}$,
where $d = \ell-1$, implying that $|\gr| = n$. In addition, all reduced instances are defined with respect to a projection $\cG$ of $\cF$ to $\gr$. For each simple-uniform (SU) matrix $L$ with dimensions $n \times d$, we generate one reduced $\ell$-MI instance $R_L(I)$. The corresponding matroids of $R_L(I)$ are the $(L,j)$-partition matroid of $\gr$, for all $j \in [d]$, and the $\cG$-matroid of $n,d$ and $L$. More formally,

\begin{definition}
	\label{def:Reduced3MI}
	Let $\ell \geq 3$, $d = \ell-1$, and $n \in \N$ such that $n^{\frac{1}{d}} \in \N$, where $n^{\frac{1}{d}} \geq 2$. Also, let $\gr = \left[ \sqrt[d]{n} \,\right]^{d}$, and $\pi:[n] \rightarrow \gr$ be an arbitrary fixed bijection.
	Given an Empty Set instance $I = (n,k,\cF)$, 
	for any \textnormal{SU} matrix $L \in \N^{N \times d}$ define the {\em reduced} \textnormal{$\ell$-MI} instance of $L$ and $I$, $R_L(I) = (\gr,\cI^{\cG}_L, \cI_{L,1},\ldots, \cI_{L,d})$, as follows. 
	\begin{itemize}
		
		\item  Define $\cG = \{\pi(S) \mid S \in \cF\}$. 
		
		\item Let $\cI^{\cG}_L$ be the independent sets of the $\cG$-matroid of $n,d$ and $L$. 
		
		\item For all $j \in [d]$, let $\cI_{L,j}$ be the independent sets of the  $(L,j)$-partition matroid of $\gr$.
	\end{itemize}
\end{definition}

A main property of the above reduction is that an Empty Set instance $I$ is a ``yes''-instance if and only if at least one of its reduced instances is an $\ell$-MI ``yes''-instance. This is formalized in the next lemma. 

\begin{lemma}
	\label{lem:reductionHelp}
	Let $\ell \geq 3$, let $n \in \N$ such that $n^{\frac{1}{\ell-1}} \in \N$ and $n^{\frac{1}{\ell-1}} \geq 2$. Also, let $k \in [n-1] \setminus \{1\}$. Then, a given \textnormal{ES} instance $I = (n,k,\cF)$ is a ``yes''-instance for if and only if there is an \textnormal{SU} matrix $L \in \N^{N \times (\ell-1)}$ such that $R_L(I)$ is a ``yes''-instance for $\ell$\textnormal{-MI}.
\end{lemma}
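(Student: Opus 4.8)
The plan is to prove both directions of the equivalence by translating between the combinatorial object ``$L$-perfect set in $\cG$'' and ``common basis of the $\ell$ matroids of $R_L(I)$'', using the structural lemmas of \Cref{sec:Fmatroids}. The key observation is that \Cref{obs:L} and \Cref{lem:properties} together say: a set $B$ is a common basis of $\cI^{\cG}_L,\cI_{L,1},\ldots,\cI_{L,d}$ if and only if $B$ is $L$-perfect (i.e.\ a common basis of the partition matroids) \emph{and} $B\in\cG$. Since $\cG$ is just the image of $\cF$ under the bijection $\pi$, and $\pi$ preserves cardinalities, the whole question reduces to whether some $L$-perfect set lies in $\cG$, which in turn is essentially asking whether $\cF\neq\emptyset$.

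First I would handle the ``only if'' direction. Assume $I=(n,k,\cF)$ is a ``yes''-instance, so there is some $S\in\cF$; then $|S|=k$ because $\cF\subseteq\cS_{n,k}$. Set $S'=\pi(S)\in\cG$, which has $|S'|=k$ as well. Since $k\in[n-1]\setminus\{1\}$, we have $2\leq|S'|\leq n-1=|\gr|-1$, so $S'$ satisfies the hypotheses of \Cref{lem:properties2}. That lemma produces an SU matrix $L\in\N^{n\times d}$ (note the excerpt's ``$N$'' should be $n$) such that $S'\in\textnormal{\textsf{bases}}(\cI(\cX),\cI_{L,1},\ldots,\cI_{L,d})$, where $\cX$ are the bases of $n,d,L$ and $\cG$; but $\cI(\cX)=\cI^{\cG}_L$ by \Cref{def:Reduced3MI}, so $S'$ is a common basis of the $\ell$ matroids of $R_L(I)$. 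Hence $R_L(I)$ is an $\ell$-MI ``yes''-instance for this particular $L$.

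Next the ``if'' direction. Suppose there is an SU matrix $L\in\N^{n\times d}$ such that $R_L(I)$ is a ``yes''-instance, i.e.\ there is $B\in\textnormal{\textsf{bases}}(\cI^{\cG}_L,\cI_{L,1},\ldots,\cI_{L,d})$. In particular $B\in\textnormal{\textsf{bases}}(\cI_{L,1},\ldots,\cI_{L,d})$, so by \Cref{obs:L} the set $B$ is $L$-perfect, and then by \Cref{lem:properties} (with $\cX$ the bases of $n,d,L,\cG$, using $\cI^{\cG}_L=\cI(\cX)$) we get $B\in\cG$. Thus $\cG\neq\emptyset$, and since $\cG=\{\pi(S)\mid S\in\cF\}$ this forces $\cF\neq\emptyset$, i.e.\ $I$ is a ``yes''-instance. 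A small point worth checking is that $B\in\cG$ indeed corresponds to an element of $\cF$: write $B=\pi(S)$ for some $S\subseteq[n]$; since $B$ is $L$-perfect, $|B|=\sum_i L_{i,1}$, but we do not actually need $|B|=k$ for the ``if'' direction — nonemptiness of $\cG$ already gives nonemptiness of $\cF$.

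I do not expect a serious obstacle here; the proof is a bookkeeping exercise assembling \Cref{obs:L}, \Cref{lem:properties}, and \Cref{lem:properties2} through the dictionary provided by \Cref{def:Reduced3MI}. The one place to be careful is the cardinality/range hypothesis: the ``only if'' direction genuinely uses $k\neq 1$ and $k\leq n-1$ to invoke \Cref{lem:properties2} (which requires $2\leq|S|\leq|\gr|-1$), and it uses $|\gr|=n$, which holds because $\gr=[n^{1/d}]^{d}$ with $d=\ell-1$. It is also worth noting that the statement as phrased only asserts existence of \emph{some} SU matrix $L$; one should not expect the equivalence to hold for a fixed $L$, since the reduction of \Cref{sec:3MI} ranges over all such $L$. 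Finally, in writing this up I would flag the apparent typo ``$L\in\N^{N\times d}$'' / ``$L\in\N^{N\times(\ell-1)}$'' and read it as $n\times d$ throughout.
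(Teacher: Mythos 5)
Your proposal is correct and takes essentially the same route as the paper: the forward direction is exactly the paper's application of \Cref{lem:properties2} to $\pi(S)$, and your ``if'' direction is simply the contrapositive of the paper's argument that a ``no''-instance makes every $R_L(I)$ a ``no''-instance, using the same two facts (\Cref{obs:L} and \Cref{lem:properties}). One correction to your side remark: ``$N$'' in the statement is not a typo for $n$ — here $N=n^{\frac{1}{\ell-1}}$ is the side length of the grid $\gr=\bigl[\,n^{\frac{1}{\ell-1}}\bigr]^{\ell-1}$, so the SU matrix furnished by \Cref{lem:properties2} has $N$ rows (one per coordinate value), and reading it as $n\times(\ell-1)$ would not match the definitions of $L$-perfectness and the $(L,j)$-partition matroids on $\gr$.
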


\begin{proof}
	Let $\pi:[n] \rightarrow \gr$ be the bijection used in the reduction of $I$ (see \Cref{def:Reduced3MI}).	For the first direction, assume that $I$ is a ``yes''-instance. Then, there is $S \in \cF$; thus, $\pi(S) \in \cG$. As $\cF \subseteq \cS_{n,k}$, it follows that $2 \leq |S| = |\pi(S)| = k \leq n-1 = N^d-1$. Therefore, by \Cref{lem:properties2}, there is an SU matrix $L \in \N^{N \times d}$ such that $\pi(S) \in \textnormal{\textsf{bases}}(\cI^{\cG}_{L},\cI_{L,1},\ldots, \cI_{L,d})$. It follows that $R_L(I)$ is a ``yes''-instance for $\ell$-MI.
	
	Conversely, assume that $I$ is a ``no''-instance. Then, for every $S \subseteq [n]$ it holds that $S \notin \cF$. Therefore, for every $T \subseteq \gr$ it holds that $T \notin \cG$. Hence, by \Cref{lem:properties}, for every SU matrix $L \in \N^{N \times d}$ it holds that $\textnormal{\textsf{bases}}(\cI^{\cG}_{L},\cI_{L,1},\ldots, \cI_{L,d}) = \emptyset$. It follows that, for every SU matrix $L \in \N^{N \times d}$, $R_L(I)$ is a ``no''-instance. 
\end{proof}

We use \Cref{lem:reductionHelp} to prove \Cref{thm:3MIMain}. Intuitively, in the proof of  \Cref{thm:3MIMain} we show that given an algorithm $\cA$ for $\ell$-MI (for some $\ell \geq 3$) which uses a relatively small number of queries to the given membership oracles, we can decide an oracle-ES instance $I = (n,k,\cF)$ using a small number of queries to the membership oracle of $\cF$. This is accomplished simply by constructing all the reduced instances of $I$ (and two more corner cases) and verifying if one of them is a ``yes''-instance for $\ell$-MI. Since the number of SU matrices is relatively small compared to the number of possibilities for $\cF$, using \Cref{lem:EmptySet} we obtain a strong lower bound on the number of queries required for such an algorithm $\cA$.

\thmMI*

\begin{proof}

Let $\ell \geq 3$ and $d = \ell -1$. Consider a randomized algorithm $\cA$ which decides oracle $\ell$-MI in $f(m) \geq 1$ queries to the given membership oracles, where $m$ is the number of elements in the ground set of the given matroids, and $f$ is some  function. Using $\cA$ we construct an algorithm $\cB$ that decides the oracle-Empty Set (oracle-ES) problem. Let $n \in \N$ such that $n^{\frac{1}{\ell-1}} \in \N$ and $n^{\frac{1}{\ell-1}} \geq 2$. Also, let $I = (n,k,\cF)$ be an oracle-ES instance with a universe of size $n$. 
Define Algorithm $\cB$ on input $I$ as follows.

\begin{enumerate}
	\item {\bf If} $k = n$ {\bf return} that $I$ is a ``yes''-instance if and only if $[n]\in \cF$.  
	\item {\bf If} $k = 1$ decide if $I$ is a ``yes'' or ``no'' instance by exhaustive enumeration over $\cS_{n,k}$. 
	\item {\bf Else:}
	\begin{enumerate}
		\item {\bf For all} SU matrices $L \in \N^{N \times d}$ {\bf do:}
		\begin{itemize}
			\item Call $\cA$ on the reduced oracle-$\ell$-\textnormal{MI} instance $R_L(I)$. 
			\item {\bf If} $\cA$ returns that $R_L(I)$ is a ``yes''-instance: {\bf return} that $I$ is a ``yes''-instance. 
		\end{itemize} 
		\item  {\bf return} that $I$ is a ``no''-instance. 
	\end{enumerate}
\end{enumerate}

By \Cref{obs:oracle} and \Cref{def:Reduced3MI}, for every SU matrix $L \in \N^{N \times d}$ we can define a membership oracle for all matroids of $R_L(I)$ that uses at most one query to the membership oracle of $\cF$. We note that $\cB$ may be random only if $\cA$ is random. We analyze below the correctness of the algorithm. 

\begin{claim}
	\label{claim:Correctnes3MI'}
	Algorithm $\cB$ correctly decides  every \textnormal{oracle-ES} instance $I = (n,k,\cF)$ with universe of size $n$ with probability at least $\frac{1}{2}$. 
\end{claim}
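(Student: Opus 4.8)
The plan is to verify correctness by the same three-way case distinction on $k$ that defines Algorithm $\cB$. The two boundary branches are deterministic, so there is no probability to track: if $k=n$ then $\cS_{n,k}=\{[n]\}$ and hence $\cF\subseteq\{[n]\}$, so $\cF\neq\emptyset$ if and only if $[n]\in\cF$ — exactly what Step~1 tests with one oracle query; and if $k=1$ then $\cF\subseteq\cS_{n,1}=\{\{1\},\dots,\{n\}\}$, so the exhaustive enumeration in Step~2 decides whether $\cF\neq\emptyset$. In both branches $\cB$ is correct with probability~$1$.

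For the remaining branch, $k\in[n-1]\setminus\{1\}$ (which forces $n\geq 3$), I would combine \Cref{lem:reductionHelp} with the one-sided error of $\cA$. First note that, by \Cref{lem:mat}, \Cref{obs:oracle}, and \Cref{def:Reduced3MI}, each reduced instance $R_L(I)$ is a legitimate $\ell$-MI instance and $\cB$ supplies $\cA$ with genuine membership oracles for its matroids (the partition-matroid oracles are computed directly, and the $\cG$-matroid oracle is simulated from the $\cF$-oracle); hence $\cA$ obeys its specification on every call, which is the only place its guarantee is used. Now split on the status of $I$. If $I$ is a ``no''-instance, then by \Cref{lem:reductionHelp} every $R_L(I)$ is a ``no''-instance of $\ell$-MI, so $\cA$ answers ``no'' on each call with probability~$1$ and therefore $\cB$ answers ``no'' with probability~$1$. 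If $I$ is a ``yes''-instance, then \Cref{lem:reductionHelp} yields an \textnormal{SU} matrix $L^{*}$ for which $R_{L^{*}}(I)$ is a ``yes''-instance; since $\cB$ leaves the loop early only upon a ``yes'' answer of $\cA$, the event ``$\cB$ returns no'' is contained in the event ``$\cA$ is invoked on $R_{L^{*}}(I)$ and returns no'', which has probability at most $\tfrac12$, so $\cB$ returns ``yes'' with probability at least $\tfrac12$. In every case the answer is correct with probability at least $\tfrac12$, as claimed.

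I do not anticipate a genuine obstacle: once \Cref{lem:reductionHelp} is in hand the claim is essentially bookkeeping. The one subtlety worth stating explicitly is the probability estimate in the ``yes'' case — since an early exit of the loop can only produce the (correct) answer ``yes'', it is legitimate to lower-bound $\cB$'s success probability by that of the single call $\cA(R_{L^{*}}(I))$, regardless of how the iteration order interacts with the independent coin tosses of the other calls.
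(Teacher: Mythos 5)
Your proposal is correct and follows essentially the same argument as the paper: dispose of the boundary cases $k\in\{1,n\}$ deterministically, then for $2\leq k\leq n-1$ apply \Cref{lem:reductionHelp} together with the one-sided error of $\cA$ (every $R_L(I)$ is a ``no''-instance when $\cF=\emptyset$, and some $R_{L^*}(I)$ is a ``yes''-instance otherwise). Your explicit remark that an early exit of the loop can only produce the correct answer ``yes'', so the success probability is bounded below by that of the single call on $R_{L^*}(I)$, is a slightly more careful rendering of the same step the paper states directly.
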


\begin{claimproof}
	If $k = n$ or $k = 1$ then $\cB$ trivially decides $I$ correctly with probability $1$. Assume then that $2 \leq k \leq n-1$. 
	For the first direction, suppose that $I$ is a ``yes''-instance. 
	Then, by \Cref{lem:reductionHelp}, there is an SU matrix $L \in \N^{N \times d}$ such that  $R_L(I)$ is a ``yes''-instance. 
	This implies that $\cA$ returns that $R_L(I)$ is a ``yes''-instance with probability at least $\frac{1}{2}$. Hence, $\cB$ returns that $I$ is a ``yes''-instance with probability at least $\frac{1}{2}$. 
	Conversely, assume that $I$ is a ``no''-instance. 
	Then, by \Cref{lem:reductionHelp}, for every SU matrix $L \in \N^{N \times d}$ it holds that $R_L(I)$ is a ``no''-instance. Thus, for every SU matrix $L \in \N^{N \times d}$, we have that $\cA$ returns that $R_L(I)$ is a ``no''-instance with probability $1$. Thus, $\cB$ returns that $I$ is a ``no''-instance with probability $1$.
\end{claimproof}

We give below an analysis of the query complexity of the algorithm. Let $N = n^{\frac{1}{d}}$. 

\begin{claim}
	\label{claim:QueryComplexity'}
	The number of queries used by $\cB$ on input $I$ is at most $(N+1)^{N \cdot d} \cdot f\left(n\right)$. 
\end{claim}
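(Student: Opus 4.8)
The plan is to bound the number of oracle queries Algorithm $\cB$ makes by separately bounding (a) the number of SU matrices $L \in \N^{N \times d}$ iterated over in Step 3(a), and (b) the number of queries each call to $\cA$ on a reduced instance $R_L(I)$ incurs, then multiplying the two. The cases $k = n$ and $k = 1$ handled in Steps 1--2 use no queries beyond a trivial bounded number (or can be absorbed), so the dominant contribution is Step 3.

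First I would count the SU matrices. By Definition~\ref{def:SU'}, an SU matrix $L \in \N^{N \times d}$ has every entry $L_{i,j} \in \{0,1,\ldots,N\}$ (here the relevant ground set is $\gr = [N]^d$ with $|\gr| = n$, so the entries range in $\{0,\ldots,N\}$ rather than $\{0,\ldots,n\}$ — I should double-check the indexing, since the definition as stated has the matrix in $\N^{n\times d}$ but the reduced instance uses an $N\times d$ matrix; in the reduced-instance setting the row index ranges over $[N]$). Ignoring the additional constraints (2) and (3), which only cut down the count, the number of matrices in $\{0,\ldots,N\}^{N \times d}$ is exactly $(N+1)^{N\cdot d}$. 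So there are at most $(N+1)^{N\cdot d}$ iterations of the loop in Step 3(a).

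Second I would bound the queries per call. By Lemma~\ref{obs:oracle} together with Definition~\ref{def:Reduced3MI}, the membership oracle for every one of the $\ell$ matroids of $R_L(I)$ can be simulated using at most one query to the membership oracle of $\cF$ (the partition matroids need no queries at all, and the $\cG$-matroid needs at most one per query by Lemma~\ref{obs:oracle}). Since $\cA$ decides oracle $\ell$-MI in $f(m)$ queries where $m$ is the number of elements of the ground set, and here the ground set is $\gr$ with $|\gr| = N^d = n$, each call to $\cA$ triggers at most $f(n)$ oracle queries, each of which costs at most one query to the $\cF$-oracle. Hence each iteration of Step 3(a) uses at most $f(n)$ queries to the $\cF$-oracle.

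Multiplying, the total number of queries used by $\cB$ on input $I$ is at most $(N+1)^{N\cdot d} \cdot f(n)$, as claimed. The main obstacle — really just a bookkeeping point rather than a genuine difficulty — is making sure the simulation overhead from Lemma~\ref{obs:oracle} is correctly charged: a single query of $\cA$ to one of the reduced matroids must be shown to cost at most one query to the underlying $\cF$-oracle (not one per matroid), which is exactly what property~1 of Lemma~\ref{obs:oracle} gives, since only the $\cG$-matroid consults $\cF$. I would also note in passing that the $k=n$ and $k=1$ branches use a bounded number of queries that does not affect the stated bound (indeed Step 1 uses one query and Step 2 uses $n$ queries, both dominated by $(N+1)^{N\cdot d}\cdot f(n)$ since $f \geq 1$), so the bound holds in all cases.
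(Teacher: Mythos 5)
Your proposal is correct and follows essentially the same argument as the paper: bound the number of SU matrices by $(N+1)^{N\cdot d}$ via \Cref{def:SU'}, charge at most one query to the $\cF$-oracle per query of $\cA$ using \Cref{obs:oracle} so each call costs at most $f(n)$ queries, multiply, and observe that the $k=1$ and $k=n$ branches are dominated by the bound. The extra bookkeeping you note about the matrix being $N\times d$ with entries in $\{0,\ldots,N\}$ matches the paper's intent.
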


\begin{claimproof}
	If $k = 1$ or $k = n$, then the number of queries performed by $\cB$ on input $I$ is bounded by ${n \choose 1} = n = N^d \leq (N+1)^{N \cdot d} \cdot f(n)$.  Otherwise, by the query complexity guarantee of $\cA$, the number of queries is bounded by the number of SU matrices $L \in \N^{N \times d}$ multiplied by $f(n)$. By \Cref{def:SU'}, the number of SU matrices $L \in \N^{N \times d}$ is at most $(N+1)^{N \cdot d}$, which gives the statement of the claim. 
\end{claimproof}

By Claims~\ref{claim:Correctnes3MI'} and~\ref{claim:QueryComplexity'}, $\cB$ is a randomized algorithm that decides every oracle-ES instance $I = (n,k,\cF)$ in at most $(N+1)^{N \cdot d} \cdot f\left(n\right)$ queries to the membership oracle of $\cF$.  Recall the binomial identity $\sum^{n}_{k = 0} {n \choose k} = 2^{n}$. By the pigeonhole principle, there is $k \in \{0,1,\ldots, n\}$ such that ${n \choose k} \geq \frac{2^n}{n+1}$. Therefore, by \Cref{lem:EmptySet}, 
\begin{equation*}
	\label{eq:f(n)}
	(N+1)^{N \cdot d} \cdot f\left(n\right) \geq \frac{2^{n-1}}{(n+1)}. 
\end{equation*} 
We can now give a lower bound on the number of queries used by $\cA$:

\begin{equation}
	\label{eq:xlog1'}
	\begin{aligned}
		f(n) \geq \frac{2^{n-1}}{(n+1) \cdot (N+1)^{N \cdot d}} 
		= \frac{2^{n-1}}{2^{\log(n+1)} \cdot 2^{\log \left((N+1)^{N \cdot d} \right)}} = 2^{ \Big(n-1-\log(n+1)-N \cdot d \cdot \log (N+1) \Big)}.
	\end{aligned}
\end{equation}

Observe that 
\begin{equation}
	\label{eq:xlog2'}
	\begin{aligned}
		\log(n+1)+N \cdot d \cdot \log (N+1) \leq{} & 	\log\left(n^2\right)+N \cdot d \cdot \log \left(N^2\right)
		\\ ={} & 	2 \cdot \log\left(n\right)+n^{\frac{1}{d}} \cdot d \cdot \log \left(n^{\frac{2}{d}}\right)
		\\ ={} & 	2 \cdot \log (n)+2 \cdot n^{\frac{1}{d}} \cdot \log (n) 
		\\ \leq{} & 	4 \cdot n^{\frac{1}{d}} \cdot \log (n).
		\\ ={} & 	4 \cdot n^{\frac{1}{\ell-1}} \cdot \log (n).
	\end{aligned}
\end{equation} The first inequality holds since $N = n^{\frac{1}{\ell-1}} \geq 2$. Thus, by \eqref{eq:xlog1'} and \eqref{eq:xlog2'} we have
$$f(n) \geq 2^{\Big(n-1-4 \cdot n^{\frac{1}{\ell-1}} \cdot \log (n)\Big)} \geq 2^{\Big(n-5 \cdot n^{\frac{1}{\ell-1}} \cdot \log (n)\Big)}.$$ 
We conclude that there is no randomized algorithm which decides oracle $\ell$-MI instances with $n$ elements in fewer than $2^{\Big(n-5 \cdot n^{\frac{1}{\ell-1}} \cdot \log (n)\Big)}$ queries. 
\end{proof}

		\section{Monotone Local Search}
		\label{sec:monotone}

In this section, we present our generalization of Monotone Local Search. The  generalization is used to attain a faster than brute force algorithm for oracle $\ell$-MI (\Cref{thm:faster_MI}), and a lower bound for the running time of parameterized algorithms for oracle $\ell$-MI (\Cref{thm:param_lb}). 
We give a generic result which can be applied to the wide class of implicit set problems, capturing oracle $\ell$-MI as a special case. 
 In \Cref{sec:monotone_def}, we provide the basic definitions needed to state the results along with the formal statements of the main theorems.   In \Cref{sec:real_monotone}  we present the Monotone Local Search Algorithm and prove its correctness. In \Cref{sec:prop_phi} we provide several lower bounds for the running time of  Monotone Local Search for several specific cases. Finally, in \Cref{sec:monotone_app} we show how to derive  an extension algorithm for $\ell$-MI from a  parameterized algorithm for the problem.

\subsection{Formal Definitions and Results}
\label{sec:monotone_def}

Our results apply to problems which can be cast as {\em implicit set problem}. An implicit set problem $\cP$ is a set of instances of the form $(E,\cF,B,\oracle)$ where $E$ is a finite arbitrary set, $\cF\subseteq 2^E$ is a collection of subset of $E$, $B\in \{0,1\}^*$ is the {\em encoding} of the instance and $\oracle:2^{E}\times \mathbb{N}\rightarrow \{0,1\}$ is the {\em oracle} of the instance.
The instance $(E,\cF,B,\oracle)$  is a ``yes'' instance if $\cF\neq \emptyset$ and is a ``no'' instance if $\cF=\emptyset$. 
 An algorithm for $\cP$  is given the encoding $B$ as input and runs in the computational model of an  oracle Turing machine in which the oracle is $\oracle$. 
 The objective of the algorithm is to determine if $\cF\neq\emptyset$, while the set $\cF$ is only implicitly given to the algorithm through the encoding $B$ and the oracle $\oracle$.

We say that an implicit set problem $\cP$ is {\em polynomial time computable} if the following two conditions hold:
\begin{itemize}
	\item There is an algorithm in an oracle Turing machine  model
	such that for every $(E,\cF, B,\oracle)\in \cP$, given $B$ as the input and $\oracle$ as the oracle, computes the set $E$ in time $\poly(|B|)$. 
	\item There is an algorithm in an oracle Turing machine model such that for every $(E,\cF, B,\oracle)\in \cP$, given $B$ and $S\subseteq E$  as the input and $\oracle$ as the oracle,  decides whether $S\in \cF$ in time  $\poly(|B|)$. 
\end{itemize}
For simplicity, we assume that all the implicit set problems considered in this section are polynomial time computable.

 The $\ell$-matroid intersection problem can be easily cast as an implicit set problem $\cP_{\ell\textnormal{-MI}}$. 
For every $\ell$-matroid intersection instance $(E,\cI_1,\ldots, \cI_2)$ we add to $\cP_{\ell-\textnormal{MI}}$ the instance $(E,\cF, B,\oracle)$ where $\cF = \bases(\cI_1,\ldots,\cI_\ell)$, the encoding $B$ is an arbitrary encoding of the set $E$ and $\oracle(S,j) =1$ if $S\in \cI_j$ and $\oracle(S,j)=0$ otherwise. That is, $\oracle$ is a unified representation of the membership oracles for $\cI_1,\ldots,\cI_{\ell}$. We assume the encoding $B$ satisfies $|E|\leq |B| \leq |E|^2$. It can be easily verified that  $\cP_{\ell\textnormal{-MI}}$ is also polynomial time computable. The set $E$ can be easily computed in polynomial time   as it is explicitly  encoded in $B$, and given $S\subseteq E$ it can be determined whether $S\in\bases(\cI_1,\ldots, \cI_{\ell})$ by verifying $S\in \cI_j$ and is maximal independent set for every $j\in[\ell]$  via  queries to the oracle.

The definition of an implicit set problem given above is a natural generalization of the {\em implicit set systems} defined in \cite{FGLS19}.
 The difference is that in implicit set systems the instance does not contain an oracle. 
As the main focus of this paper is the oracle $\ell$-MI problem, this extension is required.
 We further note that all  the result in \cite{FGLS19} also hold for implicit set problems which involve oracles.

The purpose of Monotone Local Search is to convert an extension algorithm for  $\cP$ into an exponential time algorithm for $\cP$. For many problems, such as $\ell$-matroid intersections, Vertex Cover and Feedback Vertex Set,  such algorithms can be easily derived from parameterized algorithms. A list of classic problems which can be cast as implicit set problems  and for which an extension algorithm can be derived from a parameterized algorithm  can be found in \cite{FGLS19}. 

The definition of extension algorithms relies on the notion of {\em extensions}. 
\begin{definition}[$\ell$-extension]
	Let  $\cP$ be an implicit set problem and $(E,\cF,B,\oracle)\in \cP$ be an instance of the problem.   Also, let $X\subseteq E$, $S\subseteq E\setminus X$  and $\ell\in \mathbb{N}$. We say that {\em $S$ is an $\ell$-extension of $X$} if $X\cup S\in \cF$ and $\abs{S}= \ell$. 
\end{definition}

For an implicit set problem $\cP$ and a function $g:\mathbb{N}\rightarrow \mathbb{N}$, a {\em random extension algorithm} for $\cP$ of time $g$ is an algorithm in which runs in an oracle Turing machine model.
 The input for the algorithm is $B\in\{0,1\}^*$, the encoding of an instance $(E,\cF,B,\oracle)\in P$,  a set  $X\subseteq \cF$ and $\ell\in \mathbb{N}$.  Furthermore, the oracle of the Turing machine is $\oracle$. The algorithm either returns $S\subseteq E$ or a special symbol $\perp \notin 2^{E}$ and satisfies the following properties. 
 \begin{itemize}
 	\item
 If $X$ has an $\ell$-extension then the algorithm returns an $\ell$-extension of $X$ with probability at least $1/2$. 
 \item If the algorithm returns a set $S\subseteq E $ then $S$ is an $\ell$-extension of $S$. 
\end{itemize}
Furthermore, the algorithm runs in time $g(\ell)\cdot \poly(|B|)$. 

Indeed, we can show that  a parameterized algorithm for oracle $\ell$-MI 
which runs in time  $g(k)\cdot \poly(|E|)$, implies an extension algorithm for $\cP_{\ell\textnormal{-MI}}$ of time $g$. The main idea in the construction of the extension algorithm is to run the parameterized algorithm on the instance $(E\setminus X,\cI_1/X,\ldots, \cI_{\ell}/X)$ following some trivial checks. 
To maintain the flow of the discussion focused on  Monotone Local Search, we provide the proof of the following lemma in \Cref{sec:monotone_app}.
\begin{restatable}{lemma}{MIext}
	\label{lem:MI_ext}
	Let $g:\mathbb{N}\rightarrow \mathbb{N}$ be an arbitrary function. 
	If there is a randomized parameterized algorithm for oracle $\ell$-MI which runs in time $g(k)\cdot \poly(|E|)$ then there is a random extension algorithm for $\cP_{\ell\textnormal{-MI}}$ of time $g$, where $k$ is the rank of the first matroid of the input instance and $E$ is the ground set of the input instance. 
\end{restatable}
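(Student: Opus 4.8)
The plan is to take the given randomized parameterized algorithm $\cA$ for oracle $\ell$-MI, running in time $g(k)\cdot\poly(|E|)$ where $k$ is the rank of the first matroid, and wrap it into a random extension algorithm for $\cP_{\ell\textnormal{-MI}}$. Recall that an instance of $\cP_{\ell\textnormal{-MI}}$ is encoded by $B$ (an encoding of $E$ with $|E|\le|B|\le|E|^2$) and comes with an oracle $\oracle$ that answers membership queries for $\cI_1,\dots,\cI_\ell$, and that $\cF=\bases(\cI_1,\dots,\cI_\ell)$. The extension algorithm receives $B$, a set $X\subseteq E$ with $X\in\cF$ (so in particular $X$ is a common basis, hence $|X|=\operatorname{rank}(\cI_j)$ for every $j$), and a number $\ell'\in\mathbb{N}$ (the desired extension size; I avoid clashing with the fixed $\ell$). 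Its job is to either output an $\ell'$-extension $S\subseteq E\setminus X$ of $X$, i.e. $X\cup S\in\cF$ with $|S|=\ell'$, or report that none exists — with the usual one-sided randomized guarantees, in time $g(\ell')\cdot\poly(|B|)$.

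First I would dispose of the degenerate case. Since $X\in\cF=\bases(\cI_1,\dots,\cI_\ell)$ means $X$ is already a common basis of all $\ell$ matroids, the only $\ell'$ for which an $\ell'$-extension can exist is $\ell'=0$: adding any element to a basis destroys independence. (Here one uses that all bases of a matroid have the same cardinality, so $X\cup S\in\cF$ forces $|X\cup S|=|X|$, i.e. $S=\emptyset$.) So the extension algorithm first checks, using the oracle, that $X\in\cF$ and then: if $\ell'=0$ it returns $\emptyset$; if $\ell'\ge 1$ it returns $\perp$. This already satisfies both correctness properties and runs in $\poly(|B|)$ time — but it is somewhat unsatisfying because it never invokes $\cA$, and, more importantly, it does not match the spirit of the informal sketch ("run the parameterized algorithm on the contracted instance $(E\setminus X,\cI_1/X,\dots,\cI_\ell/X)$"). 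The cleaner route, and the one I would actually write up, is to not assume $X$ is already a full common basis but rather to handle the general situation as the sketch indicates: on input $X$, form the contracted matroids $(E\setminus X,\cI_j/X)$ for each $j\in[\ell]$ — whose membership oracle is simulated by querying $\oracle(\,\cdot\cup X,\;j)$ — run the parameterized algorithm $\cA$ on this contracted instance, and post-process its output. One checks that $S\subseteq E\setminus X$ satisfies $X\cup S\in\bases(\cI_1,\dots,\cI_\ell)$ iff $S\in\bases(\cI_1/X,\dots,\cI_\ell/X)$, using the standard fact (cited in \Cref{sec:prel}) that $M/X$ is a matroid whose bases are exactly $\{B\setminus X: B\in\bases(M),\ X\subseteq B\}$; this requires first verifying $X\in\cI_j$ for all $j$ (a trivial oracle check), which is guaranteed since $X\in\cF$. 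The key quantitative point is that the rank of the first contracted matroid $\cI_1/X$ equals $\operatorname{rank}(\cI_1)-|X|$; after guessing or reading off $\ell'$, one need only run $\cA$ when $\ell'$ equals this contracted rank (otherwise no $\ell'$-extension of the right size can be a basis, so return $\perp$), so the running time is $g(\ell')\cdot\poly(|E\setminus X|)\le g(\ell')\cdot\poly(|B|)$, since $|E\setminus X|\le|E|\le|B|$ and an oracle query to the contracted matroid costs only a constant overhead over a query to $\oracle$.

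The correctness of the randomized guarantees is then inherited directly from $\cA$: if $X$ has an $\ell'$-extension then the contracted instance is a ``yes'' instance, so $\cA$ returns a common basis $S$ of the contracted matroids with probability $\ge 1/2$, and then $X\cup S\in\cF$ with $|S|=\ell'$, so the extension algorithm returns a valid $\ell'$-extension with probability $\ge 1/2$; conversely, whenever the extension algorithm returns a set $S$, the post-processing has verified (via oracle queries, or simply because $\cA$'s specification guarantees it) that $S$ is a common basis of the contracted matroids, hence $X\cup S\in\cF$ and $|S|=\ell'$. The main — indeed only — obstacle is the bookkeeping around the rank/cardinality constraints: one must be careful that the ``extension size'' $\ell'$ the Monotone Local Search framework passes in is matched against the correct contracted rank, and that the definition of $\ell'$-extension (which demands $|S|=\ell'$ exactly, and $X\cup S\in\cF$, i.e. a maximal independent set, not merely independent) is reconciled with what $\cA$ outputs; everything else is routine oracle simulation and the elementary matroid-contraction identity. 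No genuinely hard step is expected here — the lemma is essentially a definitional unpacking plus one standard matroid fact.
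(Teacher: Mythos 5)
Your main route is exactly the paper's proof: after the trivial checks that $|X|+\ell'$ equals the rank of the first matroid and that $X$ is independent in all $\ell$ matroids (returning $\perp$ otherwise), contract each matroid by $X$, simulate the contracted membership oracles via $\oracle(\,\cdot\cup X,\,j)$, run the parameterized algorithm on $(E\setminus X,\cI_1/X,\ldots,\cI_\ell/X)$ whose first-matroid rank is $\ell'$, and return its output, which gives both correctness properties and the $g(\ell')\cdot\poly(|B|)$ bound just as in the paper. One correction: the extension algorithm's input is an arbitrary $X\subseteq E$ (the paper's ``$X\subseteq\cF$'' is a typo for $X\subseteq E$), so your opening digression treating $X\in\cF$ is moot, and the reason the independence check passes whenever an $\ell'$-extension exists is the hereditary property applied to the common basis $X\cup S$ — not ``$X\in\cF$'' — which is also what guarantees the contracted instance is well defined and a ``yes''-instance in that case.
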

In particular, by \Cref{lem:MI_ext} there is a random extension algorithm for $\cP_{\ell\textnormal{-MI}}$ of time $g(\ell)=c^{\ell^2}$ for some $c>1$ as a consequence of the algorithm for \cite{huang2023fpt}. 

The main result of \cite{FGLS19} is the following.
\begin{theorem}[\cite{FGLS19}]
	\label{thm:FGLS}
	Let $\cP$ be a implicit set problem which has a random extension algorithm of time $g(\ell)=c^{\ell}$ for some $c\geq 1$. Then there is an randomized algorithm for $\cP$ which runs in time $\left(2-\frac{1}{c}\right)^{|E|} \cdot \poly(|B|)$ for every instance $(E,\cF,B,\oracle)\in \cP$. 
\end{theorem}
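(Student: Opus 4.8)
The plan is to build the claimed algorithm directly on top of the random extension algorithm $\ext$ (which runs in time $c^{\ell}\cdot\poly(|B|)$), using the sampling-plus-extension scheme of monotone local search. Given the encoding $B$, first compute $n=|E|$. Then, for every guessed solution size $k\in\{0,1,\dots,n\}$ and for a threshold $t=t(n,k)\in\{0,\dots,k\}$ to be fixed later, repeat $R_k=\Theta\!\big(\binom{n}{t}\big/\binom{k}{t}\big)$ times the following round: draw $X\in\binom{E}{t}$ uniformly at random, call $\ext(B,X,k-t)$, and if it returns a set $S$ with $X\cup S\in\cF$ (verifiable in $\poly(|B|)$ time since $\cP$ is polynomial-time computable), halt and output ``yes''. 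If no round ever succeeds, output ``no''.

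For correctness: on a ``no'' instance $\cF=\emptyset$, the specification of $\ext$ forbids it from ever returning a set, so the algorithm outputs ``no'' with probability $1$. On a ``yes'' instance, fix $S^{\star}\in\cF$ and let $k^{\star}=|S^{\star}|$; in the iteration $k=k^{\star}$ a uniform $X\in\binom{E}{t}$ lies inside $S^{\star}$ with probability $\binom{k^{\star}}{t}\big/\binom{n}{t}$, and conditioned on that, $S^{\star}\setminus X$ is a $(k^{\star}-t)$-extension of $X$, so $\ext$ succeeds with probability at least $1/2$. Hence one round succeeds with probability at least $\tfrac12\binom{k^{\star}}{t}\big/\binom{n}{t}$, and taking $R_{k^{\star}}=\Theta\!\big(\binom{n}{t}\big/\binom{k^{\star}}{t}\big)$ makes that iteration fail with probability at most $1/2$, so the algorithm outputs ``yes'' with probability at least $1/2$, as required.

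For the running time, each round costs $c^{k-t}\poly(|B|)$, so the total time is $\poly(|B|)\cdot\sum_{k=0}^{n}R_k\,c^{k-t(n,k)}\le\poly(|B|)\,(n+1)\cdot\max_{k}\tfrac{\binom{n}{t}}{\binom{k}{t}}c^{k-t}$, and it remains to choose $t(n,k)$ so that this maximum is $(2-1/c)^{n}\cdot\poly(n)$. This is the combinatorial core, and I expect it to be the main obstacle. First, the subset-of-a-subset identity $\binom{n}{t}\binom{n-t}{k-t}=\binom{n}{k}\binom{k}{t}$ rewrites the quantity to minimize as $\binom{n}{k}\,c^{j}\big/\binom{(n-k)+j}{j}$ with $j:=k-t$, so we must maximize $\binom{(n-k)+j}{j}c^{-j}$ over $0\le j\le k$. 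Its consecutive-term ratio is $\tfrac{(n-k)+j+1}{(j+1)c}$, which crosses $1$ at $j+1=\tfrac{n-k}{c-1}$, so the maximizer is $j^{\star}\approx\tfrac{n-k}{c-1}$ (clamped into $\{0,\dots,k\}$), and at that point $j^{\star}\big/\big((n-k)+j^{\star}\big)=1/c$. Plugging this in and using the estimate $\binom{M}{pM}=2^{M\,H(p)\pm O(\log M)}$ (with $H$ the binary entropy) together with the algebraic collapse $c\,H(1/c)-\log_2 c=(c-1)\log_2\tfrac{c}{c-1}$ gives $\binom{(n-k)+j^{\star}}{j^{\star}}c^{-j^{\star}}\ge\poly(n)^{-1}\big(\tfrac{c}{c-1}\big)^{n-k}$, hence $\tfrac{\binom{n}{t}}{\binom{k}{t}}c^{k-t}\le\poly(n)\binom{n}{k}\big(\tfrac{c-1}{c}\big)^{n-k}$. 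Finally, $\sum_{k=0}^{n}\binom{n}{k}\big(\tfrac{c-1}{c}\big)^{n-k}=\big(1+\tfrac{c-1}{c}\big)^{n}=(2-\tfrac1c)^{n}$, so the maximum over the $n+1$ values of $k$ is at most $(2-1/c)^{n}$, which — absorbing all polynomial factors and using $n\le|B|$ — yields the stated bound.

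The delicate points, which I would handle by routine estimates, are: the integrality of $t$ (evaluating the unimodal sequence at an integer near $j^{\star}$ loses only a constant factor, since the consecutive ratio is $\approx 1$ there); the regime $k<n/c$, where the clamp forces $t=0$ and one needs the elementary inequality $c^{1/c}\le 2-\tfrac1c$ to bound $c^{k}\le c^{n/c}\le(2-1/c)^{n}$; and the degenerate case $c=1$, where $\ext$ runs in polynomial time for every $\ell$, so running it on $X=\emptyset$ already gives a $\poly(|B|)$ algorithm matching $(2-1)^{n}=1$. The conceptual heart that I would emphasize is the recognition that, after the optimal choice of $t$, the per-$k$ running-time bound is exactly the $k$-th term of the binomial expansion of $(2-1/c)^{n}$, so summing over all guesses of the solution size costs only a polynomial factor.
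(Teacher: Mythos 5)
Your proposal is correct, and it is essentially the monotone-local-search argument: guess $k$, sample a uniform $t$-subset $X$, call the extension algorithm for a $(k-t)$-extension, and repeat $\Theta\bigl(\binom{n}{t}/\binom{k}{t}\bigr)$ times; this is exactly the skeleton of \Cref{alg:monotone_local_search} and of \cite{FGLS19}. Note, however, that the paper never proves \Cref{thm:FGLS}: it is imported from \cite{FGLS19}, and the paper's own analysis (\Cref{lem:mls_cost}, via the function $\Phi_g$) deliberately trades sharpness for generality in $g$ -- the paper even remarks that for $g(\ell)=c^{\ell}$ its bound is inferior to $\bigl(2-\frac{1}{c}\bigr)^{n}$. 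What you supply, and what the paper omits, is the specialized running-time optimization for $g(\ell)=c^{\ell}$: rewriting $\frac{\binom{n}{t}}{\binom{k}{t}}c^{k-t}=\binom{n}{k}c^{j}/\binom{(n-k)+j}{j}$ with $j=k-t$, locating the maximizer $j^{\star}\approx\frac{n-k}{c-1}$ via the consecutive-term ratio, and collapsing the entropy expression $cH(1/c)-\log_2 c=(c-1)\log_2\frac{c}{c-1}$ so that the per-$k$ cost becomes the $k$-th term of the binomial expansion of $\bigl(2-\frac{1}{c}\bigr)^{n}$. The delicate points you flag are all genuinely routine and your fixes are sound: integer rounding of $j^{\star}$ costs only a $\poly(n)$ factor since adjacent terms differ by a factor $O(n)$; in the clamped regime $k<n/c$ one indeed takes $t=0$ and the needed inequality $c^{1/c}\leq 2-\frac{1}{c}$ for $c\geq 1$ holds (e.g.\ by checking that $\ln(2-1/c)-\frac{\ln c}{c}$ vanishes at $c=1$ and has nonnegative derivative); and $c=1$ is trivial. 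One cosmetic remark: $n\leq|B|$ need not hold for a general implicit set problem, but polynomial-time computability of $E$ from $B$ gives $n\leq\poly(|B|)$, which suffices to absorb the polynomial factors. So relative to the paper your write-up is a reconstruction of the cited FGLS proof rather than of any argument appearing here, and it buys the exact constant-base bound that the paper's $\Phi_g$ machinery does not recover.
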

As already mentioned, the results in \cite{FGLS19} are stated for implicit set systems which do not involve oracles. However, the result in \cite{FGLS19}  trivially generalizes to \Cref{thm:FGLS}.  

By \Cref{thm:FGLS} and \Cref{thm:3MIMain} it follows that there is no random extension algorithm for $\cP_{\ell\textnormal{-MI}}$ of time $g(\ell)= c^{\ell}$, for any $c\geq 1$, as such algorithm would imply by \Cref{thm:FGLS} a $\left(2-\frac{1}{c}\right)\cdot \poly(|B|)$ algorithm for $\cP_{\ell\textnormal{-MI}}$  which contradicts \Cref{thm:3MIMain}. As a consequence, by \Cref{lem:MI_ext}, there is no random parameterized algorithm for oracle $\ell$-MI which runs in time $c^k\cdot \poly(|E|)$. 
Our goal is to provide a variant of \Cref{thm:FGLS} which gives  a stronger lower bound for the running time of a parameterized algorithm for oracle $\ell$-MI. Furthermore, \Cref{thm:FGLS} cannot be used together with the random extension algorithm for  $\cP_{\ell\textnormal{-MI}}$ of time $g(\ell)=c^{\ell^2}$, as it the theorem only support extensions algorithms with running time of the form $g(\ell)=c^{\ell}$. Thus, the second goal of our extension of \Cref{thm:FGLS} is to provide a variant which can utilize this extension algorithm to get an algorithm for oracle $3$-MI which is faster than brute force.

Our generalization of monotone local search needs to be able to optimize its use of the extension oracle. This is done by solving a discrete optimization problem over the function $g$, the running time of the extension algorithm. We therefore require $g$ to be a function for which the optimization problem can be solved efficiently, as stated by the following definition. 
\begin{definition}
	\label{def:optimizable}
	We say a function $g$ is {\em optimizable} if for every $n\in \mathbb{N}$ and $0\leq k\leq n$ the value $$\argmin_{0 \leq t \leq k} \frac{\binom{n}{t}}{\binom{k}{t}} \cdot g(k-t)$$
	can be computed in polynomial time in $n$.  
\end{definition}
In case $g(n)$ can be computed in polynomial time in $n$ then $g$ is trivially optimizable as  $\argmin_{0 \leq t \leq k} \frac{\binom{n}{t}}{\binom{k}{t}} \cdot g(k-t)$ can be computed in polynomial time by iterating over all possible values of $t$. In other cases, such as $g(n)=2^{2^n}$ the value of $g$ cannot be represented using polynomial number of bits (in standard representation), and hence a slight sophistication is required in order to show that $g$ is still optimizable. 

Our extension for monotone local search is the following. 
\begin{theorem}[Monotone Local Search]
	\label{thm:monotone_local_search}

		Let $\cP$ be a implicit set problem which has a random extension algorithm of time $g$ such that $g$ is optimizable  and $g(0)=1$. Then there is an randomized algorithm for $\cP$ which runs in time
	$$
	2^{|E|-\Phi_g(|E|)} \cdot \poly(|B|), 
	$$
	where $E$ is the ground set of the instance, $B$ is the encoding of  the instance, and
	\begin{equation}
		\label{eq:phi_def}
		\Phi_g(n) = \max \left\{ \ell \cdot \log \left(\frac{n}{4\cdot \ell}\right) -\log g(\ell) ~\big|~0\leq \ell \leq n/4,~\ell\in \mathbb{N}
		\right\}.
	\end{equation}
\end{theorem}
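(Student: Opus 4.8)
The plan is to follow the template of Fomin et al.\ \cite{FGLS19}, adapted to arbitrary optimizable $g$. The algorithm is: guess the size $k=|S|$ of a hypothetical solution $S\in\cF$ by iterating over all $k\in\{0,1,\dots,|E|\}$; for each $k$ compute the optimal sampling size $t=t(k)=\argmin_{0\le t\le k}\frac{\binom{n}{t}}{\binom{k}{t}}\cdot g(k-t)$ using the fact that $g$ is optimizable; then repeat the following basic trial sufficiently many times: sample a uniformly random $X\in\binom{E}{t}$, and run the random extension algorithm on input $(B,X,k-t)$ to try to find a $(k-t)$-extension of $X$; if it succeeds, output ``yes''. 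Since the extension algorithm only outputs genuine extensions, the algorithm never errs on ``no''-instances. For a ``yes''-instance with a solution $S$ of size $k$, conditioned on $X\subseteq S$ (which happens with probability $\binom{k}{t}/\binom{n}{t}$), the set $S\setminus X$ is a $(k-t)$-extension of $X$, so one trial succeeds with probability at least $\tfrac12\cdot\binom{k}{t}/\binom{n}{t}$; hence $O\!\big(\binom{n}{t}/\binom{k}{t}\big)$ trials boost the success probability above $\tfrac12$ for that value of $k$ (and a union bound / repetition argument handles the outer loop, or one simply notes that for the correct $k$ the trials already suffice).

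The heart of the argument is the running-time bound. The total work for a fixed $k$ is, up to $\poly(|B|)$ factors, $\frac{\binom{n}{t(k)}}{\binom{k}{t(k)}}\cdot g(k-t(k))=\min_{0\le t\le k}\frac{\binom{n}{t}}{\binom{k}{t}}\,g(k-t)$, and we want to show this is at most $2^{n-\Phi_g(n)}\cdot\poly(|B|)$. Writing $\ell=k-t$ and bounding $\frac{\binom{n}{t}}{\binom{k}{t}}$, one uses the standard estimate $\frac{\binom{n}{k-\ell}}{\binom{k}{k-\ell}}=\frac{\binom{n}{k-\ell}\binom{k-\ell}{?}}{\cdots}$; more usefully, the key combinatorial inequality from \cite{FGLS19} states $\binom{n}{t}/\binom{k}{t}\le 2^{n}\cdot\big(\frac{4\ell}{n}\big)^{\ell}\cdot\poly(n)$ when $t=k-\ell$ and $\ell\le n/4$ (this is essentially because $\binom{n}{t}\le 2^n$ while $\binom{k}{t}=\binom{k}{\ell}\ge(k/\ell)^\ell\ge$ a suitable power, combined with a bound $k\le n$). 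Choosing in the minimum the particular value of $t$ corresponding to the $\ell$ that attains the maximum in \eqref{eq:phi_def}, we get
\[
\frac{\binom{n}{t}}{\binom{k}{t}}\cdot g(k-t)\le 2^{n}\cdot\left(\frac{4\ell}{n}\right)^{\ell} g(\ell)\cdot\poly(n)=2^{\,n-\left(\ell\log(n/(4\ell))-\log g(\ell)\right)}\cdot\poly(n)=2^{\,n-\Phi_g(n)}\cdot\poly(n),
\]
and since $|E|\le|B|$, this is $2^{|E|-\Phi_g(|E|)}\cdot\poly(|B|)$; summing over the $n+1$ choices of $k$ only adds a polynomial factor.

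The main obstacle I anticipate is twofold. First, the combinatorial estimate relating $\binom{n}{t}/\binom{k}{t}$ to $2^n(4\ell/n)^\ell$ has to be carried out carefully for all $k$ (including $k$ close to $n$ or close to $\ell$), since $\Phi_g$ is defined via a clean maximum over $\ell$ only, and the algorithm's min is over $t$ for a fixed $k$; one must verify that restricting attention to the near-optimal $\ell$ loses at most a polynomial factor regardless of $k$. Second, there is the subtlety flagged by \Cref{def:optimizable}: when $g$ grows so fast that $g(\ell)$ cannot be written down in polynomially many bits, one cannot literally compute the products $\frac{\binom{n}{t}}{\binom{k}{t}}g(k-t)$ and compare them; the definition of optimizable is exactly what lets us sidestep this, so the proof should invoke it as a black box to obtain $t(k)$ and never actually evaluate $g$ except inside the extension algorithm (whose running time is $g(\ell)\cdot\poly(|B|)$ by assumption, so it self-certifies). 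Handling the probability amplification cleanly when the number of trials $\binom{n}{t}/\binom{k}{t}$ is itself astronomically large (but we only need to \emph{run} that many trials, which is within the claimed time budget) is a minor bookkeeping point but worth stating precisely.
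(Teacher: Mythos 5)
Your algorithm and correctness argument coincide with the paper's (guess $k$, use optimizability to pick $t$, repeat $2\binom{n}{t}/\binom{k}{t}$ sampling-plus-extension trials, never err on ``no''-instances), and your running-time analysis follows the same entropy-based route. The genuine gap is exactly in the step you flag as the main obstacle, and it is not just bookkeeping: substituting $t=k-\ell^*$, where $\ell^*$ attains the maximum in \eqref{eq:phi_def}, is not legitimate for every $k$. If $k<\ell^*$ then $t=k-\ell^*<0$ and the choice is unavailable; moreover your justification of the ``key combinatorial inequality'' $\binom{n}{t}/\binom{k}{t}\le 2^{n}\left(\frac{4\ell}{n}\right)^{\ell}\poly(n)$ — via $\binom{k}{t}=\binom{k}{\ell}\ge (k/\ell)^{\ell}$ and then $(\ell/k)^{\ell}\le(4\ell/n)^{\ell}$ — needs $k\ge n/4$; for $k<n/4$ that last step fails (the ratio $(\ell/k)^{\ell}$ can exceed $(4\ell/n)^{\ell}$ by an exponential factor), so as written the bound is only established for $k\ge n/4$. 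This inequality also cannot be cited as a black box from \cite{FGLS19}, whose analysis is tailored to $g(\ell)=c^{\ell}$ and organized differently.

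The missing piece is the case analysis the paper performs in \Cref{lem:psi_bound}. For $k<n/4$ or $k>3n/4$ one instead takes $t=k$, so the cost of that iteration is $\binom{n}{k}\cdot g(0)=\binom{n}{k}\le 2^{n\entropy{1/4}}\le 2^{0.85n}$, and one must then show this is still at most $2^{n-\Phi_g(n)}$; this requires the separate (and $g$-independent) observation of \Cref{claim:phi_little_o} that $\Phi_g(n)\le 0.15n$, because $\ell\log\left(\frac{n}{4\ell}\right)\le n\cdot\max_{x>0}\frac{1}{x}\log\left(\frac{x}{4}\right)\approx 0.133\,n$. In the middle range $\frac{n}{4}\le k\le\frac{3n}{4}$ your substitution is valid ($\ell^*\le n/4\le k$, so $t\ge 0$ and $\ell/k\le 4\ell/n$) and your chain of inequalities matches the paper's. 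So the plan is the right one and the hole is fillable exactly along the lines you anticipated, but without this case split and the $\Phi_g(n)\le 0.15n$ bound the claimed running time is unproved for all $k$ outside $[n/4,3n/4]$, so the proposal is incomplete rather than wrong-headed.
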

The proof of \Cref{thm:monotone_local_search} is given in \Cref{sec:real_monotone}.  
While \Cref{thm:monotone_local_search} can be used with $g(n)=c^n$, the running time it  provides in such cases is inferior to the running time guaranteed by \Cref{thm:FGLS}. 

We  also  provide estimations for $\Phi_g(n)$ for several special cases which are relevant to our applications. 
\begin{restatable}{lemma}{phiklogk}
	\label{lem:phi_bound_klogk}
	Define $g_{\alpha}(\ell)=\floor{2^{\alpha \cdot\ell \cdot\log (\ell) }}$ every $\alpha>0$. Then
	$\Phi_{g_{\alpha}}(n)  = \Omega\left(n^{\frac{1}{1+\alpha}} \right)$.
\end{restatable}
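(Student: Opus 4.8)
The plan is to witness the claimed lower bound by a single well-chosen value of $\ell$ in the maximization \eqref{eq:phi_def}. First I would record the trivial estimate
$$\log g_{\alpha}(\ell) \;=\; \log \floor{2^{\alpha \cdot \ell \cdot \log \ell}} \;\le\; \alpha \cdot \ell \cdot \log \ell \qquad (\ell \ge 1),$$
which holds since $\floor{x}\le x$ and $\alpha\ell\log\ell\ge 0$ for $\ell\ge 1$ (at $\ell=1$ both sides vanish). Substituting into the quantity inside the max in \eqref{eq:phi_def}, for every integer $1\le \ell \le n/4$ we get
$$\ell \cdot \log\!\left(\frac{n}{4\ell}\right) - \log g_{\alpha}(\ell) \;\ge\; \ell\bigl(\log n - 2 - (1+\alpha)\log \ell\bigr).$$

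Next I would plug in $\ell := \floor{n^{\frac{1}{1+\alpha}}/8}$. The point of the constant $1/8$ is that, since $\ell \le n^{\frac{1}{1+\alpha}}/8$, we have $(1+\alpha)\log\ell \le \log n - 3(1+\alpha) \le \log n - 3$ using $\alpha>0$, so the bracketed "profit per element" factor is at least $\log n - 2 - (\log n - 3) = 1$, uniformly in $\alpha$ and $n$. I would then check admissibility of this $\ell$: for $n$ large enough $\ell\ge 1$ (as soon as $n^{\frac{1}{1+\alpha}}\ge 8$), and $\ell \le n^{\frac{1}{1+\alpha}}/8 \le n/4$ for all $n\ge 1$ since $n^{\frac{1}{1+\alpha}}\le 2n$. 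Combining, the corresponding term in the max is at least $\ell\cdot 1 = \floor{n^{\frac{1}{1+\alpha}}/8} \ge \tfrac{1}{16}\,n^{\frac{1}{1+\alpha}}$ for all sufficiently large $n$, hence $\Phi_{g_{\alpha}}(n) = \Omega\!\left(n^{\frac{1}{1+\alpha}}\right)$.

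I do not expect a genuine obstacle here; the only delicate point is fixing the constant in the definition of $\ell$ so that the factor $\log n - 2 - (1+\alpha)\log\ell$ stays bounded below by a positive absolute constant simultaneously for every $\alpha>0$, together with the routine bookkeeping of the floors and the implicit "for $n$ large" that $\Omega$-notation permits.
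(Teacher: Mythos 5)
Your proposal is correct and follows essentially the same route as the paper: both lower-bound the maximum in \eqref{eq:phi_def} by plugging in a single choice $\ell = \Theta\bigl(n^{\frac{1}{1+\alpha}}\bigr)$ (the paper uses $\floor{n^{\frac{1}{1+\alpha}}/2^{\frac{3}{1+\alpha}}}$, you use $\floor{n^{\frac{1}{1+\alpha}}/8}$) together with the estimate $\log g_{\alpha}(\ell)\leq \alpha\cdot\ell\cdot\log \ell$, and the constant is tuned so that the per-element gain $\log n - 2 - (1+\alpha)\log \ell$ stays bounded below by a positive constant. The bookkeeping of floors and the "$n$ large enough" restrictions in your write-up is sound, so no gap.
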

The following lemma is an immediate consequence of \Cref{thm:monotone_local_search} and \Cref{lem:phi_bound_klogk}. 
\begin{lemma}
\label{lem:implicit_klogk}
Let $\cP$ be an implicit set problem which has an extension algorithm of time $g(\ell)=  \floor{2^{\alpha\cdot \ell \cdot \log \ell}}$ for some $\alpha>0$. 
Then $\cP$ has a randomized algorithm which runs in time $2^{|E|-\Omega\left(|E|^{\frac{1}{1+\alpha}}\right)}\cdot \poly(|B|)$, where $E$ is the ground set of the instance and  $B$ is the encoding of  the instance.
\end{lemma}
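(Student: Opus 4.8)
The statement to prove is Lemma~\ref{lem:implicit_klogk}: an extension algorithm of time $g(\ell)=\floor{2^{\alpha\ell\log\ell}}$ for an implicit set problem $\cP$ yields a randomized algorithm for $\cP$ running in time $2^{|E|-\Omega(|E|^{1/(1+\alpha)})}\cdot\poly(|B|)$. As the text already flags, this is advertised as an \emph{immediate consequence} of the Monotone Local Search meta-theorem (Theorem~\ref{thm:monotone_local_search}) together with the estimate on $\Phi_g$ in Lemma~\ref{lem:phi_bound_klogk}, so the plan is essentially to assemble those two pieces cleanly and verify the hypotheses.

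\textbf{Step 1: check the hypotheses of Theorem~\ref{thm:monotone_local_search} for $g=g_\alpha$.} The theorem requires $g$ to be optimizable and $g(0)=1$. For $g(0)$: $g_\alpha(0)=\floor{2^{\alpha\cdot 0\cdot\log 0}}$ — here one reads the convention $0\cdot\log 0=0$ (the $\ell=0$ term contributes no cost), so $g_\alpha(0)=\floor{2^0}=1$, as needed. For optimizability: $g_\alpha(\ell)=\floor{2^{\alpha\ell\log\ell}}$ is an integer of at most $\alpha\ell\log\ell+1=\poly(\ell)$ bits, hence computable in time polynomial in $n$ for all $0\le\ell\le n$; by the remark immediately following Definition~\ref{def:optimizable}, any $g$ computable in polynomial time is optimizable, since one simply iterates over all $t\in\{0,\dots,k\}$ and evaluates $\binom{n}{t}/\binom{k}{t}\cdot g(k-t)$. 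So both hypotheses hold and Theorem~\ref{thm:monotone_local_search} applies, giving a randomized algorithm for $\cP$ of running time $2^{|E|-\Phi_{g_\alpha}(|E|)}\cdot\poly(|B|)$.

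\textbf{Step 2: plug in the lower bound on $\Phi_{g_\alpha}$.} By Lemma~\ref{lem:phi_bound_klogk}, $\Phi_{g_\alpha}(n)=\Omega(n^{1/(1+\alpha)})$. Substituting $n=|E|$ into the running time from Step~1 gives
\[
2^{|E|-\Phi_{g_\alpha}(|E|)}\cdot\poly(|B|)\;=\;2^{|E|-\Omega\left(|E|^{\frac{1}{1+\alpha}}\right)}\cdot\poly(|B|),
\]
which is exactly the claimed bound. This completes the proof.

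\textbf{Remark on the main obstacle.} There is essentially no obstacle in this lemma itself — all the real work lives in Theorem~\ref{thm:monotone_local_search} (the Monotone Local Search algorithm and its analysis) and in Lemma~\ref{lem:phi_bound_klogk} (the calculus estimate that optimizes $\ell\log(n/4\ell)-\alpha\ell\log\ell$ over $\ell$, whose optimum is attained around $\ell\approx n^{1/(1+\alpha)}$ up to constants). If I were to do anything non-routine here it would only be to double-check the $\ell=0$ boundary convention so that $g_\alpha(0)=1$ is literally true rather than a vacuous edge case, and to note that $\poly(|B|)$ absorbs the polynomial overhead of the optimization step (which is legitimate since $|E|\le|B|$ for the instances of interest). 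The one place to be slightly careful is that Lemma~\ref{lem:phi_bound_klogk} is stated for $g_\alpha(\ell)=\floor{2^{\alpha\ell\log\ell}}$ exactly, matching the hypothesis of Lemma~\ref{lem:implicit_klogk} verbatim, so no reconciliation between the two forms of $g$ is needed.
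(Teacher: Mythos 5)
Your proposal is correct and follows exactly the route the paper intends: the paper derives \Cref{lem:implicit_klogk} as an immediate consequence of \Cref{thm:monotone_local_search} combined with \Cref{lem:phi_bound_klogk}, which is precisely your Step~1 plus Step~2. Your additional verifications (that $g_\alpha(0)=1$ under the $0\cdot\log 0=0$ convention and that $g_\alpha$ is optimizable because it is polynomial-time computable) are sound and merely make explicit what the paper leaves implicit.
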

We use \Cref{lem:implicit_klogk} to show the following. 
\thmlowerBoundParameterized*
\begin{proof}
Assume towards a contradiction that there is a parameterized algorithm for oracle $\ell$-MI which runs in time $\floor{2^{\alpha\cdot k\log(k)}}\cdot \poly(|E|)$ for where $\alpha=\ell-2-\eps$. Then by \Cref{lem:MI_ext} there is a random extension algorithm for $\cP_{\ell\textnormal{-MI}}$ of time $g_{\alpha}(\ell) =\floor{2^{\alpha \cdot \ell\log(\ell)}}$. Hence, by \Cref{lem:implicit_klogk} there is an algorithm for  $\cP_{\ell\textnormal{-MI}}$  which runs in time 
$$
	2^{|E|-f(|E|) }\cdot \poly(|B|)  \leq
	 2^{|E|-f(|E|)  + c\cdot \log(|E|) }
$$
for some $c\geq 0$ where $f(n) =\Omega(n^{\frac{1}{1+\alpha}})$. By \Cref{thm:3MIMain} for every $n\in\mathbb{N}$ such that $n\geq 2^{\ell-1}$ and $n^{\frac{1}{\ell-1}}\in \mathbb{N}$,  it must hold that 
$$
n-f(n) + c\cdot \log(n)\geq n-5\cdot n^{\frac{1}{\ell-1}}\log n .
$$
Therefore, 
$$
f(n) \leq 5\cdot n^{\frac{1}{\ell-1}}\log(n)  +c\cdot \log(n) = O(n^{\frac{1}{\ell-1}}\cdot \log n ),
$$
for infinitely many values of $n\in \mathbb{N}$. 
Since $f(n)= \Omega\left(n^{\frac{1}{1+\alpha}} \right)$,  we have $\frac{1}{1+\alpha}\leq  \frac{1}{\ell-1}$. That is, $\ell-2-\eps = \alpha\geq \ell -2$. A contradiction to $\eps>0$.
 \end{proof}
 
 We use the following estimation for $\Phi$ to attain an algorithm for oracle $\ell$-MI which is significantly faster than brute force. 
\begin{restatable}{lemma}{phiksquare}
	\label{lem:phi_bound_kquare}
	Let $g(\ell)=\floor{2^{\alpha\cdot \ell^2}}$ for some $\alpha>0$. Then
	$\Phi_g(n)  = \Omega\left(  \log^2(n)  \right)$.
\end{restatable}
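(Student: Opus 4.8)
The plan is to lower-bound $\Phi_g(n)$ by evaluating the bracketed expression in \eqref{eq:phi_def} at a single well-chosen integer $\ell$ of order $\log n$. First I would record the trivial estimate $g(\ell) = \floor{2^{\alpha \ell^2}} \le 2^{\alpha \ell^2}$, so that $0 \le \log g(\ell) \le \alpha \ell^2$ for every integer $\ell \ge 0$. Substituting into \eqref{eq:phi_def}, for any integer $\ell$ with $0 \le \ell \le n/4$ we obtain
\[
  \Phi_g(n) \;\ge\; \ell \cdot \log\!\left(\frac{n}{4\ell}\right) - \alpha \ell^2 .
\]

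Next I would fix a constant $c$ with $0 < c < 1/\alpha$ — concretely $c = \tfrac{1}{2\alpha}$ — and set $\ell = \ell(n) := \floor{c \log n}$. For all sufficiently large $n$ this choice is feasible, i.e. $1 \le \ell \le n/4$, since $c\log n = o(n)$; and $n/(4\ell) \to \infty$, so the logarithm above is positive and the displayed bound applies. Expanding, $\ell \log\!\big(\tfrac{n}{4\ell}\big) = \ell \log n - \ell\log(4\ell) = c\log^2 n - O(\log n \cdot \log\log n) = (c - o(1))\log^2 n$, while $\alpha\ell^2 = (\alpha c^2 + o(1))\log^2 n$. Hence
\[
  \Phi_g(n) \;\ge\; \big(c - \alpha c^2 - o(1)\big)\log^2 n .
\]
With $c = \tfrac{1}{2\alpha}$ we have $c - \alpha c^2 = \tfrac{1}{4\alpha} > 0$, and therefore $\Phi_g(n) = \Omega(\log^2 n)$.

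The computation is routine; the one place that needs a moment's thought is the choice of the constant $c$, which must satisfy $c - \alpha c^2 > 0$ (equivalently $c \in (0, 1/\alpha)$) so that the gain term $\ell\log(n/(4\ell))$ dominates the penalty $\alpha\ell^2$ coming from the extension algorithm's running time; any such $c$ works, and $c = \tfrac{1}{2\alpha}$ is a convenient choice. The remaining bookkeeping — absorbing the floor in $\ell = \floor{c\log n}$ and the term $\ell\log(4\ell) = O(\log n\log\log n)$ into the $o(\log^2 n)$ error, and noting the asymptotic claim is vacuous for small $n$ — is immediate. This estimate then feeds into \Cref{thm:monotone_local_search} to yield the $2^{\,n-\Omega(\log^2 n)}$ running time of \Cref{thm:faster_MI}.
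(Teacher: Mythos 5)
Your proposal is correct and follows essentially the same route as the paper: both lower-bound $\Phi_g(n)$ by plugging in a single value $\ell = \Theta(\log n)$ with the constant chosen as a function of $\alpha$ so that the gain $\ell\log(n/(4\ell))$ strictly dominates the penalty $\alpha\ell^2$, yielding a $\Omega(\log^2 n)$ bound. The only cosmetic difference is that the paper works with $\beta=\max\{\alpha,1\}$ and $\ell(n)=\floor{\frac{1}{4\beta}\log n}$, which makes the feasibility constraint $\ell\le n/4$ hold for all $n$ rather than only for sufficiently large $n$.
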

The following lemma is an immediate consequence of \Cref{thm:monotone_local_search} and \Cref{lem:phi_bound_kquare}.
\begin{lemma}
	\label{lem:implicit_ksquare}
Let $\cP$ be an implicit set problem which has an extension algorithm of time $g(\ell)=  \floor{2^{\alpha\cdot \ell^2}}$ for some $\alpha>0$. Then $\cP$ has a randomized algorithm which runs in time $2^{|E|-\Omega(\log^2(|E|))}\cdot \poly(|B|)$, where $E$ is the ground set of the instance and  $B$ is the encoding of  the instance.
\end{lemma}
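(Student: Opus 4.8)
The plan is to obtain the statement directly from \Cref{thm:monotone_local_search} and \Cref{lem:phi_bound_kquare}; the only genuine task is to check that the hypotheses of \Cref{thm:monotone_local_search} are satisfied by $g(\ell)=\floor{2^{\alpha\cdot \ell^2}}$. First I would observe that $g(0)=\floor{2^{0}}=1$, as required, and that the extension algorithm of the hypothesis is in particular a random extension algorithm of time $g$. The remaining condition is that $g$ be \emph{optimizable} in the sense of \Cref{def:optimizable}: for every $n\in\mathbb{N}$ and $0\le k\le n$, the value $\argmin_{0\le t\le k}\frac{\binom{n}{t}}{\binom{k}{t}}\cdot g(k-t)$ must be computable in time $\poly(n)$. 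Since $g(\ell)=\floor{2^{\alpha\ell^2}}$ has a standard binary representation of $O(\alpha\ell^2)=O(n^2)$ bits whenever $\ell\le k\le n$, each of the $k+1\le n+1$ candidate quantities $\frac{\binom{n}{t}}{\binom{k}{t}}\cdot g(k-t)$ can be computed exactly as a rational number and compared in time $\poly(n)$; returning the index achieving the minimum yields the desired $\argmin$. Hence $g$ is optimizable.

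With the hypotheses verified, I would invoke \Cref{thm:monotone_local_search}: since $\cP$ has a random extension algorithm of time $g$ with $g$ optimizable and $g(0)=1$, there is a randomized algorithm for $\cP$ running in time $2^{|E|-\Phi_g(|E|)}\cdot\poly(|B|)$, where $\Phi_g$ is the function defined in \eqref{eq:phi_def}. Finally, applying \Cref{lem:phi_bound_kquare} to this particular $g$ gives $\Phi_g(n)=\Omega(\log^2 n)$; substituting $n=|E|$ then produces a running time of $2^{|E|-\Omega(\log^2(|E|))}\cdot\poly(|B|)$, which is exactly the claimed bound. This completes the argument.

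The main (and essentially only) obstacle is the bookkeeping around optimizability: one must ensure that although $g$ grows doubly-exponentially in $\ell$, the discrete optimization in \Cref{def:optimizable} still runs in polynomial time. This is immediate in the present case because the exponent $\alpha\ell^2$ is polynomially bounded in $n$, so all relevant values of $g$ admit polynomial-size representations. All the substantive content — the Monotone Local Search algorithm underlying \Cref{thm:monotone_local_search} and the estimate $\Phi_g(n)=\Omega(\log^2 n)$ of \Cref{lem:phi_bound_kquare} — is already established and is used here as a black box.
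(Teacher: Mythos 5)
Your proposal is correct and follows the same route as the paper, which derives the lemma as an immediate consequence of \Cref{thm:monotone_local_search} and \Cref{lem:phi_bound_kquare}; your explicit verification that $g(\ell)=\floor{2^{\alpha\ell^2}}$ is optimizable and satisfies $g(0)=1$ simply spells out details the paper leaves implicit.
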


We use \Cref{lem:implicit_ksquare} to prove \Cref{thm:faster_MI}.

\thmalg*
\begin{proof}
Since there is a parameterized algorithm for $\ell$-matroid intersection which runs in time $c^{k^2}\cdot \poly(|E|)$ for some $c>1$, then by \Cref{lem:MI_ext} there is a random extension algorithm for $\cP_{\ell\textnormal{-MI}}$ of time $g(\ell)= \floor{c^{\ell^2}}$. 
Therefore by \Cref{lem:implicit_ksquare} there is a randomized algorithm for $\ell$-MI which runs in time 
$$ 2^{|E|-\Omega(\log^2(|E|))}\cdot \poly(|B|) = 2^{|E|-\Omega(\log^2(|E|))}.
$$
\end{proof}

Finally, we show that for every function $g$ it holds that the running time of the monotone local search algorithm is better than $\frac{2^n}{\poly(n)}$.  
\begin{restatable}{lemma}{asymphi}
	\label{lem:phi_asym}
	For every function $g:\mathbb{N}\rightarrow \mathbb{N}$ it holds that 
	$\Phi_g(n) = \omega(\ln n )$. 
\end{restatable}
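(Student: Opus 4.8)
The plan is to exploit the $\max$ in \eqref{eq:phi_def} by using a single fixed integer $\ell$ as a witness and then letting $n\to\infty$, rather than trying to let $\ell$ grow with $n$. Concretely, fix an arbitrary $\ell\in\mathbb{N}$. For every $n\ge 4\ell$ the value $\ell$ lies in the feasible range $0\le\ell\le n/4$ of \eqref{eq:phi_def}, so by definition of $\Phi_g$,
\[
\Phi_g(n)\ \ge\ \ell\cdot\log\!\left(\frac{n}{4\ell}\right)-\log g(\ell)\ =\ \ell\cdot\log n-\bigl(\ell\cdot\log(4\ell)+\log g(\ell)\bigr).
\]
Writing $c_\ell:=\ell\cdot\log(4\ell)+\log g(\ell)$, and noting that $g(\ell)\in\mathbb{N}$ (so $\log g(\ell)$ is finite and nonnegative), the quantity $c_\ell$ is a finite constant depending only on $\ell$ and $g$, not on $n$. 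Hence $\Phi_g(n)\ge \ell\log n-c_\ell$ for all $n\ge 4\ell$.

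Next I would divide by $\ln n$ and pass to the limit in $n$ with $\ell$ held fixed. Using $\log n=\ln n/\ln 2$ this gives
\[
\frac{\Phi_g(n)}{\ln n}\ \ge\ \frac{\ell}{\ln 2}-\frac{c_\ell}{\ln n},
\]
and the right-hand side tends to $\ell/\ln 2$ as $n\to\infty$. Therefore $\liminf_{n\to\infty}\Phi_g(n)/\ln n\ \ge\ \ell/\ln 2$. Since $\ell\in\mathbb{N}$ was arbitrary and $\ell/\ln 2\to\infty$ as $\ell\to\infty$, we conclude $\lim_{n\to\infty}\Phi_g(n)/\ln n=\infty$, which is exactly the assertion $\Phi_g(n)=\omega(\ln n)$. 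Equivalently, in $\varepsilon$–$n_0$ form: given any $M>0$, pick an integer $\ell>M\ln 2$; then for all $n\ge 4\ell$ the displayed bound yields $\Phi_g(n)\ge \ell\log n-c_\ell$, which exceeds $M\ln n$ once $n$ is large enough that $\ell\log n - c_\ell \ge (\ell/\ln2 - 1)\ln n \ge M\ln n$.

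I do not anticipate a real obstacle; the only point that deserves care is the order of quantifiers. One might worry that a wildly fast-growing $g$ (e.g.\ $g(\ell)=2^{2^\ell}$) makes the bracketed term $\log g(\ell)$ dominate $\ell\log n$. This does not happen because we never let $\ell$ depend on $n$: for each fixed $\ell$, $g(\ell)$ contributes only the constant $c_\ell$, and the term $\ell\log n$ eventually overwhelms it. So the whole argument reduces to observing that each fixed $\ell$ becomes a legal witness in \eqref{eq:phi_def} as soon as $n\ge 4\ell$, and that the coefficient $\ell$ of $\log n$ can be taken as large as we like. A final cosmetic step would be to state the conclusion in whatever precise form the paper uses for $\omega(\cdot)$.
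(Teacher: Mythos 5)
Your proof is correct, and it takes a genuinely simpler route than the paper's. You fix a single $\ell$ and observe that once $n\ge 4\ell$ it is a legal witness in \eqref{eq:phi_def}, so $\Phi_g(n)\ge \ell\log n - c_\ell$ with $c_\ell=\ell\log(4\ell)+\log g(\ell)$ a constant (finite since $g(\ell)\in\mathbb{N}$); letting $n\to\infty$ with $\ell$ fixed and then taking $\ell$ arbitrary gives $\liminf_n \Phi_g(n)/\ln n=\infty$. The key point you identify --- that $\ell$ never depends on $n$, so even a tower-type $g$ only contributes a constant --- is exactly what makes this diagonal argument work. The paper instead chooses an $n$-dependent witness: it sets $h(\ell)=\max\{2^{\ell^2},g(\ell)\}$ and $\ell(n)=\max\{1\le\ell\le n/4 \mid \log(h(\ell))/\ell\le\tfrac12\log n\}$, shows $\ell(n)\le\tfrac12\log n$ and $\ell(n)\to\infty$, and deduces $\Phi_g(n)\ge \tfrac12\ell(n)\log n-\ell(n)\log(4\ell(n))$, so that $\Phi_g(n)/\log n\ge \tfrac12\ell(n)-\log(4\ell(n))\to\infty$. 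The paper's construction buys an explicit (if implicit in $g$) growing witness and hence a quantitative handle on the rate, but for the bare claim $\Phi_g(n)=\omega(\ln n)$ your fixed-witness argument is fully sufficient and cleaner. One cosmetic nit: in your final $\varepsilon$--$n_0$ restatement, picking $\ell>M\ln 2$ does not quite support the chain $\ell\log n-c_\ell\ge(\ell/\ln 2-1)\ln n\ge M\ln n$; you need $\ell\ge (M+1)\ln 2$ (or simply rely on the liminf argument, which already closes the proof).
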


\Cref{lem:phi_asym} together with  \Cref{thm:monotone_local_search} immediately imply the following.
\begin{lemma}
	\label{lem:better_than_brute}
	Let $\cP$ be an implicit set problem such that there is an random extension algorithm for $\cP$ of time $g$ for an arbitrary optimizable function  $g$. Then, there is an algorithm for $\cP$ of time $2^{|E|-\omega(\log|E|)}\cdot \poly(B)$. 
\end{lemma}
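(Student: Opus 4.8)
The plan is to lower-bound $\Phi_g(n)$ by exhibiting, for each large $n$, a single good value of $\ell$ in the maximization \eqref{eq:phi_def}: one large enough that the gain term $\ell\log(n/(4\ell))$ is super-logarithmic in $n$, yet small enough that the penalty $\log g(\ell)$ stays controlled even though $g$ may grow arbitrarily fast. Concretely, I would set
\[
\ell(n) \;=\; \max\bigl\{\,\ell \in \mathbb{N} \;:\; \ell \le \sqrt{n} \ \text{ and } \ g(\ell) \le n\,\bigr\}.
\]
The choice of the two thresholds is the only delicate point: the bound $\ell \le \sqrt n$ is there to keep $n/(4\ell)$ bounded away from $1$ (so the logarithm is $\Theta(\log n)$ rather than small or negative), while the bound $g(\ell)\le n$ is what tames the penalty term.

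The conceptual heart of the argument is the claim that $\ell(n)$ is well-defined for all large $n$ and that $\ell(n)\to\infty$ as $n\to\infty$. This is exactly where the ``arbitrariness'' of $g$ is defeated: for any fixed $m\in\mathbb{N}$, the quantity $g(m)$ is a fixed finite number, so once $n \ge \max(m^2, g(m))$ the integer $m$ satisfies both constraints defining $\ell(n)$, whence $\ell(n)\ge m$. Taking $m=1$ gives well-definedness (for $n\ge g(1)$), and letting $m\to\infty$ gives $\ell(n)\to\infty$. I expect this to be the main obstacle to articulate cleanly, but it is genuinely the only non-routine step.

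Next I would substitute $\ell=\ell(n)$ into \eqref{eq:phi_def}. Since $\ell(n)\le\sqrt n\le n/4$ for $n\ge 16$, this is a legal choice in the maximum, and using $g(\ell(n))\le n$ together with $\ell(n)\le\sqrt n$ I would obtain
\[
\Phi_g(n) \;\ge\; \ell(n)\log\!\left(\frac{n}{4\ell(n)}\right) - \log g(\ell(n)) \;\ge\; \ell(n)\log\!\left(\frac{\sqrt n}{4}\right) - \log n \;=\; \ell(n)\Bigl(\tfrac12\log n - 2\Bigr) - \log n .
\]
For $n\ge 256$ one has $\tfrac12\log n - 2 \ge \tfrac14\log n$, so $\Phi_g(n) \ge \log n\,\bigl(\ell(n)/4 - 1\bigr)$. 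Dividing by $\ln n$ and invoking $\ell(n)\to\infty$ yields $\Phi_g(n)/\ln n \to \infty$, i.e.\ $\Phi_g(n) = \omega(\ln n)$, as required. All remaining steps (checking the inequalities for $n$ above the stated thresholds, and the passage from $\log_2$ to $\ln$, which only costs a constant factor) are routine.
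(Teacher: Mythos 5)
Your proposal is correct and takes essentially the paper's own route: the paper likewise obtains the lemma by combining the $2^{|E|-\Phi_g(|E|)}\cdot\poly(|B|)$ guarantee of \Cref{thm:monotone_local_search} with an asymptotic bound $\Phi_g(n)=\omega(\log n)$ (\Cref{lem:phi_asym}), proved by substituting a slowly growing $\ell(n)\to\infty$ defined as the largest $\ell$ meeting a $g$-dependent feasibility condition — the paper uses $\ell\le n/4$ and $\log\bigl(\max\{2^{\ell^2},g(\ell)\}\bigr)/\ell\le\tfrac12\log n$, while you use $\ell\le\sqrt n$ and $g(\ell)\le n$, a slightly cleaner but equivalent device, with the same ``for every fixed $m$, $m$ is eventually feasible'' argument forcing $\ell(n)\to\infty$. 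In the final write-up just state explicitly that the conclusion follows by feeding this bound on $\Phi_g$ into \Cref{thm:monotone_local_search}, as the paper does.
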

Every implicit set problem $\cP$ can be solved in time $\approx 2^{|E|}$ using brute force enumeration over all subsets $S$ of $E$.  \Cref{lem:better_than_brute} essentially asserts that if $\cP$ has an extension algorithm then it can be solved faster than brute force.

The proofs of \Cref{lem:phi_bound_klogk,lem:phi_bound_kquare,lem:phi_asym} are given in \Cref{sec:prop_phi}.

\subsection{Monotone Local Search: The Algorithm}
\label{sec:real_monotone}

The algorithm we use to prove \Cref{thm:monotone_local_search} is nearly identical to the  {\em Monotone Local Search} algorithm of Fomin, Gaspers, Lokshtanov  and Saurabh \cite{FGLS19}. The key distinction is in the   analysis which considers  different running times for the extension algorithm. 
Let $\cP$ be an implicit set problem. We assume $\cP$ is fixed throughout the section. Our goal is to design an algorithm for $\cP$ for which there is a random extension algorithm  $\ext$ of time $g$, where the function $g$ satisfies the conditions in \Cref{thm:monotone_local_search}. We further assume $\ext$ and $g$ are fixed throughout the section. 

Recall that an instance of $\cP$ is of the form $(E,\cF, B,\oracle)$ where the encoding $B$ is given to the algorithm as input and the algorithm can access $\oracle$ as an oracle. 

Consider the case in which  $\cF\neq \emptyset$,  then there is $S^*\in\cF$. Let  that $\abs{S^*}=k$.  The algorithm guesses $k$ by iterating over all values from $0$ to $|E|$. Now,  if one samples a uniformly random set $X$ of $[n]$ of size $t$, then with probability $\frac{\binom{k}{t}}{\binom{n}{t}}$ it holds that $X\subseteq S^*$ (there are $\binom{n}{t}$ subsets of $[n]$ of size $t$, and $\binom{k}{t}$ of these subsets only contain items in $S^*$).
Furthermore, if $X\subseteq S^*$ then $\ext(B,X,k-t)$ returns a set  $S$ such that $X\cup S\in \cF$ with probability at least $\frac{1}{2}$.
Therefore, if the algorithm randomly samples a set $X$ of size $t$ and invokes $\ext(B,X,k-t)$ for $2\cdot \frac{\binom{n}{t}}{\binom{k}{t}}$ many times, with probability at least $\frac{1}{2}$ one of the calls for $\ext$ returns a $(k-t)$-extension  (and not the symbol $\perp$). In such a case, the algorithm can safely  conclude that $\cF\neq \emptyset$. Otherwise, the algorithm may return that $\cF=\emptyset$, and be right with high probability. 
The value of $t$ is simply selected so the overall running time is minimized. 

The pseudo-code of the algorithm is  given in~\Cref{alg:monotone_local_search}. The algorithm finds the optimal value for $t$ and initiates multiple calls to the $\sample$ procedure, given in \Cref{alg:sample}.  
The procedure  depends on the random extension algorithm $\ext$ for $\cP$. 
The procedure samples a subset $X\subseteq E$ of size $t$ and attempts to extend it using  $\ext$. 

\begin{algorithm}[t]
	\caption{$\sample(B,k,t)$}
	\SetKwInOut{Input}{input}
	\SetKwInOut{Output}{output}
	\SetAlgoNlRelativeSize{0}
	\label{alg:sample}
	
	\Input{ Encoding $B$ of an instance $(E,\cF,B,\oracle) \in \cP$, number $k,t\in \mathbb{N}$ and  oracle access to $\oracle$. } 
	
	Compute the set $E$ from $B$
	
 	Sample a uniformly random subset $X$ of $E$ of size $t$ \label{sam:sample}
	
	$S\leftarrow \ext(B,X,k-t )$  \label{sam:ext}\\

	\Return S	

\end{algorithm}

\begin{algorithm}[b]
	\caption{Monotone Local Search for $\cP$}
	\SetKwInOut{Input}{input}
	\SetKwInOut{Output}{output}
	\SetAlgoNlRelativeSize{0}
	\label{alg:monotone_local_search}
	
	\Input{ Encoding $B$ of an instance $(E,\cF,B,\oracle) \in \cP$, oracle access to $\oracle$. 
	 } 
	
	Define $\cL \leftarrow \emptyset$\\
	
	Compute the set $E$ of the instance, and let $n\leftarrow |E|$. \\
	
	\For{$k$  from $0$ to $n$\label{mls:loop}} {	
	$t \leftarrow \argmin_{0 \leq t \leq k} \frac{\binom{n}{t}}{\binom{k}{t}} \cdot g(k-t)$\label{mls:t}

	Run $\cL \leftarrow \cL\cup \sample(B,k,t)$ for $2\cdot \frac{\binom{n}{t}}{\binom{k}{t}}$  times \label{mls:sample} 
	}
	{\bf If} $\cL=\{\perp\}$  {\bf then} return ``no'', {\bf else} return ``yes''\label{mls:return}
\end{algorithm}

 \Cref{thm:monotone_local_search}  is an immediate consequence of \Cref{lem:mls_correct,lem:mls_cost} presented below.
\begin{restatable}{lemma}{mlscorrect}
	\label{lem:mls_correct}
	\Cref{alg:monotone_local_search} is a random algorithm for $\cP$. 
\end{restatable}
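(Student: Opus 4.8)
The plan is to verify directly the two defining properties of a randomized algorithm for the decision problem $\cP$ (see \Cref{sec:prel}): that on a ``no''-instance \Cref{alg:monotone_local_search} returns ``no'' with probability $1$, and that on a ``yes''-instance it returns ``yes'' with probability at least $\frac{1}{2}$. First I would note that the algorithm always halts, since the main loop has $n+1$ iterations, each performing finitely many calls to $\sample$, and each such call halts because $\ext$ runs in time $g(k-t)\cdot\poly(|B|)<\infty$; hence the value returned in the final line is well defined. For the ``no'' case I would take an instance $(E,\cF,B,\oracle)\in\cP$ with $\cF=\emptyset$ and use the soundness guarantee of the extension algorithm: whenever $\ext(B,X,k-t)$ returns a set $S\subseteq E$, that $S$ is a $(k-t)$-extension of $X$, i.e.\ $X\cup S\in\cF$. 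Since $\cF=\emptyset$ this never happens, so every call to $\sample$ returns $\perp$; therefore $\cL=\{\perp\}$ after the loop (note $\cL$ is never empty, as the $k=0$ iteration makes at least one $\sample$ call), and the algorithm returns ``no'' with probability $1$.

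For the ``yes'' case I would fix $S^*\in\cF$, set $k^*=|S^*|$ so that $0\le k^*\le n$ with $n=|E|$, and restrict attention to the loop iteration with $k=k^*$. Let $t^*$ denote the value selected for $t$ in that iteration; since it minimizes over $t\in\{0,\dots,k^*\}$ we have $0\le t^*\le k^*$, so $\binom{k^*}{t^*}/\binom{n}{t^*}$ is well defined and positive. In one execution of $\sample(B,k^*,t^*)$ the subset $X$ is a uniformly random $t^*$-subset of $E$, so $\Pr[X\subseteq S^*]=\binom{k^*}{t^*}/\binom{n}{t^*}$; and conditioned on $X\subseteq S^*$, the set $S^*\setminus X$ has size $k^*-t^*$ and satisfies $X\cup(S^*\setminus X)=S^*\in\cF$, hence is a $(k^*-t^*)$-extension of $X$. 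By the completeness guarantee of $\ext$, the call $\ext(B,X,k^*-t^*)$ then returns a set (not $\perp$) with probability at least $\frac{1}{2}$. So one execution of $\sample(B,k^*,t^*)$ returns a set with probability at least $p:=\frac{1}{2}\cdot\binom{k^*}{t^*}/\binom{n}{t^*}$.

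Since in this iteration $\sample(B,k^*,t^*)$ is invoked at least $2\binom{n}{t^*}/\binom{k^*}{t^*}$ times with fresh independent randomness, the number of repetitions $r$ satisfies $pr\ge1$, so the probability that all of them return $\perp$ is at most $(1-p)^r\le e^{-pr}\le e^{-1}<\frac{1}{2}$. Hence with probability at least $1-e^{-1}>\frac{1}{2}$ some call to $\sample$ (in this or another iteration) returns a set $S\subseteq E$; as $S\ne\perp$, after the loop $\cL\ne\{\perp\}$ and the algorithm returns ``yes''. That would establish the two required properties. I expect the only step needing genuine care to be this per-iteration probability estimate — correctly identifying $\binom{k^*}{t^*}/\binom{n}{t^*}$ as the chance a uniform $t^*$-subset of $E$ falls inside $S^*$, chaining it with the $\frac{1}{2}$ guarantee of $\ext$, and checking that the repetition count $\approx 2\binom{n}{t^*}/\binom{k^*}{t^*}$ forces $pr\ge1$ so that $(1-p)^r\le e^{-1}$ — while everything else is essentially bookkeeping against the definition of an extension algorithm.
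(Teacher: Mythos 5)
Your proposal is correct and follows essentially the same route as the paper's proof: the ``no'' case via the soundness of $\ext$ forcing every $\sample$ call to return $\perp$, and the ``yes'' case by focusing on the iteration $k=|S^*|$, bounding a single call's success probability by $\frac{1}{2}\cdot\binom{k}{t}/\binom{n}{t}$, and using the $2\binom{n}{t}/\binom{k}{t}$ independent repetitions to get failure probability at most $e^{-1}$. The only cosmetic difference is that you bound $(1-p)^r\le e^{-pr}$ whereas the paper uses $(1-1/x)^x\le e^{-1}$ directly, which is the same estimate.
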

\begin{restatable}{lemma}{mlscost}
	\label{lem:mls_cost}
	\Cref{alg:monotone_local_search} runs in time  $f2^{|E|-\Phi_g(|E|)}\cdot  \poly(|B|)$. 
\end{restatable}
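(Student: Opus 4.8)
The plan is to charge the running time of \Cref{alg:monotone_local_search} iteration by iteration of its \texttt{for} loop, reduce the total to a purely combinatorial sum, and then bound that sum using the definition of $\Phi_g$. Write $n=|E|$, and for $0\le t\le k\le n$ put $q_k(t)=\frac{\binom n t}{\binom k t}\,g(k-t)$; the value chosen in line~\ref{mls:t} is $t_k=\argmin_{0\le t\le k}q_k(t)$. First I would bound the cost of iteration $k$: computing $E$ from $B$ and computing $t_k$ in line~\ref{mls:t} takes $\poly(|B|)$ time — the latter precisely because $g$ is optimizable (\Cref{def:optimizable}) — while a single call to $\sample(B,k,t_k)$ computes $E$, samples a $t_k$-subset of $E$, and runs $\ext(B,X,k-t_k)$, hence costs $g(k-t_k)\cdot\poly(|B|)$ by the definition of an extension algorithm of time $g$. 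Line~\ref{mls:sample} makes $\big\lceil 2\binom n{t_k}/\binom k{t_k}\big\rceil$ such calls, and since $\binom n{t_k}\ge\binom k{t_k}$ and $g\ge 1$, iteration $k$ runs in time $O\!\big(q_k(t_k)\big)\cdot\poly(|B|)=O\!\big(\min_{t}q_k(t)\big)\cdot\poly(|B|)$. Thus the algorithm runs in time $\poly(|B|)\cdot\sum_{k=0}^{n}\min_{0\le t\le k}q_k(t)$, so (as $n\le|B|$) it remains to prove the pointwise estimate $\min_{0\le t\le k}q_k(t)\le 2^{\,n-\Phi_g(n)}$ for every $k$; summing over the $n+1$ values of $k$ and folding $\poly(n)$ into $\poly(|B|)$ then gives the stated bound.

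For the pointwise estimate I would use the identity $\frac{\binom n t}{\binom k t}=\frac{\binom n k}{\binom{n-t}{k-t}}$ (both sides count pairs $A\subseteq C\subseteq[n]$ with $|A|=t,\ |C|=k$) together with a choice of $t$ that is optimal only for the analysis. Let $\ell^*\in\{0,\dots,\lfloor n/4\rfloor\}$ attain the maximum in \eqref{eq:phi_def}; note the algorithm never computes it. If $\ell^*=0$ then $\Phi_g(n)=0$ and $\min_tq_k(t)\le q_k(k)=\binom n k\le 2^n$, so assume $\ell^*\ge 1$, whence $g(\ell^*)=(n/(4\ell^*))^{\ell^*}\cdot 2^{-\Phi_g(n)}$ by the choice of $\ell^*$. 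I then split on $k$. \emph{(i) $k\ge\ell^*$ and $k\le 3n/4$:} take $t=k-\ell^*$, so $q_k(t)=\binom n k\,g(\ell^*)/\binom{n-k+\ell^*}{\ell^*}$; since $n-k\ge n/4$ one has $\binom{n-k+\ell^*}{\ell^*}\ge\big((n-k+\ell^*)/\ell^*\big)^{\ell^*}\ge(n/(4\ell^*))^{\ell^*}$, cancelling the $g(\ell^*)$ factor and leaving $q_k(t)\le\binom n k\,2^{-\Phi_g(n)}\le 2^{n-\Phi_g(n)}$. \emph{(ii) $k\ge\ell^*$ and $k>3n/4$:} with $j=n-k<n/4$, still $t=k-\ell^*$, so $q_k(t)=\binom n j\,g(\ell^*)/\binom{j+\ell^*}{\ell^*}\le\binom n j\,g(\ell^*)$; bounding $\binom n j\le 2^{nH(1/4)}$ ($H$ the binary entropy function, $j<n/4$) and $(n/(4\ell^*))^{\ell^*}\le 2^{\max_{0\le x\le n/4}x\log(n/(4x))}=2^{\,n/(4e\ln 2)}$ gives $q_k(t)\le 2^{n(H(1/4)+1/(4e\ln 2))}\cdot 2^{-\Phi_g(n)}\le 2^{n-\Phi_g(n)}$, using the numerical fact $H(1/4)+\tfrac1{4e\ln 2}\approx 0.81+0.13<1$. \emph{(iii) $k<\ell^*$:} take $t=k$, so $q_k(k)=\binom n k\le\binom n{\ell^*}\le(en/\ell^*)^{\ell^*}$; combining this with $2^{\Phi_g(n)}\le(n/(4\ell^*))^{\ell^*}$ and with $\ell^*\log(n/\ell^*)\le n/2$ (the map $x\mapsto x\log(n/x)$ is increasing on $[0,n/4]$, with value $n/2$ at $x=n/4$) yields $\binom n{\ell^*}\le 2^{n-\Phi_g(n)}$. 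Every $k\in\{0,\dots,n\}$ falls into one of (i)--(iii), which proves the pointwise estimate and hence the lemma.

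I expect the regime $k>3n/4$, i.e. case (ii), to be the crux: there the denominator $\binom{n-k+\ell^*}{\ell^*}$ gives essentially no cancellation, so one cannot absorb the $(n/(4\ell^*))^{\ell^*}$ coming from $g(\ell^*)$ and must instead exploit that $\binom n k$ is itself exponentially below $2^n$; making this quantitative forces one to combine the sharp estimate $\max_{0\le x\le n/4}x\log(n/(4x))=n/(4e\ln 2)$ with an entropy bound and to verify that the two exponents add to strictly less than $1$. A related point I would state explicitly is that the argument uses no monotonicity of $g$: in the analysis $\ext$ is invoked only with parameter $\ell^*\le n/4$ (whose cost is pinned down by the defining equation of $\Phi_g$) or with parameter $0$ (where $g(0)=1$), which is exactly why the choices $t=k-\ell^*$ and $t=k$ are the right ones rather than the naive $t=0$.
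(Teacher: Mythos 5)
Your proof is correct, and it follows the paper's two-step skeleton — first charging the loop iteration by iteration to reduce the running time to $\poly(|B|)\cdot\max_{0\le k\le n}\min_{0\le t\le k}\frac{\binom{n}{t}}{\binom{k}{t}}g(k-t)$ (this is exactly \Cref{lem:rt_by_psi}, with optimizability of $g$ used in the same place), and then bounding that quantity by plugging in $t=k-\ell^*$ where $\ell^*$ is the maximizer defining $\Phi_g$ in \eqref{eq:phi_def} (the paper's $p(n)$). The technical execution of the second step, however, is genuinely different. The paper splits into $k\in[\tfrac n4,\tfrac{3n}4]$ versus extreme $k$, uses the entropy estimates \eqref{eq:binom} together with the inequality $\entropy{x}/x\ge -\log x$, picks $t=k$ in the extreme regime, and finally needs the separate claim $\Phi_g(n)\le 0.15n$ (\Cref{claim:phi_little_o}) to argue that the $2^{0.85n}$ branch never dominates; this yields $\Psi_g(n)\le 2^{n-\Phi_g(n)+\log n}$, with the $\log n$ slack absorbed into $\poly(|B|)$. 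You instead split on $k<\ell^*$, $\ell^*\le k\le\tfrac{3n}4$, $k>\tfrac{3n}4$, use the exact identity $\binom nt/\binom kt=\binom nk/\binom{n-t}{k-t}$ and the elementary bounds $(m/r)^r\le\binom mr\le(em/r)^r$ to cancel $g(\ell^*)$ outright in the middle regime, and in the regime $k>\tfrac{3n}4$ replace the paper's max-of-two-branches argument by the direct numerical check $\entropy{1/4}+\tfrac1{4e\ln 2}<1$, which is the same numerics as \Cref{claim:phi_little_o} in disguise. What your route buys is a pointwise bound $2^{n-\Phi_g(n)}$ with no $\log n$ overhead and no need for the lower entropy estimate on $\binom{k}{t}$; what the paper's route buys is that the extreme-$k$ case is dispatched with the single crude bound $2^{0.85n}$. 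Two small points you glossed over but which do go through: the degenerate case $\ell^*=0$ (where your fallback $\binom nk\le 2^n$ suffices), and in case (iii) the cross term $(e/4)^{\ell^*}\le 1$ (i.e. $\log_2 e<2$) needed when combining $\binom n{\ell^*}\le(en/\ell^*)^{\ell^*}$ with $2^{\Phi_g(n)}\le(n/(4\ell^*))^{\ell^*}$ and $\ell^*\log(n/\ell^*)\le n/2$.
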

We give the proofs \Cref{lem:mls_correct,lem:mls_cost} in \Cref{sec:mls_correct,sec:mls_cost}, Respectively.

\subsubsection{Correctness} 
\label{sec:mls_correct}
The correctness of the algorithm follows from the argument presented in the beginning of this section. 
\mlscorrect*
\begin{proof}
	
Let $(E,\cF,B,\oracle)$ be an  instance of $\cP$. Also, define $n=|E|$. 
In order to show the correctness of \Cref{alg:monotone_local_search}, we first show basic properties of $\sample$ (\Cref{alg:sample}).  

First, we consider the case in which $\cF=\emptyset$.
\begin{claim}
	\label{claim:sample_empty}
	If $\cF=\emptyset $, then  $\sample(B,k,t) = \perp$ for every $0\leq k\leq n$ and $0\leq t\leq k$  (with probability $1$). 
\end{claim}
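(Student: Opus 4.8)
\textbf{Proof proposal for \Cref{claim:sample_empty}.}
The plan is to argue directly from the two defining properties of a random extension algorithm and the description of the $\sample$ procedure in \Cref{alg:sample}; no probabilistic estimate is actually needed here. Recall that $\sample(B,k,t)$ first computes $E$ from $B$, then samples a uniformly random $X\subseteq E$ with $\abs{X}=t$ (this is well defined since $t\le k\le n=\abs{E}$), sets $S\leftarrow \ext(B,X,k-t)$, and returns $S$. So it suffices to show that every possible output of $\ext(B,X,k-t)$ on such an input equals $\perp$.

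First I would recall the ``soundness'' guarantee in the definition of a random extension algorithm: if the algorithm returns a set $S\subseteq E$ (as opposed to the symbol $\perp$), then $S$ is a $(k-t)$-extension of $X$; that is, $X\cup S\in\cF$ and $\abs{S}=k-t$. This guarantee holds for \emph{every} execution of $\ext$, irrespective of the internal random coins, so it applies pointwise on the probability space. Now, since $\cF=\emptyset$ by hypothesis, there is no set $S\subseteq E$ with $X\cup S\in\cF$; hence $\ext(B,X,k-t)$ can never return a subset of $E$, and must therefore return $\perp$ on every run. Consequently $\sample(B,k,t)$ returns $\perp$ with probability $1$, where the probability is taken over both the choice of $X$ in line~\ref{sam:sample} and the internal randomness of $\ext$ in line~\ref{sam:ext}. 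Since $k$ and $t$ were arbitrary with $0\le t\le k\le n$, the claim follows.

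The only point that requires a little care — and it is the closest thing to an ``obstacle'' — is making sure the conclusion is genuinely ``with probability $1$'' and not merely ``in expectation'': one must invoke the soundness property of $\ext$ as a deterministic statement about each realization of its randomness, rather than the (one-sided, probability-$\tfrac12$) \emph{completeness} property. I would state this explicitly. The degenerate case $t=k$ (so $\ext$ is called with third argument $0$, seeking the $0$-extension $S=\emptyset$ of $X$, i.e.\ testing $X\in\cF$) is covered by exactly the same reasoning, since $\cF=\emptyset$ rules out $X\in\cF$ as well.
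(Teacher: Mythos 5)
Your proposal is correct and follows essentially the same route as the paper: both argue that since $\cF=\emptyset$ no set can have a $(k-t)$-extension, so the soundness guarantee of the extension algorithm (which holds for every realization of its internal randomness, not just in expectation) forces $\ext(B,X,k-t)=\perp$ on every run, and hence $\sample(B,k,t)=\perp$ with probability $1$. Your explicit remark that one must use the deterministic soundness property rather than the probability-$\tfrac12$ completeness property is a fair elaboration of a step the paper leaves implicit, but it is not a different argument.
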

\begin{claimproof}
	Consider an execution of $\sample$ with $B$, $k$ and $t$  as its input. 
	Let $X$ be the set defined in \Cref{sam:sample} and let $S$ be the variable defined in \Cref{sam:ext}. 
Since $\cF=\emptyset$ then $S=\perp$  as $X$ does not have an $(k-t)$-extension (no set has an $(k-t)$-extension).  Therefore, the algorithm returns $\perp$. 
\end{claimproof}

On the other hand, in case $\cF\neq \emptyset$ we show the following.
\begin{claim}
	\label{claim:sample_prob}
	If $S^*\in \cF$ and $\abs{S^*}=k $, then  for every $0\leq t \leq k$ it holds that $\sample(B,k,t)\neq \perp$ with probability at least  $\frac{1}{2}\cdot \frac{ \binom{k}{t}}{\binom{n}{t}}$. That is, 
	 $\Pr\left(\sample(B,k,t) \neq \perp\right) \geq\frac{1}{2}\cdot \frac{ \binom{k}{t}}{\binom{n}{t}}$.
\end{claim}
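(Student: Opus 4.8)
The plan is to trace through the $\sample$ procedure and split its success probability according to whether the randomly sampled set lands inside $S^*$. First I would name $X$ the set drawn uniformly at random among the $t$-subsets of $E$ in \Cref{sam:sample} of \Cref{alg:sample}. Since $|S^*|=k$ and $|E|=n$, there are exactly $\binom{k}{t}$ such $t$-subsets contained in $S^*$ out of a total of $\binom{n}{t}$, so $\Pr\!\left(X\subseteq S^*\right)=\binom{k}{t}\big/\binom{n}{t}$.

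Next I would observe that whenever $X\subseteq S^*$, the set $S^*\setminus X$ is a $(k-t)$-extension of $X$: indeed $S^*\setminus X\subseteq E\setminus X$, we have $X\cup(S^*\setminus X)=S^*\in\cF$, and $|S^*\setminus X|=k-t$ because $X$ is a $t$-subset of $S^*$. Hence, conditioned on $X\subseteq S^*$, the triple $(B,X,k-t)$ fed to $\ext$ in \Cref{sam:ext} is one for which an extension exists, so by the defining guarantee of a random extension algorithm, $\ext(B,X,k-t)$ returns a $(k-t)$-extension — in particular a set rather than $\perp$ — with probability at least $\tfrac12$.

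Combining the two steps by the law of total probability yields
$$
\Pr\!\left(\sample(B,k,t)\neq\perp\right)\ \geq\ \Pr\!\left(X\subseteq S^*\right)\cdot\Pr\!\left(\ext(B,X,k-t)\neq\perp\ \middle|\ X\subseteq S^*\right)\ \geq\ \frac{\binom{k}{t}}{\binom{n}{t}}\cdot\frac12,
$$
which is exactly the asserted bound. The only point requiring a bit of care in the full write-up is the second factor: the $\tfrac12$ success guarantee of $\ext$ is a statement about its internal random bits that holds for every fixed admissible input, so it persists after conditioning on the (independent) event $\{X\subseteq S^*\}$. This is a minor formality rather than a genuine obstacle, so I do not expect any real difficulty here.
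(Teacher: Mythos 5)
Your proposal is correct and follows essentially the same argument as the paper: the paper decomposes by the individual realizations $X=T$ for $T\subseteq S^*$ with $|T|=t$ and sums, whereas you condition on the aggregate event $X\subseteq S^*$, which is the same law-of-total-probability computation. The point you flag about the $\tfrac12$ guarantee surviving the conditioning is exactly what the paper handles by applying the guarantee to each fixed admissible input $(B,T,k-t)$ separately.
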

\begin{claimproof}
	
Consider an execution of $\sample(B,k,t)$ and let $X$ and  $S$ be the values of the variables defined in \Cref{sam:sample,sam:ext}, respectively. 

Let $\cT= \{ T\subseteq S^* \,|\, \abs{T}=t \}$ be all the subsets of $S^*$ of size $t$. It trivially holds that $\abs{\cT}= \binom{k}{t}$. For every set $T\in \cT$  it holds that $S^*\setminus T$ is a $(k-t)$-extension of $T$. Therefore, as $\ext$ is an extension algorithm we have,
$$
\Pr(\ext(B,T,k-t)\neq \perp) \,\geq\, \frac{1}{2}
$$
Therefore,
\begin{equation}
	\label{eq:prob_given_X}
\Pr( S\neq \perp\,|\,X=T) =\Pr(\ext(B,T,k-t)\neq \perp) \,\geq\, \frac{1}{2},
\end{equation}
for every $T\in \cT$.  
By the above, 
$$
\begin{aligned}
\Pr(S\neq \perp) &= \sum_{T\subseteq E:~|T|=t} \Pr(X=T)\cdot\Pr(S\neq \perp\,|\,X=T) \\
&\geq\sum_{T\in \cT} \Pr(X=T)\cdot\Pr(S\neq \perp\,|\,X=T)   \\
&\geq \sum_{T\in \cT} \frac{1}{\binom{n}{t}}\cdot\frac{1}{2} \\
	&= \frac{1}{2}\cdot \frac{ \binom{k}{t}}{\binom{n}{t}} ,
	\end{aligned}
$$
The second inequality holds as $\Pr(X=T) = \frac{1}{\binom{n}{t}}$ for every set $T\subseteq E$ of size $t$ and by \eqref{eq:prob_given_X}.  The last equality holds as $\abs{\cT} = \binom{k}{t}$. 
\end{claimproof}

Consider an execution of \Cref{alg:monotone_local_search}  with input $B$ and the oracle $\oracle$.  Let $\cL$  be the value of the respective variable at the end of the algorithm's execution.  
Consider the following cases.
\begin{itemize}
	\item
In case $\cF=\emptyset$, then by  \Cref{claim:sample_empty} all the calls for $\sample(B,k,t)$ return $\perp$, therefore, $\cL=\{\perp\}$ and the algorithm correctly returns ``no'' in \Cref{mls:return}. 
\item
In case $\cF\neq \emptyset$, 
then there is $S^*\in \cF$. Consider the iteration of \Cref{mls:loop} in which $k=|S^*|$ and let $t$ be the value found in \Cref{mls:t} in the specific iteration.  
Then $\cL=\{\perp\}$ only if all the calls for $\sample(B,k,t)$ in the specific iteration  return $\perp$.  Since those calls are independent we have,
$$
\begin{aligned}
\Pr(\cL= \{\perp\})&= \Pr\left(\textnormal{all $2\cdot \frac{\binom{n}{t}}{\binom{k}{t}}$  calls for $\sample(B,k,t)$ returned $\perp$  } \right)\\
&= \left(\Pr( \sample(B,k,t,) = \perp) \right)^{2\cdot \frac{\binom{n}{t}}{\binom{k}{t}}} \\
&\leq \left(1- \frac{1}{2}\cdot \frac{\binom{k}{t}}{\binom{n}{t}} \right)^{2\cdot \frac{\binom{n}{t}}{\binom{k}{t}}} \\
&\leq e^{-1},
\end{aligned}
$$
where the first inequality follows from \Cref{claim:sample_prob}, and the second inequality from $\left(1-\frac{1}{x}\right)^x \leq e^{-1}$ for all $x\geq 1$. Therefore $\Pr(\cL\neq \{\perp\})\geq 1-e^{-1} \geq \frac{1}{2}$.  We can therefore conclude  that the algorithm returns ``yes'' with probability at least $1/2$.
\end{itemize} 
Overall, we showed that \Cref{alg:monotone_local_search} is indeed a randomized algorithm for  $\cP$. 
 \end{proof}
 
 \subsubsection{Running Time} 
 \label{sec:mls_cost}

 Our next goal is to  prove \Cref{lem:mls_cost}. That is, our  objective is to show that the running time  of  \Cref{alg:monotone_local_search} is $2^{|E|- \Phi_g(|E|)}\cdot \poly(|B|)$ where $\Phi_g$ is the function defined in \eqref{eq:phi_def}. 
 We achieve this objective in two steps.
In \Cref{lem:rt_by_psi} we  show that the running  of the algorithm is bounded by  $\Psi_g(|E|)\cdot \poly(B)$ where the function  $\Psi_g$  is defined by
 \begin{equation}
 	\label{eq:psi_def}
 \Psi_g(n) = \max_{0\leq k \leq n} \,\min_{0\leq t\leq k} \frac{\binom{n}{t} }{\binom{k}{t}} \cdot g(k-t).
 \end{equation}
 Then, in \Cref{lem:psi_bound} we show that $\Psi_g(n) \leq 2^{n-\Phi_g(n) +\log(n)}$. Recall we assumed $g$ is a fixed function, $g(0)=1$ and $g$ is optimizable. 
 \begin{lemma}
 	\label{lem:rt_by_psi}
 \Cref{alg:monotone_local_search} runs in time $\Psi_g(|E|)\cdot  \poly(|B|)$.
 \end{lemma}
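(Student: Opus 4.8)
The plan is to charge the running time of \Cref{alg:monotone_local_search} line by line and then sum over the iterations of the main loop on \Cref{mls:loop}. Throughout, write $n=|E|$; since $\cP$ is polynomial time computable, the set $E$ is computed from $B$ in time $\poly(|B|)$, so its description, and in particular $n$, has size $\poly(|B|)$. I will use this repeatedly to absorb any fixed polynomial in $n$ into a polynomial in $|B|$. I also record that $g$ maps into $\N=\{1,2,\dots\}$ with $g(0)=1$, hence $g(\ell)\geq 1$ for every $\ell\geq 0$, and that $n\geq k\geq t$ forces $\binom{n}{t}\geq\binom{k}{t}\geq 1$ everywhere below.

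First I would bound the cost of a single call $\sample(B,k,t)$ (\Cref{alg:sample}): recomputing $E$ costs $\poly(|B|)$; drawing a uniformly random $t$-subset $X\subseteq E$ on \Cref{sam:sample} costs $\poly(n)=\poly(|B|)$; and the call $\ext(B,X,k-t)$ on \Cref{sam:ext} runs in time $g(k-t)\cdot\poly(|B|)$ by definition of a random extension algorithm of time $g$. Hence one $\sample(B,k,t)$ call, together with inserting its output into $\cL$, costs $g(k-t)\cdot\poly(|B|)$. A point to be careful about here: the number $g(k-t)$ is never written down (it need not be polynomially representable, e.g.\ if $g(\ell)=2^{2^{\ell}}$) --- it enters the analysis only as the running time of $\ext$.

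Next I would bound one iteration of the loop on \Cref{mls:loop} for a fixed $k$. Evaluating $t=\argmin_{0\leq t\leq k}\frac{\binom{n}{t}}{\binom{k}{t}}\cdot g(k-t)$ on \Cref{mls:t} takes $\poly(n)=\poly(|B|)$ time --- this is precisely where the hypothesis that $g$ is optimizable (\Cref{def:optimizable}) is used. On \Cref{mls:sample}, $\sample(B,k,t)$ is called $\ceil{2\cdot\frac{\binom{n}{t}}{\binom{k}{t}}}\leq 3\cdot\frac{\binom{n}{t}}{\binom{k}{t}}$ times (a count representable in $O(n)$ bits), so by the previous paragraph, and since $\frac{\binom{n}{t}}{\binom{k}{t}}\,g(k-t)\geq 1$ absorbs the additive $\poly(|B|)$ cost of \Cref{mls:t}, the iteration for $k$ costs at most
$$
4\cdot\frac{\binom{n}{t}}{\binom{k}{t}}\,g(k-t)\cdot\poly(|B|)
\;=\;4\cdot\min_{0\leq t'\leq k}\frac{\binom{n}{t'}}{\binom{k}{t'}}\,g(k-t')\cdot\poly(|B|),
$$
the equality being the definition of $t$ as a minimiser.

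Finally I would sum over $k$ from $0$ to $n$ and dispose of \Cref{mls:return}. By \eqref{eq:psi_def} the total cost is at most
$$
\sum_{k=0}^{n} 4\cdot\min_{0\leq t'\leq k}\frac{\binom{n}{t'}}{\binom{k}{t'}}\,g(k-t')\cdot\poly(|B|)
\;\leq\; 4(n+1)\cdot\Psi_g(n)\cdot\poly(|B|)\;=\;\Psi_g(n)\cdot\poly(|B|),
$$
again absorbing $n+1=\poly(|B|)$. The concluding test $\cL=\{\perp\}$ on \Cref{mls:return} is decided by a boolean flag maintained when elements are added to $\cL$, costing $\poly(|B|)$ per $\sample$ call and hence dominated by the loop cost; this yields the claimed bound $\Psi_g(|E|)\cdot\poly(|B|)$. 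The argument is essentially routine bookkeeping; the only two delicate spots, and where I would take most care, are ensuring that \Cref{mls:t} runs in polynomial time (which is exactly the role of optimizability) and never materialising $g$ as a number, only invoking it as the time bound of $\ext$.
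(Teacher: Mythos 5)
Your proposal is correct and follows essentially the same route as the paper's proof: bound the cost of line \ref{mls:t} via optimizability of $g$, charge each of the $\approx 2\binom{n}{t}/\binom{k}{t}$ calls to $\sample$ at $g(k-t)\cdot\poly(|B|)$ using the extension algorithm's time bound, rewrite the per-iteration cost as the minimum over $t'$, and then sum over $k$, absorbing the factor $n+1$ into $\poly(|B|)$ to obtain $\Psi_g(|E|)\cdot\poly(|B|)$. The extra bookkeeping you add (ceiling on the repetition count, not materialising $g$ as a number, the flag for the final test) is harmless and only makes explicit what the paper leaves implicit.
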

 \begin{proof}
 	Let $(E,\cF, B,\oracle)$ be an instance of $\cP$ and 
 	consider an execution of \Cref{alg:monotone_local_search} with the input $B$  and the oracle $\oracle$. Also, let $n=|E|$. 
 	We first bound the running time of each of the  iterations  of the loop in  \Cref{mls:loop} of \Cref{alg:monotone_local_search} separately. Consider the  $k$-th iteration of the loop. 
 	Let $t$ be the value found in \Cref{mls:t} of \Cref{alg:monotone_local_search} in the specific iteration. 
 	We note that the computation of $t$ can be done in time $\poly(|E|)\leq \poly(|B|)$ since  $g$ is optimizable (\Cref{def:optimizable}).
 	Then, each  of the $2\cdot \frac{\binom{n}{t}}{\binom{k}{t}}$ calls for $\sample$ in \Cref{mls:sample} runs in time  $g(k-t)\cdot \poly(|B|)$. This is  due to the use of $\ext$ plus a polynomial number of operations  in $|B|$ for computing $|E|$ and sampling the set $X$ in \Cref{sam:sample} of \Cref{alg:sample}. 
 	Therefore, the total running time  of \Cref{mls:sample} in the specific iteration is 
 	$$
 	\begin{aligned}
 	2\cdot \frac{\binom{n}{t}}{\binom{k}{t}}\cdot g(k-t) \cdot \poly(|B|) &=\poly(|B|)\cdot  \min_{0\leq t'\leq k} 	 \frac{\binom{n}{t'}}{\binom{k}{t'}}\cdot g(k-t'),
 \end{aligned}$$
 where the first equality follows from the selection of $t$ in \Cref{mls:t}.  This implies that the total running time of the specific iteration is also
 $$
 \poly(|B|)\cdot  \min_{0\leq t'\leq k} 	 \frac{\binom{n}{t'}}{\binom{k}{t'}}\cdot g(k-t'). 
 $$
 
 Therefore,  the overall running time of the loop in \Cref{mls:loop} of \Cref{alg:monotone_local_search} is 
 $$
 \begin{aligned}
 \sum_{k=0}^{n} \poly(|B|)\cdot  \min_{0\leq t'\leq k} 	 \frac{\binom{n}{t'}}{\binom{k}{t'}}\cdot g(k-t')\, &\leq \,
 	\poly(|B|)\cdot \max_{0\leq k' \leq n}\min_{0\leq t'\leq k'} 	 \frac{\binom{n}{t'}}{\binom{k'}{t'}}\cdot g(k'-t') 
 	\\
 	&=\,  \Psi_g(n)\cdot \poly(|B|),
 	\end{aligned}
$$
 
 	Therefore the whole execution of \Cref{alg:monotone_local_search}  runs in time  $\Psi_g(|E|)\cdot \poly(|B|)$. 
 \end{proof}

 We use standard estimators of binomial coefficients to upper bound $\Psi_g$.  
Define 
$$
\entropy{x} = -x \cdot \log (x )- (1-x)\cdot \log (1-x)
$$
as the binary entropy function. 
  We use the following entropy based estimation for binomial coefficients (Example 11.1.3 in \cite{CJ06}):

\begin{equation}
	\label{eq:binom}
	\frac{1}{n+1}\cdot 2^{n\cdot \entropy{ \frac{k}{n}}}\,\leq\,
	\binom{n}{k}\,\leq \,2^{n\cdot \entropy{ \frac{k}{n}}}.
\end{equation}
We also use the following inequality for the entropy function:
\begin{equation}
	\label{eq:h_by_x}
	\frac{\entropy{x}}{x} = \frac{-x \cdot\log (x) -(1-x )\cdot \log(1-x)}{x} = -\log x -\frac{ (1-x )\cdot \log (1-x)}{x}  \geq -\log (x).
\end{equation}
The last equality follows from $ (1-x)\cdot \log (1-x)<0$.

\begin{lemma}
	For every $n\in \mathbb{N}$ it holds that 
	\label{lem:psi_bound}
	$$ 
		\Psi_g(n) \leq 2^{ n-\Phi_g(n) +\log(n)}. 
	$$
\end{lemma}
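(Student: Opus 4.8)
The plan is to bound $\Psi_g(n)$ by bounding the quantity $\frac{\binom{n}{t}}{\binom{k}{t}} \cdot g(k-t)$ from above for arbitrary $0 \le t \le k \le n$, using the entropy estimates in \eqref{eq:binom}. Fix $n$ and let $k$ be the index achieving the outer maximum in \eqref{eq:psi_def}; since $\Psi_g(n)$ takes a \emph{minimum} over $t$, it suffices to exhibit \emph{one} good choice of $t$ for this $k$. The natural choice is to take $t$ so that the complementary size $\ell := k - t$ matches the integer achieving the maximum in the definition of $\Phi_g(n)$ in \eqref{eq:phi_def}; we set $\ell = k-t$ (so $t = k-\ell$), which is legitimate as long as $\ell \le k$ — a case distinction may be needed when $k$ is small, handled below.

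The core computation is: estimate $\binom{n}{t} \le 2^{n \cdot \entropy{t/n}}$ and $\binom{k}{t} \ge \frac{1}{n+1} \cdot 2^{k \cdot \entropy{t/k}}$ from \eqref{eq:binom}, so that
$$
\frac{\binom{n}{t}}{\binom{k}{t}} \cdot g(k-t) \le (n+1) \cdot 2^{\, n\cdot \entropy{t/n} - k \cdot \entropy{t/k}} \cdot g(\ell).
$$
Now I would write $t = k - \ell$ and want to show the exponent $n \cdot \entropy{t/n} - k\cdot \entropy{t/k}$ is at most $n - \ell\log(n/(4\ell))$, which — recalling $\log g(\ell)$ appears in $\Phi_g$ — would give $\Psi_g(n) \le (n+1)\cdot 2^{n - \Phi_g(n)} \le 2^{n - \Phi_g(n) + \log(n)}$ as desired (using $n+1 \le 2^{\log n}$ for, say, $n \ge 1$, or absorbing the $+1$ with a slightly looser constant). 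To control the exponent: since $\entropy{\cdot} \le 1$ always, $n\cdot \entropy{t/n} \le n$, so it remains to show $k \cdot \entropy{t/k} \ge \ell \log(n/(4\ell))$. Here $1 - t/k = \ell/k$, so $k\cdot \entropy{t/k} = k \cdot \entropy{\ell/k}$, and by the symmetry $\entropy{x} = \entropy{1-x}$ together with the bound $\entropy{x}/x \ge -\log x = \log(1/x)$ from \eqref{eq:h_by_x}, we get $k \cdot \entropy{\ell/k} \ge \ell \log(k/\ell) \ge \ell\log(\ell/\ell) = 0$ — but I actually need $\ell\log(k/\ell)$ to dominate $\ell\log(n/(4\ell))$, i.e. $k \ge n/4$, which need not hold. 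So the sharper move is to \emph{not} throw away the $n\cdot\entropy{t/n}$ term crudely: instead keep $n\cdot\entropy{t/n} - k\cdot\entropy{t/k}$ together and use a more careful estimate (this is exactly the calculation in \cite{FGLS19}); alternatively, restrict attention to $\ell \le n/4$ (forced by the definition of $\Phi_g$) and argue that for the maximizing $k$ we may assume $k \ge \ell$ and bound $n\cdot\entropy{t/n} \le n\cdot \frac{t}{n}\cdot \log(n/t) \cdot$ (something), then compare term by term.

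The main obstacle I anticipate is precisely this exponent estimate: showing $n \cdot \entropy{t/n} - k\cdot \entropy{t/k} \le n - \ell\log(n/(4\ell))$ cleanly for all relevant $k$, rather than just when $k$ is close to $n/4$. The cleanest route is likely: first reduce to the case $t \le n/2$ (if $t > n/2$ then $\binom{n}{t}$ is not the dominant term and a separate easy bound applies), then use $n\cdot\entropy{t/n} \le t\log(2n/t) + $ lower-order, i.e. $\binom{n}{t} \le (en/t)^t = 2^{t\log(en/t)}$, and $\binom{k}{t} \ge (k/t)^t = 2^{t\log(k/t)}$, giving ratio $\le 2^{t\log(en/k)}$; then with $t = k-\ell$ push $\log(en/k)$ against the budget. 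This is the delicate part and I expect to spend most of the proof massaging these binomial/entropy inequalities; the rest — handling the degenerate small-$k$ cases ($k \le \ell$, where we may simply take $t = 0$ and use $\binom{n}{0}/\binom{k}{0} \cdot g(k) = g(k)$, noting $\Phi_g(n) \ge 0$ by taking $\ell = 0$ in \eqref{eq:phi_def}) and assembling the pieces — is routine.
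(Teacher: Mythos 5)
Your skeleton is the same as the paper's (entropy bounds on $\binom{n}{t}\big/\binom{k}{t}$ and the choice $t=k-\ell^*$, where $\ell^*$ attains the maximum in \eqref{eq:phi_def}), but the proposal stops exactly at the decisive estimate and the concrete repairs you sketch do not work. You correctly note that your bound $k\cdot \entropy{t/k}\ge \ell\log(k/\ell)$ only yields the needed $\ell\log\left(n/(4\ell)\right)$ when $k\ge n/4$, and you leave the general case to ``a more careful estimate.'' The paper does not prove the inequality for all $k$; it splits cases: for $n/4\le k\le 3n/4$ it uses $\entropy{x}/x\ge -\log x$ (after symmetrizing) to get $-k\cdot\entropy{t/k}\le (k-t)\log\left(4(k-t)/n\right)$ — this is precisely where $k\ge n/4$ is used — and for $k$ outside that range it takes $t=k$, so the term becomes $\binom{n}{k}\cdot g(0)=\binom{n}{k}\le 2^{0.85 n}$, and then it needs the separate, nontrivial claim $\Phi_g(n)\le 0.15\, n$ (proved by maximizing $\frac{1}{x}\log(x/4)$) to conclude $2^{0.85n}\le 2^{n-\Phi_g(n)}$. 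Neither the case split nor any bound of the form $\Phi_g(n)\le c\cdot n$ appears in your proposal, so the inequality $n\cdot\entropy{t/n}-k\cdot\entropy{t/k}\le n-\ell^*\log\left(n/(4\ell^*)\right)$ is never established outside the easy regime. Your proposed ``cleanest route'' via $\binom{n}{t}\le (en/t)^t$ and $\binom{k}{t}\ge (k/t)^t$, i.e.\ a ratio bound of $2^{t\log(en/k)}$, is too lossy: already for $k\approx n/2$ and $\ell^*\ll k$ (so $t\approx n/2$) it gives roughly $2^{\frac{n}{2}\log(2e)}\approx 2^{1.22n}$, which exceeds even $2^{n}$ and hence cannot yield $2^{n-\Phi_g(n)+\log n}$.

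The degenerate case $k\le \ell^*$ is also handled incorrectly: taking $t=0$ bounds the minimum by $g(k)$, and ``$\Phi_g(n)\ge 0$'' gives no control on $g(k)$. The lemma assumes nothing about $g$ beyond $g(0)=1$ and optimizability (in particular, no monotonicity), so $g(k)$ can vastly exceed $2^{n-\Phi_g(n)+\log n}$ even when $k\le \ell^*\le n/4$ (take $g\equiv 1$ except $g(k)$ enormous; then $\ell^*$ simply avoids $k$ and $\Phi_g(n)=\Theta(n^{0}\cdot\log n\cdot\ell^*)$ stays moderate while $g(k)$ does not). The correct move for small $k$ is again the paper's: choose $t=k$, bound by $\binom{n}{k}\le 2^{n\cdot\entropy{k/n}}$, and invoke $\Phi_g(n)\le 0.15\,n$. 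So while the lemma itself is not endangered, your proof as proposed has a genuine gap at its central inequality, and the fallback steps you name would fail if executed literally.
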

\begin{proof}
	We can write $\Psi_g$ as the maximum between two functions. Define, 
	$$
	A(n )=	\max_{0\leq k < \frac{1}{4}\cdot n  \textnormal{ or } \frac{3}{4}\cdot n < k\leq n}\, \min_{0\leq t\leq k} \frac{\binom{n}{t} }{\binom{k}{t}} \cdot g(k-t) 
	$$
	and 
	\begin{equation}
		\label{eq:B_def}
	B(n)= 	\max_{\frac{1}{4}\cdot n\leq k \leq \frac{3}{4}\cdot n }\, \min_{0\leq t\leq k} \frac{\binom{n}{t} }{\binom{k}{t}} \cdot g(k-t).
	\end{equation}
	The function $A$ considers values of $k$ which are far from $\frac{1}{2}\cdot n$.  In these cases  selecting $t=k$ suffices to attain the upper bound in the lemma. The function $B$ considers values of $k$ which may be close to $\frac{1}{2}\cdot n$, for which we use a more subtle argument to upper bound $\Psi_g$. 
	By  \eqref{eq:psi_def} we have 
	\begin{equation}
		\label{eq:psi_split_sum}
		\Psi_g(n) = \max_{0\leq k \leq n}\, \min_{0\leq t\leq k} \frac{\binom{n}{t} }{\binom{k}{t}} \cdot g(k-t) = 
		\max 
		\left\{  A(n)
		,B(n)
		\right\}.
	\end{equation}
	We bound  $A(n)$ and $B(n)$ separately. 
	
	\begin{claim}
		\label{claim:A_bound}
	$A(n)\leq 2^{0.85\cdot n}$
	\end{claim}
	\begin{claimproof}
		Let $0\leq k \leq n$ such that $k\leq \frac{1}{4}\cdot n $ or $k\geq \frac{3}{4}\cdot n$. Then by  selecting $t=k$ we have 
		$$
			\begin{aligned}
				 \min_{0\leq t\leq k} \frac{\binom{n}{t} }{\binom{k}{t}} \cdot g(k-t)
				\leq \frac{\binom{n}{k} }{\binom{k}{k}} \cdot g(0)
				 =   \binom{n}{k}
				\leq 2^{n\cdot \entropy{\frac{k}{n}}} 
				\leq 2^{0.85\cdot n },
			\end{aligned}
$$	where the second equality follows from \eqref{eq:binom},  the equality uses $g(0)=1$, and the last inequality follows from $\entropy{x}\leq 0.85$ for $x\in [0,1/4]\cup [3/4,1]$ as  $\entropy{x}$ is increasing in $[0,1/2]$, decreasing in $[1/2,1]$, and  $\entropy{1/4}=\entropy{3/4} < 0.85$.   Therefore, 
$$
	A(n )=	\max_{0\leq k < \frac{1}{4}\cdot n  \textnormal{ or } \frac{3}{4}\cdot n < k\leq n}\, \min_{0\leq t\leq k} \frac{\binom{n}{k} }{\binom{n-t}{k-t}} \cdot g(k-t)\leq  2^{0.85\cdot n }
	$$
	\end{claimproof}
	
	Next, we bound $B(n)$. 
	\begin{claim}
		\label{claim:B_bound}
		$B(n)\leq 2^{n-\Psi_g(n) +\log(n)}$
	\end{claim}
	\begin{claimproof}
	Let $\frac{1}{4}\cdot n \leq k \leq \frac{3}{4}\cdot n$. Then, 
\begin{equation}
	\label{eq:B_first}
		\begin{aligned}
	\log&\left(	 \min_{1\leq t\leq k} \frac{\binom{n}{t} }{\binom{k}{t}} \cdot g(k-t)\right) =\min_{0\leq t\leq k} \left( \log \binom{n}{t}  - \log\binom{k}{t} + \log\left( g(k-t)\right)  \right)\\
		&\leq  \min_{0\leq t\leq k} \left( n\cdot \entropy{\frac{t}{n}} -k\cdot \entropy{\frac{t}{k}} +\log(k+1) +\log g(k-t)\right)\\
		&\leq \min_{0\leq t\leq k} \left( n -k\cdot \entropy{\frac{t}{k}} +\log(n) +\log g(k-t)\right)
		\end{aligned}
\end{equation}
where the first inequality follows from \eqref{eq:binom}, and the second holds as $k< n$ and $\entropy{x} \leq 1$ for all $0\leq x\leq 1$. 
By \eqref{eq:h_by_x}, for every $0\leq t\leq k$,  we have
$$
-k\cdot \entropy{\frac{t}{k}}  = -(k-t)\cdot \frac{k}{k-t}\cdot  \entropy{\frac{k-t}{k}} \leq (k-t)\cdot \log\left( \frac{k-t}{k}\right) \leq (k-t)\cdot \log\left(4\cdot\frac{k-t}{ n}\right),
$$
where the first inequality follows from $\entropy{x}=\entropy{1-x}$ and the last inequality holds as $k\geq \frac{1}{4}\cdot n$.
Incorporating the above into \eqref{eq:B_first} we get
\begin{equation}
	\label{eq:B_second}
	\begin{aligned}
		\log\left(	 \min_{1\leq t\leq k} \frac{\binom{n}{t} }{\binom{k}{t}} \cdot g(k-t)\right) 
		&\leq \min_{0\leq t\leq k} \left( n -k\cdot \entropy{\frac{t}{k}} +\log(n) +\log g(k-t)\right)
		\\
			&\leq \min_{0\leq t\leq k} \left( n  +(k-t)\cdot \log\left(4\cdot \frac{k-t}{ n}\right)+\log(n) +\log g(k-t)\right)
	\end{aligned}
	\end{equation}

	We use several auxiliary functions in order to bound $B(n)$. Define $$\varphi(\ell,n) =  \ell \cdot \log \left( \frac{n}{4\cdot \ell}\right) -\log g(\ell) $$ and 
	\begin{equation}
		\label{eq:p_def}p(n)= \argmax_{0\leq \ell \leq n/4} \varphi(\ell,n).\end{equation}
	Therefore, by
	\eqref{eq:phi_def}  it holds that
	 \begin{equation}
		\label{eq:phi_to_varphi}
		\Phi_g(n) = \varphi(p(n), n).\end{equation}  
		We use $t=k-p(n)$ to upper bound the expression in \eqref{eq:B_second}. Observe that since $p(n)\leq \frac{n}{4}\leq k$ it holds that $k-p(n)\geq 0$. That is, 
	$$\begin{aligned}
			\log\left(	 \min_{1\leq t\leq k} \frac{\binom{n}{t} }{\binom{k}{t}} \cdot g(k-t)\right) 
		&\leq \min_{0\leq t\leq k} \left( n  +(k-t)\cdot \log\left(4\cdot \frac{k-t}{ n}\right)+\log(n) +\log g(k-t)\right)\\
		&\leq   n  +p(n)\cdot \log\left( \frac{4\cdot p(n)}{ n }\right)+\log(n) +\log g(p(n))\\
		&\leq n-\varphi\left(p(n),n\right) +\log(n).
	\end{aligned}
$$
Overall, we showed that
\begin{equation}
	\label{eq:for_all_k}	\log\left(	 \min_{1\leq t\leq k} \frac{\binom{n}{t} }{\binom{k}{t}} \cdot g(k-t)\right)  \leq  n-\varphi\left(p(n),n\right) +\log(n)
\end{equation}
for all $\frac{n}{4}\leq k \leq \frac{3\cdot n}{4}$. 

By the above we have, 
$$
	B(n)= 	\max_{\frac{1}{4}\cdot n\leq k \leq \frac{3}{4}\cdot n }\, \min_{0\leq t\leq k} \frac{\binom{n}{t} }{\binom{k}{t}} \cdot g(k-t) \leq 2^{n-\varphi\left(p(n),n\right) +\log(n)}=2^{n -\Phi_g(n) + \log (n)},
$$
where the first equality is by \eqref{eq:B_def}, the first inequality is by \eqref{eq:for_all_k}, and the last equality is by \eqref{eq:phi_to_varphi}.

	\end{claimproof}

	By  \Cref{claim:A_bound}, \Cref{claim:B_bound} and \eqref{eq:psi_split_sum} we have,
	\begin{equation}
		\label{eq:psi_max}
	\Psi_g(n) = 
	\max 
	\left\{  A(n)
	,B(n)
	\right\} = \max\left\{ 2^{0.85\cdot n},~2^{ n-\Phi_g(n) +\log(n)}\right\}.
	\end{equation}
	To complete the proof we use the following claim.
	\begin{claim}
		\label{claim:phi_little_o}
		 $\Phi_g(n) \leq 0.15 \cdot n$ for all $n \in \N$.
	\end{claim}
	\begin{claimproof}
		Assume towards contradiction that there is $n\in \mathbb{N}$ such that $\Phi_g(n) > 0.15\cdot n$. Then, 
		$$
			0.15\cdot n< \Psi_g(n)= \varphi(p(n),n)= p(n)\cdot \log\left( \frac{n}{4\cdot p(n)} \right) - \log g(p(n)) \leq p(n)\cdot \log\left( \frac{n}{4\cdot p(n)} \right).
		$$
		Therefore,
$$		 \frac{p(n)}{n}\cdot \log\left( \frac{n}{4\cdot p(n)}\right)  > 0.15.$$
However, the function $r(x)=\frac{1}{x} \log \left(\frac{1}{4}x\right)$ has a global maximum of $0.132\approx \frac{1}{4\cdot e \cdot \ln 2} <0.15$ at $x=4e\approx 10.87$. Therefore, there is no $n$ for which $\Phi_g(n) > 0.15n$.

	\end{claimproof}
	
	By \eqref{eq:psi_max} and  \Cref{claim:phi_little_o} it follows that 
	$$
	\Psi_g(n)  = \max\left\{ 2^{0.85\cdot n},~2^{ n-\Phi_g(n) +\log(n)}\right\} =  2^{ n-\Phi_g(n) +\log(n)} 
	$$
\end{proof}

\Cref{lem:mls_cost} is an immediate consequence of \Cref{lem:rt_by_psi,lem:psi_bound}. 

\subsection{Properties of $\Phi_g$}
\label{sec:prop_phi}
In this section we prove \Cref{lem:phi_bound_klogk}, \Cref{lem:phi_bound_kquare} and \Cref{lem:phi_asym}.  Those are the lemmas  which bound the value of $\Phi_g(n)$, as defined in \eqref{eq:phi_def}, for various functions $g$.
 All the proofs follow from elementary calculus. 

\phiklogk*
\begin{proof}
	Recall that $\Phi_g(n)$ is defined by a maximization problem over a variable $\ell$ (see \eqref{eq:phi_def}). We define a function $\ell(n)$ and attain a lower bound for $\Phi_g(n)$ by setting $\ell=\ell(n)$. 
For every $n\in \mathbb{N}$ define $\ell(n) = \floor{\frac{1}{c}\cdot n^{\frac{1}{1+\alpha}}}$, where $c= 2^{\frac{3}{1+\alpha}}$.  Observe  that 
\begin{equation}
	\label{eq:klogk_ell}
\frac{1}{c}\cdot n^{\frac{1}{1+\alpha}} -1 \leq \ell(n)\leq  \frac{1}{c}\cdot n^{\frac{1}{1+\alpha}}.
\end{equation}
Therefore, as $\alpha>0$,  there is $N>0$ such that $3\leq \ell(n)\leq \frac{1}{4}\cdot n $ for all $n\geq N$. 
Hence, for every $n>N$ we have
\begin{equation}
	\label{eq:klogk_phi_decomp}
	\Phi_{g_{\alpha}}(n)  = \max_{0\leq \ell \leq \frac{1}{4}\cdot n }\left( \ell \cdot \log\left( \frac{n}{4\cdot \ell}\right) - \log \left( g_{\alpha}(\ell)\right)\right) 
	\geq \ell(n) \cdot \log\left( \frac{n}{4\cdot \ell(n)}\right) - \log \left( g_{\alpha}(\ell(n))\right).
\end{equation}
By \eqref{eq:klogk_ell} we have,
\begin{equation}
	\label{eq:klogk_phi_first}
	\begin{aligned}
\ell(n) \cdot \log\left( \frac{n}{4\cdot  \ell(n)}\right) &\geq  \left(\frac{ n^{\frac{1}{1+\alpha}}}{c} -1\right) \cdot \log\left(\frac{n}{4\cdot \frac{ n^{\frac{1}{1+\alpha}}}{c}}\right) \\
&= \frac{ n^{\frac{1}{1+\alpha}}}{c} \cdot \left( \log\left(c\right) - \log(4) +\frac{\alpha}{1+\alpha }\cdot  \log(n) \right) -\log\left(\frac{c}{4}\cdot n^{\frac{\alpha}{1+\alpha}}\right).
\end{aligned}
\end{equation}
Furthermore, it holds that
\begin{equation}
	\label{eq:klogk_phi_second}
	\begin{aligned}
 \log \left( g_{\alpha}(\ell(n))\right) &\leq\alpha \cdot \ell(n)\cdot \log(\ell(n)) \\
 &\leq \alpha \cdot \frac{ n^{\frac{1}{1+\alpha}}}{c} \cdot \log \left(\frac{n^{\frac{1}{1+\alpha}}}{c} \right)\\&=   \frac{ n^{\frac{1}{1+\alpha}}}{c} \cdot \left( \frac{\alpha}{1+\alpha}\cdot \log(n) -\alpha\cdot \log(c)\right),
 \end{aligned}
\end{equation}
where the first inequality follows from $\log(g_{\alpha}(\ell)) \leq \log\left(\floor{2^{\alpha \ell \cdot \log(\ell )}  }\right)  \leq \alpha\cdot \ell \log(\ell)$ and  the second inequality  holds as $x\log(x)$ is increasing in $[1,\infty)$ and due to \eqref{eq:klogk_ell}. 

Plugging both \eqref{eq:klogk_phi_first} and \eqref{eq:klogk_phi_second} into \eqref{eq:klogk_phi_decomp} we get,
$$
\begin{aligned}
	\Phi_{g_{\alpha}}(n)  
&\geq \ell(n) \cdot \log\left( \frac{n}{4\cdot \ell(n)}\right) - \log \left( g_{\alpha}(\ell(n))\right)\\
&\geq \frac{ n^{\frac{1}{1+\alpha}}}{c} \cdot \left( \log\left(c\right) -\log(4) +\frac{\alpha}{1+\alpha } \cdot \log(n) \right) -\log\left(\frac{c}{4}\cdot n^{\frac{\alpha}{1+\alpha}}\right)\\
&~~~~~~~~~~~~ -  \frac{ n^{\frac{1}{1+\alpha}}}{c} \cdot \left( \frac{\alpha}{1+\alpha}\cdot \log(n) -\alpha\cdot \log(c)\right)\\
&= \frac{ n^{\frac{1}{1+\alpha}}}{c} \cdot \bigg( (1+\alpha)\cdot \log(c) - \log(4)\bigg)-
\log\left(\frac{c}{4}\cdot n^{\frac{\alpha}{1+\alpha}}\right).\\
&=  \frac{ n^{\frac{1}{1+\alpha}}}{c} -\log\left(\frac{c}{4}\cdot n^{\frac{\alpha}{1+\alpha}}\right).\\
&= \Omega \left( n^{\frac{1}{1+\alpha}} \right),
 \end{aligned}
$$
 where second equality holds as $c=2^{\frac{3}{1+\alpha}}$.
\end{proof}

We proceed to the proof of \Cref{lem:phi_bound_kquare}. The proof follows a similar outline to the proof of \Cref{lem:phi_bound_klogk}.
\phiksquare*
\begin{proof}
	Define $\beta = \max\{\alpha,1\}$. Then $g(\ell) \leq \floor{2^{\beta\cdot \ell^2}}$. 
For every 
$n \in \mathbb{N}$ 
define  $\ell(n) = \floor{\frac{1}{4\cdot\beta}\cdot \log(n)}$. 
Since $\log(x)<x$  for all $x>0$ it holds that $0\leq \ell(n)\leq \frac{1}{4}\cdot n$. Therefore, 
\begin{equation}
	\label{eq:ksqaure_phi_decomp}
	\Phi_g(n)  = \max_{0\leq \ell \leq \frac{1}{4}\cdot n }\left( \ell \cdot \log\left( \frac{n}{4\cdot \ell}\right) - \log \left( g(\ell)\right)\right) 
	\geq \ell(n) \cdot \log\left( \frac{n}{4\cdot \ell(n)}\right) - \log \left( g(\ell(n))\right)
\end{equation}
for all $n\in \mathbb{N}$. Furthermore,
\begin{equation}
	\label{eq:ksqaure_first} 
\begin{aligned}
\ell(n) \cdot \log\left( \frac{n}{4\cdot \ell(n)}\right)  &= \floor{\frac{1}{4\cdot \beta }\cdot \log(n)}  \cdot \log  \left( \frac{n}{4\cdot \floor{\frac{1}{4\cdot \beta}\cdot \log(n) }} \right) 
\\
&\geq \left(\frac{1}{4\cdot \beta}\cdot \log(n)-1\right)\cdot \left( \log(n) - \log\left(\frac{1}{\beta} \cdot \log(n)\right) \right) \\
&\geq \frac{1}{4\cdot \beta}\cdot\log^2(n) - \log(n)  - \frac{1}{4}\cdot \log(n)\cdot \log \log(n)
\end{aligned}
\end{equation}
and
\begin{equation}
	\label{eq:ksqaure_second}
\begin{aligned}
\log(g(\ell(n)))  \leq \alpha \cdot \left(\ell(n )\right)^2 \leq \beta \cdot \frac{1}{16\cdot \beta^2} \cdot \log^2(n) \leq \frac{1}{16\cdot \beta }\cdot \log^2(n).
\end{aligned}
\end{equation}
By \eqref{eq:ksqaure_phi_decomp}, \eqref{eq:ksqaure_first} and \eqref{eq:ksqaure_second} we have
$$
\begin{aligned}
	\Phi_g(n)  
&\geq \ell(n) \cdot \log\left( \frac{n}{4\cdot \ell(n)}\right) - \log \left( g(\ell(n))\right)\\
&\geq  \frac{1}{4\cdot \beta}\cdot \log^2(n) - \log(n)  - \frac{1}{4}\cdot \log(n)\cdot \log \log(n)- \frac{1}{16\cdot \beta}\cdot \log^2(n)\\
&= \frac{3}{16\cdot \beta} \cdot \log^{2}(n)- \log(n)  - \frac{1}{4}\cdot \log(n)\cdot \log \log(n)\\
& = \Omega( \log^2(n)),
\end{aligned}
$$
which completes the proof. 
\end{proof}

Finally, we prove \Cref{lem:phi_asym} which deals with an arbitrary function $g$. 
\asymphi*
\begin{proof}
Define $h(\ell) = \max\big\{ 2^{\ell^2} , g(\ell)\big\}$ and 
\begin{equation}
	\label{eq:elln_def}
\ell(n)  =\max\left\{ 1\leq \ell\leq \frac{n}{4}\,\middle|\,\ell\in \mathbb{N} \textnormal{ and } \frac{\log(h(\ell))}{\ell}\leq \frac{1}{2}\cdot \log(n) \right\}.
\end{equation}
It can be easily observed that there is $N\in \mathbb{N}$ such that $\ell(n)$ is well defined for all $n>N$. 
For every $n>N$ it holds that,
\begin{equation}
	\label{eq:phi_asym}
	\begin{aligned}
	\Phi_g(n)  &= \max_{0\leq \ell \leq \frac{1}{4}\cdot n }\left( \ell \cdot \log\left( \frac{n}{4\cdot \ell}\right) - \log \left( g(\ell)\right)\right) 
\\
		&	\geq \ell(n) \cdot \log\left( \frac{n}{4\cdot \ell(n)}\right) - \log \left( g(\ell(n))\right)\\
		& \geq  \ell(n) \cdot \log\left( \frac{n}{4\cdot \ell(n)}\right) - \frac{1}{2} \cdot \ell(n)\cdot \log(n)\\
		&=\frac{1}{2}\cdot \ell(n) \cdot \log(n) - \ell(n)\cdot \log(4\cdot \ell(n)),
	\end{aligned}
\end{equation}
where the second inequality follows from $$\log(g(\ell(n)) )\leq \log(h(\ell(n))) \leq \frac{1}{2}\cdot \ell(n)\cdot \log(n)$$ by \eqref{eq:elln_def}.  By \eqref{eq:elln_def} we also have 
\begin{equation}
	\label{eq:asym_ell_bound}
\ell(n) = \frac{\ell^2(n)}{\ell(n)} \leq \frac{\log(h(\ell(n)))}{\ell(n)} \leq \frac{1}{2}\cdot \log(n)
\end{equation}
for all $n>N$. By \eqref{eq:phi_asym} and \eqref{eq:asym_ell_bound} we have
\begin{equation}
	\label{eq:phi_asyn2}
	\Phi_g(n)  \geq \frac{1}{2}\cdot \ell(n) \cdot \log(n) - \ell(n)\cdot \log(4\cdot \ell(n)) \geq  \frac{1}{2}\cdot \ell(n) \cdot \log(n) -\log(n) \cdot \log( 4\cdot \ell(n))
\end{equation}
	for all $n>N$. By \eqref{eq:phi_asyn2} we have
	$$
	\begin{aligned}
	\liminf_{n\rightarrow \infty } \frac{\Phi_g(n)}{\log(n)} &\geq 	\liminf_{n\rightarrow \infty }\frac{\frac{1}{2}\cdot \ell(n) \cdot \log(n) -\log(n) \cdot \log( 4\cdot \ell(n)) }{\log(n)}\\&
	=\liminf_{n\rightarrow \infty }\left( \frac{1}{2}\cdot \ell(n) - \log(4\cdot \ell(n))\right). 
	\end{aligned}
	$$ 
	By \eqref{eq:elln_def} it holds that $\ell(n)$ is an increasing function of $n$. Furthermore, for every $M>0$ and $n\geq 2^{2\cdot \frac{\log(h(\ceil{M}))}{\ceil{M}}}$ it holds that $\ell(n) \geq \ceil{M}\geq M$. Therefore \begin{equation}
		\label{eq:lim_ell}\lim_{n\rightarrow \infty} \ell(n) =\infty.
		\end{equation}
		 Define  $\kappa(x)  = \frac{1}{2}\cdot x - \log(4\cdot x)$ and observe that 
		 \begin{equation}
		 \label{eq:lim_kappa}
		 \lim_{x\rightarrow \infty} \kappa(x)= \infty.
		 \end{equation}
		 
		 By the above inequalities we have,
		 \begin{equation}
		 	\label{eq:phi_asym3}
		 	\liminf_{n\rightarrow \infty } \frac{\Phi_g(n)}{\log(n)} 
		 \geq \liminf_{n\rightarrow \infty }\left( \frac{1}{2}\cdot \ell(n) - \log(4\cdot \ell(n))\right) =\liminf_{n\rightarrow \infty} \kappa(\ell(n)) = \infty,
		 \end{equation}
		 where the first inequality is by \eqref{eq:phi_asyn2} and the equality follows from \eqref{eq:lim_ell} and \eqref{eq:lim_kappa}.  By \eqref{eq:phi_asym3} we have $\Phi_g(n) = \omega(\log(n))$, as required. 
\end{proof}

\subsection{An Extension Algorithm for $\cP_{\ell\textnormal{-MI}}$}
\label{sec:monotone_app}

We are left to show how an extension algorithm for 
$\cP_{\ell\textnormal{-MI}}$ can be derived from a random parameterized algorithm for oracle $\ell$-MI.

\MIext*
The proof of \Cref{lem:MI_ext}  relies on the contraction operation of matroids which has been defined in \Cref{sec:matroids}.  We note that if $(E,\cI)$ is a matroid and $S\cup X$ is a basis of $(E,\cI)$ such that $S\cap X=\emptyset$ then $S$ is a basis of $(E,\cI)/X=(E\setminus X, \cI/X)$, the contraction of $(E,\cI)$ by $X$. This is also true in the opposite direction - if $S$ is a basis of $(E,\cI)/X=(E\setminus X, \cI/X)$ then $X\cup S$ is a basis of $(E,\cI)$.  
The core idea in the proof of \Cref{lem:MI_ext} is that if a set $X$ has an $\ell$-extension $S$, that is, $X\cup S$ is a common basis of $(E,\cI_1),\ldots, (E,\cI_\ell)$  and $|S|=\ell$ then $S$ is  a common basis of $(E,\cI_1)/X,\ldots, (E,\cI_\ell)/X$.

\begin{proof}[Proof of \Cref{lem:MI_ext}]

Let $\cA$ be a randomized parameterized algorithm for oracle $\ell$-MI which runs in time $g(k)\cdot \poly(|E|)$.
  We assume that $\cA$ returns a common basis of the matroids if there is one. 
We define an extension algorithm $\cD$ for $\cP_{\ell\textnormal{-MI}}$.
 Recall that an instance $(E,\cF, B,\oracle)\in \cP_{\ell\textnormal{-MI}}$ is associated with an $\ell$-matroid intersection instance $(E,\cI_1,\ldots, \cI_{\ell})$ where $\cF=\bases(\cI_1,\ldots, \cI_{\ell})$ and $\oracle$ acts as a unified membership oracle for the sets $\cI_1,\ldots, \cI_{\ell}$. 
 	The input for $\cD$ is $B$, from which $E$ can be computed in polynomial time, a set $X\subseteq E$, and $\ell\in \mathbb{N}$.
 	 Furthermore, $\cD$  has access to the oracle $\oracle$.  The objective of the algorithm is to return an $\ell$-extension of $X$ if one exists, or return $\perp$ if it does not find one. We define $\cD$ as follows.
	\begin{enumerate}
		\item Compute the rank $r$ of $(E,\cI_1)$. \label{ext:rank}
				\item If $|X|+\ell\neq r$ then return $\perp$. \label{ext:sum}
		\item If $X\notin \bigcap_{j=1}^{\ell} \cI_j$ then return $\perp$. \label{ext:intersection}
		\item Run $\cA$ on the instance $(E\setminus X, \cI_1/X,\ldots, \cI_{\ell}/X)$. \label{ext:runA}
		\item If $\cA$ returned a common basis $S$ then return $S$, otherwise return $\perp$. 	\label{ext:final}
 	\end{enumerate}
 	
 	The rank of a matroid can be computed in polynomial time given a membership oracle for the matroid, thus Step~\ref{ext:rank}
 	 can be computed in polynomial time (for example, by finding an arbitrary inclusion-wise maximal independent set - a basis). Furthermore, we note that a membership oracle for $\cI_j/X$ can be trivially be emulated using a membership oracle for $\cI_j$, hence it is possible to run $\cA$ on the instance $(E\setminus X, \cI_1/X,\ldots, \cI_{\ell}/X)$. 
 	
	We show $\cD$ is indeed a random extension algorithm for $\cP_{\ell\textnormal{-MI}}$.
	\begin{claim}
		If $\cD$ returns a set $S$ then $S$ is an $\ell$-extension of $X$
	\end{claim}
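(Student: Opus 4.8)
The plan is to follow the control flow of algorithm $\cD$. The only branch in which $\cD$ outputs a set rather than $\perp$ is Step~\ref{ext:final}, and that branch is reached only after the guards of Steps~\ref{ext:sum} and~\ref{ext:intersection} have both been passed. Hence, whenever $\cD$ returns a set $S$, I may assume: (a) $|X|+\ell = r$, where $r$ is the rank of $(E,\cI_1)$ computed in Step~\ref{ext:rank}; (b) $X\in\bigcap_{j=1}^{\ell}\cI_j$; and (c) $\cA$, run in Step~\ref{ext:runA} on $(E\setminus X,\cI_1/X,\ldots,\cI_{\ell}/X)$, returned a set $S$, which by the standing assumption on $\cA$ must be a common basis of these $\ell$ contracted matroids, i.e.\ $S\in\bases(\cI_1/X,\ldots,\cI_{\ell}/X)$; moreover $S\subseteq E\setminus X$, so $S\cap X=\emptyset$. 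I would first observe that (b) guarantees each $\cI_j/X$ is genuinely a matroid, so the call to $\cA$ in Step~\ref{ext:runA} is legitimate and its output is well-specified.

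Next I would invoke the two contraction facts recalled just before the claim. Since $X\in\cI_1$, the matroid $(E,\cI_1)/X$ has rank $r-|X|$; as $S$ is a basis of it, $|S| = r-|X| = \ell$, using guard (a). This yields the cardinality requirement in the definition of an $\ell$-extension. For the membership requirement, I would fix $j\in[\ell]$: $S$ is a basis of $(E,\cI_j)/X$ and $S\cap X=\emptyset$, so by the ``opposite direction'' of the contraction fact, $X\cup S$ is a basis of $(E,\cI_j)$. Since this holds for every $j$, we obtain $X\cup S\in\bases(\cI_1)\cap\cdots\cap\bases(\cI_{\ell}) = \bases(\cI_1,\ldots,\cI_{\ell}) = \cF$.

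Combining the two conclusions, $X\cup S\in\cF$ and $|S|=\ell$, which is precisely what it means for $S$ to be an $\ell$-extension of $X$, proving the claim. I do not expect a genuine obstacle here: the argument is a direct unwinding of the algorithm together with the stated basis/contraction correspondences. The only points that need a line of care are that $\cI_j/X$ is well defined (which is exactly what the test in Step~\ref{ext:intersection} ensures) and the rank bookkeeping $\textnormal{rank}\big((E,\cI_1)/X\big)=\textnormal{rank}(E,\cI_1)-|X|$ for an independent set $X$, which, combined with guard (a), is what pins down $|S|=\ell$.
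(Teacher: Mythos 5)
Your proposal is correct and follows essentially the same route as the paper: unwind the algorithm's guards, use the basis/contraction correspondence to get $X\cup S\in\cF$, and use $|X|+\ell=r$ to pin down $|S|=\ell$. The only cosmetic difference is that you derive $|S|=\ell$ from the rank of the contracted matroid being $r-|X|$, whereas the paper notes $|X\cup S|=r$ because $X\cup S$ is a basis of $(E,\cI_1)$; these are the same bookkeeping.
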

	\begin{claimproof}
	If the algorithm returns a set $S$ then the set $S$ has been returned by $\cA$ in Step \ref{ext:runA}. 
	Therefore $S$ is a common basis of $(E\setminus X, \cI_1/X),\ldots ,(E\setminus X, \cI_{\ell}/X)$, which implies that $X\cup S$ is a common basis of $(E,\cI_1),\ldots, (E,\cI_\ell)$. That is, $X\cup S\in \cF$. 
	Furthermore, since the algorithm did not return on Step~\ref{ext:sum} we have  $|X|+\ell=r$ and $\abs{X\cup S} =r$  as the size of a basis is the rank of the matroid. Therefore, we also have $|S|=\ell$. That is, $S$ is an $\ell$ extension of $X$. 
	\end{claimproof}
	\begin{claim}
	If there is an $\ell$-extension of $X$ then $\cD$ returns a set $S$ with probability at least $1/2$.
	\end{claim}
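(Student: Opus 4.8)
The plan is to show that, under the hypothesis that $X$ has an $\ell$-extension $S^*$, the algorithm $\cD$ passes both early-return guards (Steps~\ref{ext:sum} and~\ref{ext:intersection}) and then calls $\cA$ on an instance that does possess a common basis, so that $\cA$, and hence $\cD$, succeeds with probability at least $1/2$.

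First I would check the guards. Since $S^*$ is an $\ell$-extension of $X$, we have $X\cup S^*\in\cF=\bases(\cI_1,\ldots,\cI_\ell)$, so in particular $X\cup S^*$ is a basis of $(E,\cI_1)$ and therefore $\abs{X\cup S^*}=r$; as $S^*\subseteq E\setminus X$ and $\abs{S^*}=\ell$, this gives $\abs{X}+\ell=r$, so the algorithm does not return in Step~\ref{ext:sum}. Likewise $X\cup S^*\in\cI_j$ for every $j\in[\ell]$, so by the hereditary property $X\in\cI_j$ for every $j$, and the algorithm does not return in Step~\ref{ext:intersection}. Hence the execution reaches Step~\ref{ext:runA}.

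Next I would invoke the contraction fact recalled before the proof: because $X\cup S^*$ is a basis of $(E,\cI_j)$ and $X\cap S^*=\emptyset$, the set $S^*$ is a basis of $(E,\cI_j)/X=(E\setminus X,\cI_j/X)$ for every $j\in[\ell]$. Thus $S^*$ is a common basis of $(E\setminus X,\cI_1/X),\ldots,(E\setminus X,\cI_\ell/X)$, that is, this is a ``yes''-instance of oracle $\ell$-MI. Since $\cA$ is a randomized algorithm for oracle $\ell$-MI that returns a common basis whenever one exists, on this instance $\cA$ returns such a common basis $S$ with probability at least $1/2$; in that event $\cD$ returns $S$ in Step~\ref{ext:final}. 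By the previous claim such an $S$ is automatically an $\ell$-extension of $X$, so $\cD$ returns a set with probability at least $1/2$, as required.

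I do not expect a genuine obstacle here: the only points needing care are verifying that the two guards hold under the hypothesis (which follows from the identity ``size of a basis equals the rank'' together with the hereditary property) and the equivalence between $\ell$-extensions of $X$ and common bases of the matroids contracted by $X$.
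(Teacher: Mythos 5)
Your proof is correct and follows essentially the same route as the paper's: verify that the two guards pass (via the basis-size/rank identity and the hereditary property), observe that the $\ell$-extension is a common basis of the contracted matroids, and conclude from the success guarantee of $\cA$. No gaps.
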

	\begin{claimproof}
		If there in $\ell$-extension $S$ of $X$ then $S\cup X$ is a common basis of $(E,\cI_1),\ldots, (E,\cI_{\ell})$ and $\abs{S}=\ell$. In particular, $\ell + \abs{X}=\abs{S\cup X} = r$, therefore the algorithm does not return on Step~\ref{ext:sum}. Furthermore, $X\in \bigcap_{j=1}^{\ell} \cI_j$ by the hereditary property of matroids, therefore the algorithm also does not return on Step~\ref{ext:intersection}. Hence, the algorithm reaches Step~\ref{ext:runA}. Since $X\cup S$ is common basis of $(E,\cI_1),\ldots, (E,\cI_{\ell})$, it follows that $S$ is a common basis of $(E\setminus X, \cI_1/X),\ldots ,(E\setminus X, \cI_{\ell}/X)$. Therefore, $\cA$ returns a set with probability at least $1/2$, which means that $\cD$ returns a set with probability at least $1/2$ as well. 
	\end{claimproof}
	
	Finally, we note that Steps~\ref{ext:rank},~\ref{ext:sum},~\ref{ext:intersection} can be implemented in polynomial time in $B$. 
	If the algorithm reaches Step~\ref{ext:runA} then 
	the rank of $(E\setminus X, \cI_1/X)$ is $r-|X|=\ell$. Therefore, 
	 Step~\ref{ext:runA} runs in time 
	$
	g(\ell)\cdot \poly(|E|)= g(\ell)\cdot \poly(|B|)
	$.
	Hence, the total running time of $\cD$ is $g(\ell)\cdot \poly(|B|)$. 
	Overall, we showed that $\cD$ is a randomized extension algorithm for $\cP_{\ell\textnormal{-MI}}$ of time $g$. 
\end{proof}

		\section{Discussion and Open Problems}
	\label{sec:discussion}
	In this paper, we obtained an almost tight running time lower bound for $\ell$-matroid intersection for any $\ell \geq 3$, even if randomization is allowed. In addition, we generalized the monotone local search technique of Fomin et al. \cite{FGLS19} for a wider class of parameterized algorithms, and used it (i) to obtain an algorithm for $\ell$-MI that is faster than brute force by a super-polynomial factor, and (ii) to derive a parameterized lower bound for $\ell$-MI. A few intriguing questions remain open: 


\paragraph{Strengthening our Results} 
We derived several running time lower bounds for $\ell$-MI. While these bounds are close to the state-of-the-art algorithmic results, they do not fully match. More specifically, we showed a lower bound of $2^{n-5 \cdot n^{\frac{1}{\ell-1}} \cdot \log n}$ while we presented only a $2^{n-\Omega\left(\log^2 n \right)}$ algorithm. Also,
we showed a running time lower bound of $2^{(\ell-2-\eps) \cdot k \cdot \log k} \cdot \poly(n)$ for parameterized $\ell$-MI, while the best known algorithm of Huang and Ward \cite{huang2023fpt} runs in time $c^{k^2} \cdot \poly(n)$. Observe that these gaps are related: a faster parameterized algorithm would imply a faster exponential time algorithm, and a stronger exponential time lower bound would imply a stronger lower bound in the parameterized setting. This is due to our generalization of monotone local search. Can the algorithms be improved? Alternatively, can the lower bounds be strengthened?

\paragraph{Monotone Local Search} Our generalization of the monotone local search allows only for randomized algorithms.
While the original monotone local search paper \cite{FGLS19} presents also a derandomization of the technique, the derandomization came with a cost of $o(n)$ overhead in the exponent, i.e., running time of $\left(2-\frac{1}{c}\right)^{n+q(n)}$ instead of $\left(2-\frac{1}{c}\right)^{n}\cdot \poly(n)$, where $q(n)= \Theta\left(\frac{n}{\log n }\right) =o(n)$. While this  overhead can be claimed to be  insignificant in the setting of \cite{FGLS19}, in our setting this may be harmful to the extent that the overall runtime becomes higher than brute force. Thus, obtaining a useful derandomized algorithm for our generalization is an interesting direction for futue work.

\paragraph{Beating Brute Force for other Problems} 
We have studied $3$-MI as an example for a problem with no algorithm 
of running time $c^n\cdot \poly(n)$ for any $c<2$, which still admits an algorithm faster than brute force by a factor of $n^{\log n}$, where $n$ is the size of the instance.
The monotone local search technique used to derive this algorithm 
is generic and can be applied to a wide range of problems which admit parameterized algorithms. Are there other such problems, for which the best known  (brute force) exponential-time algorithm can be improved using monotone local search?

\newpage
\paragraph{Acknowledgments:} We thank Lars Rohwedder and Karol Wegrzycki for a helpful discussion on our results.

  \bibliographystyle{alpha}

		\bibliography{bibfile}
		\appendix
		
			\newpage
			\section{Omitted Proofs}
			\label{sec:proofs}
		In this section, we give the proofs missing from the paper body. 

\lemES*
\begin{proof}

	Assume towards a contradiction that there are $n_0 \in \N$, $k_0 \in \{0,1\ldots, n_0\}$, and a randomized algorithm $\cA$ that decides the \textnormal{oracle-ES} problem on a universe of size $n_0$ and cardinality target $k_0$ in strictly fewer than $\frac{{n_0 \choose k_0}}{2}$ queries. 
	Let $F_{\emptyset} = \emptyset$ and consider the ``no''-instance $I_{\emptyset} = (n_0,k_0,\cF_{\emptyset})$ of oracle-ES. In addition, let $q = \frac{{n_0 \choose k_0}}{2}-1$ be the maximum number of queries performed by $\cA$ on instance $I_{\emptyset}$ for any realization of the algorithm.

	Formally, $\cA$ on instance $I_{\emptyset}$ generates a random string of bits $B \in \{0,1\}^{f(I_{\emptyset})}$, where $f$ is some computable function that depends only on $\cA$, and performs a sequence of queries to the oracle.\footnote{Technically, the bit string can be shorter than $f(I_{\emptyset})$. We assume that all realizations of $B$ are of length $f(I_{\emptyset})$ without the loss of generality.} The sequence of queries depends only on $I_{\emptyset}$, $B$, and the previous queries. As a result of the queries, $\cA$ decides $I_{\emptyset}$. We use $B$ to denote the random string of bits and use $b$ to denote concrete realizations of $B$. For some realization $b \in \{0,1\}^{f(I_{\emptyset})}$ of $B$, let $\cQ(b) \subseteq \cS_{n_0,k_0}$ be the set of all sets $S \in \cS_{n_0,k_0}$ queried by $\cA$ on $b$. Since the number of queries of $\cA$ on instance $I_{\emptyset}$  is at most $q$, it follows that $|\cQ(b)| \leq q$ for every $b \in \{0,1\}^{f(I_{\emptyset})}$. Let $\cQ_{\geq \frac{1}{2}}$ be all sets in $\cS_{n_0,k_0}$ that are queried by $\cA$ with probability at least $\frac{1}{2}$, that is:
	\begin{equation}
		\label{eq:Q1/2}
		\cQ_{\geq \frac{1}{2}} = \left\{S \in \cS_{n_0,k_0}~\bigg|~\Pr \left(S \in \cQ(B) \right) \geq \frac{1}{2} \right\}. 
	\end{equation} 
	We show that there is at least one set in $\cS_{n_0,k_0}$ that is not in $\cQ_{\geq \frac{1}{2}}$. 
	\begin{claim}
		\label{claim:Q}
		$\left|\cQ_{\geq \frac{1}{2}} \right|  < {n_0 \choose k_0}$.  
	\end{claim}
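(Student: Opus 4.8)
The plan is a short averaging (Markov-type) argument. First I would express the number of distinct $k_0$-subsets queried by $\cA$ on the ``no''-instance $I_{\emptyset}$ as a sum of indicators: since $\cS_{n_0,k_0}$ is finite, $|\cQ(B)| = \sum_{S \in \cS_{n_0,k_0}} \one\!\left[S \in \cQ(B)\right]$, so by linearity of expectation $\E\!\left[|\cQ(B)|\right] = \sum_{S \in \cS_{n_0,k_0}} \Pr\!\left(S \in \cQ(B)\right)$. Because $\cA$ performs at most $q$ queries on $I_{\emptyset}$ for \emph{every} realization of its random bits, we have $|\cQ(b)| \le q$ for all $b \in \{0,1\}^{f(I_{\emptyset})}$, and hence $\E\!\left[|\cQ(B)|\right] \le q$.

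Next I would bound the same sum from below using only the contribution of the ``frequently queried'' sets. By the definition of $\cQ_{\ge \frac12}$ in \eqref{eq:Q1/2}, every $S \in \cQ_{\ge \frac12}$ satisfies $\Pr(S \in \cQ(B)) \ge \frac12$, so
\[
\tfrac12\,\bigl|\cQ_{\ge \frac12}\bigr| \;\le\; \sum_{S \in \cQ_{\ge \frac12}} \Pr\!\left(S \in \cQ(B)\right) \;\le\; \sum_{S \in \cS_{n_0,k_0}} \Pr\!\left(S \in \cQ(B)\right) \;=\; \E\!\left[|\cQ(B)|\right] \;\le\; q .
\]
Substituting $q = \binom{n_0}{k_0}/2 - 1$ then yields $\bigl|\cQ_{\ge \frac12}\bigr| \le \binom{n_0}{k_0} - 2 < \binom{n_0}{k_0}$, which is exactly the claim.

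There is no genuine obstacle in this step; the only point requiring care is that the bound $q$ on the query count holds \emph{pointwise} over the algorithm's randomness (it was chosen as the maximum number of queries $\cA$ performs on $I_{\emptyset}$ over all realizations), which is what legitimizes $\E[|\cQ(B)|] \le q$. The purpose of this claim in the larger argument is to guarantee a ``hard'' set $S^{\star} \in \cS_{n_0,k_0}\setminus\cQ_{\ge\frac12}$: on the ``yes''-instance with $\cF=\{S^{\star}\}$ the algorithm queries $S^{\star}$ with probability $<\frac12$, so with probability $>\frac12$ it sees exactly the same oracle answers as on $I_{\emptyset}$ and must return ``no'', contradicting correctness on a ``yes''-instance.
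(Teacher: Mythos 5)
Your proposal is correct and is essentially the paper's own argument: the paper's expansion of $\Pr(S \in \cQ(B))$ over realizations $b$ followed by exchanging the order of summation is exactly your linearity-of-expectation step $\E[|\cQ(B)|] = \sum_{S} \Pr(S \in \cQ(B)) \leq q$, combined with the lower bound $\tfrac12 |\cQ_{\geq \frac12}| \leq \sum_S \Pr(S \in \cQ(B))$. Your remark that the bound $q$ holds pointwise over the randomness is the same justification used implicitly in the paper, so there is no gap.
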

	\begin{claimproof}
		By \eqref{eq:Q1/2} it follows that 
		\begin{equation*}
			\begin{aligned}
				\left|\cQ_{\geq \frac{1}{2}} \right|  
				\leq{} & 2 \cdot \sum_{S \in \cS_{n_0,k_0}} \Pr \left(S \in \cQ(B) \right) 
				={}  	 2 \cdot \sum_{S \in \cS_{n_0,k_0}~} \sum_{b \in \{0,1\}^{f(I_{\emptyset})}} \one_{S \in \cQ(b)} \cdot \Pr(B = b),
			\end{aligned}
		\end{equation*} Where $\one_{S \in \cQ(b)}$ is the indicator for the event $S \in \cQ(b)$ for every $S \in \cS_{n_0,k_0}$ and every $b \in \{0,1\}^{f(I_{\emptyset})}$. Then, by changing the order of summation  
		\begin{equation}
			\label{eq:medQ}
			\begin{aligned}
				\left|\cQ_{\geq \frac{1}{2}} \right|   \leq 2 \cdot \sum_{b \in \{0,1\}^{f(I_{\emptyset})}~} \Pr(B = b) \cdot \sum_{S \in \cQ(b)} 1 
				= 2 \cdot  \sum_{b \in \{0,1\}^{f(I_{\emptyset})}~} \Pr(B = b) \cdot |\cQ(b)| .
			\end{aligned}
		\end{equation} 
		
		Since $|\cQ(b)| \leq q$ for every $b \in \{0,1\}^{f(I_{\emptyset})}$, by \eqref{eq:medQ} it follows that 
		
		\begin{equation*}
			\left|\cQ_{\geq \frac{1}{2}} \right|   	\leq 2 \cdot \sum_{b \in \{0,1\}^{f(I_{\emptyset})}~} \Pr (B = b) \cdot q = 2 \cdot q \cdot \sum_{b \in \{0,1\}^{f(I_{\emptyset})}~} \Pr (B = b) = 2  \cdot q
			< 2 \cdot  \frac{{n_0 \choose k_0}}{2} ={n_0 \choose k_0}. 
		\end{equation*}
		The last inequality holds since $q$ is bounded by $\frac{{n_0 \choose k_0}}{2}-1$. By the above, the proof follows. 
	\end{claimproof}
	
	Clearly, $\left| \cQ_{\geq \frac{1}{2}}\right| \in \N$; thus, by \Cref{claim:Q} there is $S^*  \in \cS_{n_0,k_0}$ such that $S^* \notin \cQ_{\geq \frac{1}{2}}$. Let $\cF^* = \left\{  S^* \right\}$ and consider the ``yes''-instance $I^* = (n_0,k_0,\cF^*)$ of oracle-ES. In addition, let $T = \{b \in \{0,1\}^{f(I_{\emptyset})}~|~ S^* \notin \cQ(b)\}$ be all realizations of $B$ satisfying that $S^*$ is not queried by $\cA$ on input $I^*$.  
	Understandably, for all realizations $b \in T$, the oracles of $\cF_{\emptyset}$ and $\cF^*$ return the same output for all queries $S \in \cQ(b)$. Additionally, recall that the next query of the algorithm is determined only by $n_0, k_0$, the realization $b$ of $B$, and the results of previous queries.  
	
	Hence, $\cA$ does not distinguish between the instances $I_{\emptyset}$ and $I^*$ for every realization $b \in T$. Since $I_{\emptyset}$ is a ``no''-instance for oracle-ES and since $\cA$ is a randomized algorithm that decides correctly the oracle-ES problem on a universe of size $n_0$ and cardinality target $k_0$, $\cA$ returns that $I_{\emptyset}$ is a ``no''-instance with probability $1$. On the other hand, $I^*$ is a ``yes''-instance. Thus, $\cA$ decides incorrectly that $I^*$ is a ``yes''-instance for every realization $b \in T$. 
	Using the definition of $T$, we have $\Pr \left( B \in T \right) = \Pr \left(S^* \notin \cQ(B) \right) > \frac{1}{2}$. Therefore, with probability strictly larger than $\frac{1}{2}$ it holds that $\cA$ fails to decide the oracle-ES instance $I^*$. This is 
	a contradiction to the definition of $\cA$. \end{proof} 

We proceed to the proof of \Cref{claim:partitionBases} 
\partitionBases*
 \begin{proof}
	Let $B \in 	\textnormal{\textsf{bases}}(\cI_{L,j})$. Assume towards contradiction that there is $i^* \in [n]$ such that $\left|\left\{ \ve \in B \mid \ve_j = i^*\right\} \right|  \neq L_{i^*,j}$. Since $B \in \textnormal{\textsf{bases}}(\cI_{L,j})$, it follows that $B \in \cI_{L,j}$. This implies that $\left|\left\{ \ve \in B \mid \ve_j = i^*\right\} \right|  < L_{i^*,j}$ and for all $i \in [n]$ it holds that $\left|\left\{ \ve \in B \mid \ve_j = i\right\} \right|  \leq L_{i,j}$. 
	Observe that $L_{i^*,j} \leq n$ since $L$ is an SU matrix. Thus,
	\begin{equation}
		\label{eq:notinB}
		\begin{aligned}
			\left|\left\{  \ve \in B \mid \ve_j = i^* \right\}\right| < L_{i^*,j} \leq n \leq n^{d-1} = 	\left|\left\{  \ve \in \gr \mid \ve_j = i^* \right\}\right|.
		\end{aligned}
	\end{equation}
	By \eqref{eq:notinB}, there is $\ve^* \in \left\{\ve \in \gr \mid \ve_j = i^*\right\} \setminus B$. 
	Hence, $\left|\left\{ \ve \in B+\ve^* \mid \ve_j = i^*\right\} \right|  \leq L_{i^*,j}$. Consequently, for all $i \in [n]$ it holds that $\left|\left\{ \ve \in B+\ve^* \mid \ve_j = i\right\} \right|  \leq L_{i,j}$, implying that $B+\ve^* \in \cI_{L,j}$. Since $|B+\ve^*| > |B|$, we reach a contradiction to $B \in \textnormal{\textsf{bases}}(\cI_{L,j})$.

	For the second direction, let $S \subseteq \gr$ such that for all $i \in [n]$ it holds that $\left|\left\{ \ve \in S \mid \ve_j = i\right\} \right|  = L_{i,j}$. Then, by \eqref{eq:constraints}, $S \in \cI_{L,j}$. Moreover, for all $\ve' \in \gr \setminus S$,  $$\left|\left\{ \ve \in S+\ve' \mid \ve_j = i'\right\} \right|  = L_{i',j}+1,$$
	where $i' = \ve'_j$, implying that $S+\ve' \notin \cI_{L,j}$. Hence, $S$ is a maximal independent set, i.e., $S \in \textnormal{\textsf{bases}}(\cI_{L,j})$. 
\end{proof}

		\newpage

	\end{document}